\newcommand\bR{\mathbb{R}}
\newcommand\bN{\mathbb{N}}
\newcommand\bC{\mathbb{C}}
\newcommand\sH{\mathcal{H}}
\newcommand\sC{\mathcal{C}}
\newcommand\sS{\mathcal{S}}
\newcommand\sB{\mathcal{B}}
\newcommand\AmbientSpace{\mathcal{D}^*}
\newcommand\TestSpace{\mathcal{D}}
\newcommand\weaks{\mbox{weak-*} }
\newcommand*\id{\mathop{}\!\mathrm{d}}
\newcommand\RIPcst{\gamma}
\newcommand{\tam}{\mathrm{argmin}}
\newcommand{\ls}{\langle}
\newcommand{\rs}{\rangle}
\newcommand{\re}{\mathcal{R}e}
\newcommand{\tr}{\textrm{tr}}
\newcommand{\tos}{ \overset{*}{\rightharpoonup }}
\newtheoremstyle{assumption}{6pt}{6pt}{\rm}{}{\sffamily}{ }{ }{}
\theoremstyle{assumption}
\newtheorem{assumption}{\sc Assumption}[section]
\begin{document}
\title{The basins of attraction of the global minimizers of non-convex inverse problems with low-dimensional models in infinite dimension}

\author{{\sc Yann Traonmilin$^{1,*}$, Jean-François Aujol$^{1}$ and Arthur Leclaire$^{1}$ } \\
$^1$Univ. Bordeaux, Bordeaux INP, CNRS,  IMB, UMR 5251,F-33400 Talence, France.\\
$^*$ Contact author : \email{yann.traonmilin@math.u-bordeaux.fr}}
\shorttitle{Basins of attraction of the global minimizers of non-convex inverse problems} %%%for recto running head
\shortauthorlist{Y. Traonmilin, J.-F. Aujol and A. Leclaire} 
\maketitle

\begin{abstract}
{Non-convex methods for linear inverse problems  with low-dimensional models have emerged as an alternative to convex techniques. We propose a theoretical framework where  both finite dimensional and infinite dimensional linear inverse problems can be studied. We show how the size of the  basins of attraction of the minimizers of such problems is linked with the number of available measurements. This framework recovers known results about low-rank matrix estimation and off-the-grid sparse spike estimation, and it provides new results for Gaussian mixture estimation from linear measurements. }
{ low-dimensional models, non-convex methods, low-rank matrix recovery, off-the-grid sparse recovery, Gaussian mixture model estimation}
\end{abstract}

\section{Introduction}

 Many inverse problems can be modeled as follows. From $m$ noisy linear measurements $y \in \bC^m$ defined by a projection on functions $(\alpha_l)_{1\leq l \leq m}$ (e.g. Fourier measurements): 
\begin{equation}\label{eq:measurement}
 y_l = \ls x_0, \alpha_l\rs +e_l 
\end{equation}
where $e = (e_l)_{1 \leq l \leq m}$ is an additive  noise with  finite  energy,
we aim at recovering the unknown $x_0$. This model is particularly used for imaging problems where the signal (e.g. a sound, an image, etc) must be recovered from digital measurements. The linear form $x \to \ls x,\alpha_l \rs $ typically models the response of the $l$-th sensor for a signal $x$. Let $\TestSpace$ be a space containing functions used to  measure~$x_0$ (e.g. a Banach space of smooth functions in infinite dimension or a set of vectors in finite dimension).  The measurement described by Equation~\eqref{eq:measurement} makes sense for any signal $x_0$ living in the dual space $\AmbientSpace$ of~$\TestSpace$. The bracket $\ls \cdot,\cdot \rs$ is then a duality product between $\AmbientSpace$ and $\TestSpace$.

In our framework, the space $\AmbientSpace$ is a  locally convex topological vector space with \weaks topology (we will recall in Section~\ref{sec:basin} some tools that are relevant for our study).  The measurement process is summarized  
\begin{equation}
 y= Ax_0 + e,
\end{equation}
where the linear operator $A$ is a weakly-* continuous linear measurement operator from $\AmbientSpace$ to $\bC^m$ defined, for $l=1, \ldots ,m$, by
\begin{equation}\label{eq:def_A}
(Ax_0)_l :=  \ls x_0, \alpha_l\rs.
\end{equation}
This weak topology is natural for the study of inverse problems in spaces of measures and distributions where many signals can be modeled (e.g. off-the-grid spikes~\cite{Candes_2013}). We will see in particular that the related weak notion of differentiability is sufficient to study descent algorithms that we will consider in this article.

The theory of inverse problems with low-dimensional models has shown that it is possible to recover $x_0$  when it belongs to a low-dimensional model $\Sigma$ with the procedure
\begin{equation} \label{eq:minimization}
    x^* \in \underset{x \in \Sigma}{\tam} \|Ax-y\|_2^2 
\end{equation}
provided $A$ is adequately chosen (e.g. fulfills a restricted isometry property (RIP) on the secant set $\Sigma-\Sigma$~\cite{Bourrier_2014}, see Section~\ref{sec:secant_RIP}). The estimation method~\eqref{eq:minimization} is called an ideal decoder for the considered inverse problem and low-dimensional model. It has been shown in very generic settings that it is possible to build compressive measurement operators having the required restricted isometry property  for  low-dimensional recovery~\cite{Eftekhari_2013, Puy_2017, Gribonval_2017,gribonval2020statistical}. 

In imaging applications, the goal is often to guarantee that  $x^*$ is close to $x_0$ at a given precision. To describe this, we suppose that such guarantees can be described within a Hilbert space $(\sH, \|\cdot\|_\sH)$ such that $\Sigma \subset \sH $ (the Hilbert space assumption could be dropped to a metric space setting in our proofs but all our examples fall within the Hilbert space case).  In other words, we want to ensure that the non-convex decoder~\eqref{eq:minimization} satisfies
\begin{equation}\label{eq:perf_bound}
 \|x^*-x_0\|_\sH^2\leq C \|e\|_2^2,
\end{equation}
where $C$ is an absolute constant with respect to $e$ and $x_0 \in \Sigma$.

We place ourselves in a context where the number of measurements, either  deterministic or random, guarantees that~\eqref{eq:perf_bound} is obtained with the non-convex decoder~\eqref{eq:minimization}, under a RIP assumption. The RIP is usually guaranteed by using a sufficient number of measurements with respect to the dimension of the low-dimensional model $\Sigma$. This assumption has been a cornerstone of the qualitative study of compressed sensing, sparse recovery~\cite{Foucart_2013} and general inverse problems with low-dimensional models~\cite{Traonmilin_2018}. 

Even if the decoder~\eqref{eq:minimization} is guaranteed to recover $x_0$ (up to the noise level), it is in general not convex and thus difficult to evaluate.
To cope with that, one can try to find a convex regularized minimization problem with similar recovery guarantees. Although very successful in some examples (sparse recovery in finite dimension), this approach leads  to algorithms that have computational scaling problems in some other examples (off-the-grid sparse recovery). Another general difficulty is the choice of the right convex regularization given a low-dimensional model~\cite{Traonmilin_2018a,Traonmilin_2018b}.
Another approach is to directly perform optimization~\eqref{eq:minimization} with a simple initialization followed by a descent algorithm procedure. This non-convex approach has been proposed for low-rank matrix factorization~\cite{Zhao_2015,Bhojanapalli_2016}, blind deconvolution~\cite{Ling_2017}, blind calibration \cite{Cambareri_2018}, phase recovery~\cite{Waldspurger_2018} and off-the-grid sparse spike estimation~\cite{Traonmilin_2019a,Traonmilin_2019b}. 

We propose a unified framework that follows the same idea. We consider inverse problems where the low-dimensional model can be parametrized by $\bR^d$ and we propose a general study of gradient descent in the parameter space (that can be easily extended to other descent algorithms). 

\subsection{Parametrization of the model set $\Sigma$} \label{sec:parametrization}

Let the low-dimensional model set $\Sigma \subset \AmbientSpace$ be a  cone (an assumption we make throughout the whole article) and $x_0 \in \Sigma$. Cones  are positively homogenous sets, i.e. for any $x \in \Sigma$ and $\lambda >0$, $ \lambda x \in \Sigma$.

We consider a particular (yet wide) class of inverse problems, where the low-dimensional model can be described by a  (possibly constrained) parametrization in $\bR^d$.

\begin{definition}[Parametrization of $\Sigma$]
A parametrization of  $\Sigma$ is a function ${\phi : \bR^d \to \AmbientSpace}$ such that $\Sigma \subset \phi(\bR^d) = \{\phi(\theta) : \theta \in \bR^d \}$.
\end{definition}

Our goal is to study the optimization  problem~\eqref{eq:minimization} in the parameter space.
\begin{definition}[Local minimum]
The point $\theta \in \bR^d$ is a local minimum of $g : \bR^d \to \bR$ if there is $\epsilon > 0 $ such that for any $\theta' \in \bR^d$ such that $\|\theta-\theta'\|_2 \leq \epsilon$, we have $ g(\theta) \leq g(\theta')$.
\end{definition}

 We define the reciprocal image of $\Sigma$ in the parameter space as 
\begin{equation} 
  \Theta:= \phi^{-1}(\Sigma)
\end{equation}
and the parametrized functional 

\begin{equation} \label{eq:def_g}
      g(\theta)  := \|A\phi(\theta)-y\|_2^2  .
\end{equation}

We consider the problem
\begin{equation} \label{eq:minimization2}
    \theta^* \in \arg \min_{ \theta \in \Theta}  \|A\phi(\theta)-y\|_2^2 .
\end{equation}
 As we study descent algorithms in $\bR^d$, we suppose in this article that $\Theta$ is an open set of $\bR^d$. This guarantees that the gradient of $g$ is $0$ at $\theta^*$ even when $ \Theta \varsubsetneq \bR^d$ and that $\phi(\theta^*)$ is a minimizer of~\eqref{eq:minimization}.

The model we have just described encompasses the following situations that will be studied in details within our framework in Section~\ref{sec:application}.
\begin{itemize}
 \item Low-rank symmetric positive semi-definite (PSD) matrix estimation.  We set ${\TestSpace = \AmbientSpace = \bR^{p\times p}}$,  $d = p \times r$, $\phi(Z)= ZZ^T$ ($\Theta$ is identified with the set of $p \times r$ matrices and $\Sigma = \Sigma_r$ is the set of PSD matrices of rank lower than $r$), see Section~\ref{sec:LR}.
 \item Sparse off-the-grid estimation.  The space $\TestSpace$ is the set $\sC_b^2(\bR^p)$ of twice-differentiable bounded functions on $\bR^p$ with bounded derivatives. The space $\AmbientSpace$ contains the set of compactly-supported distributions on $\bR^p$ of order less than 2. We have that $d=k(p+1)$, $\phi(a,t) = \sum_{i=1}^k a_i \delta_{t_i}$, $\Sigma= \Sigma_{k,\epsilon}$ the set of $\epsilon$-separated sums of $k$ spikes, see Section~\ref{sec:SR}.  
 \item Gaussian mixture modeling from compressed data set. We have $\TestSpace=\sC_b^2(\bR^p)$ and $\AmbientSpace$ contains the space of signed measures over $\bR^p$, $d=k(\frac{p(p+1)}{2}+p+1)$, $\phi(w,t,\Gamma) = \sum_{i=1}^k w_i \mu_{t_i,\Gamma_i}$ where $\mu_{t,\Gamma}$ is the Gaussian distribution with mean $t$ and covariances $\Gamma =(\Gamma_1,...,\Gamma_k)$. The set $\Sigma = \Sigma_{k,\epsilon,\rho,P}$ is the set of $\epsilon$-separated (with respect to an appropriate metric) sums of $k$ Gaussian distributions with eigenvalues of covariances bounded in $(\rho,P)$. See Section~\ref{sec:GMM} for the study  of Gaussian mixture models (GMM) with fixed covariance.  
\end{itemize}

In order to link the gradient of $g$ with properties of $A$, we must be able to apply a chain rule that uses the derivatives of $\phi$. The \weaks topology permits to define such derivatives in a weak sense. The important features  needed for  $\phi$ will be to follow this  weak differentiability assumption (see Section~\ref{sec:basin}), a local Lipschitz behavior around the global minimum and local boundedness properties on its derivatives. 

Note that $\phi$ is not injective in general.
A consequence is that the differential of $\phi$ might have a non trivial kernel.
This requires to adapt conventional convergence proofs to this generic setting (see Section~\ref{sec:indeterminacy}).

\subsection{Basin of attraction and descent algorithms}\label{sec:basin_def}
To perform the minimization~\eqref{eq:minimization2}, we consider the gradient descent with  fixed step $\tau$
\begin{equation} \label{def_thetan}
\theta_{n+1} = \theta_n - \tau \nabla g(\theta_n)
\end{equation}
where  $\theta_0 \in \bR^d$ is the initialization. Note that any descent algorithm could benefit from our framework (e.g. one can consider block coordinate descent to deal with the case of Gaussian mixtures with variable covariances). We choose the fixed step gradient descent for the simplicity of the analysis.

As only recovery of $x^* \in \Sigma$ matters to us, we propose the following definition of basin of attraction as we will work under conditions where any minimizer of $g$ will lead to recovery guarantees~\eqref{eq:perf_bound}.
\begin{definition}[Basin of attraction] \label{defbassin}
We say that  a set $\Lambda \subset \bR^d$ is a $g$-basin of attraction of $\theta^* \in \Lambda$  if there exists  $\tau_0>0$ such that for any $\tau \in (0,\tau_0]$, if $\theta_0 \in \Lambda$, then  the sequence $g(\theta_n)$ with $\theta_n$ defined by \eqref{def_thetan} converges to $g(\theta^*)$.
\end{definition}

This notion of basin of attraction is specific to this work in order to manage the potential indeterminacies of the parametrization. In terms of performance of the estimation, for any initialization in a $g$-basin of attraction of $\theta^*$, we will have  (Corollary~\ref{cor:recovery_guarantees})
\begin{equation} \label{eq:conv_speed1}                                                                                                                                                                                                                                                               
\|\phi(\theta_n) -x_0\|_\sH^2 \leq C\|e\|_2^2 +O\left(\frac{1}{n}\right).
\end{equation}

In other words, the gradient descent leads to an estimation of $x_0$ that verifies the recovery guarantee~\eqref{eq:perf_bound} with respect to the norm $\|\cdot\|_\sH$ that was chosen to quantify the estimation performance of minimization~\eqref{eq:minimization}, provided the initialization is in a basin of attraction of $\theta^*$.

Following classical optimization results (see e.g. \cite{Bauschke_2011, ciarlet1989introduction}), 
an open set $\Lambda$ containing $\theta^*$ is a $g$-basin of attraction of a global minimizer $\theta^*$ if
\begin{itemize}
\item $g$ is differentiable with Lipschitz gradient; 
\item $g$ is convex on $\Lambda$ ;
\item for all $n$, $\theta_n \in \Lambda$.
\end{itemize}

To deal with indeterminacy, we will show in our main theorem that the convexity property is only needed in relevant directions chosen between the current point and the closest minimizer (within equivalent parametrizations).

 Finding a good initialization is a difficult problem that was solved practically and theoretically in selected applications such as phase recovery, and only practically for some others (e.g. of the grid super-resolution in 2D). We discuss possible leads for the systematic study of this problem in Section~\ref{sec:initialization}. Also, note that we focus on \emph{convergence} of descent algorithms in this article, thus leading to the convergence speed~\eqref{eq:conv_speed1}.   The study of faster convergence  (i.e. geometric convergence)  typically requires strong convexity, and thus would require a notion of ``strong basin of attraction", which is left for future work.

\subsection{Related work}

This work unifies recent results on descent algorithms for non-convex optimization for inverse problems with low-dimensional models in both finite and infinite dimension, such as low-rank matrix recovery, phase recovery, blind deconvolution (whose common properties are highlighted in \cite{Chi_2019}) and off-the-grid sparse spike estimation~\cite{Traonmilin_2019a,Traonmilin_2019b}.  We choose the point of view of optimization in the parameter space to keep things as simple as possible from  a practical perspective. In~\cite{Barber2018gradient},  conditions on constraints in the parameter space were given in addition to a local convexity hypothesis on the studied functional to guarantee the success of projected gradient descent. We  systematically link  the measurement operator properties and the properties of the parametrization function to give explicit basins of attraction for simple gradient descent. This requires to work in infinite dimensional spaces where the low dimensional model can live.  The authors of~\cite{Blumensath_2011} give properties of measurement operators that are sufficient for the success of iterative projections (which can be seen as a projected gradient descent) and an application for infinite union of subspaces (symmetric cones in real vector spaces). While of great interest in some of our examples (such as sparse spike estimation~\cite{Traonmilin_2019b}), performing the projection step is in general not trivial, thus motivating the study of the simpler gradient descent.

Another approach is to define the descent algorithm directly on the manifold $\Sigma \subset \AmbientSpace$~\cite{Boumal_2018} or even a lifted version of~$\AmbientSpace$~\cite{Chizat_2018}. The main difficulty with this approach is to define the gradient on the manifold, since the tangent space of $\Sigma$ might not stay in a ``natural'' ambient space. For example, for the case of recovery of separated Diracs on the space of measures, the ``tangent space'' includes distributions of order 1 (that is, distributions that only involve derivatives of order $\leq 1$ of the test functions, see \cite{Hormander_1998}) which are not measures. We define a minimal framework starting from the measurement process that allows us to study the non-convex optimization method~\eqref{eq:minimization}. Such minimal structures for regularized inverse problems in  Banach spaces have been mentioned in the case  of off-the-grid spike recovery~\cite{Duval_2015} and they have been studied precisely in~\cite{Unser_2019}. As no particular metric is needed for the recovery process (only for measuring the success of recovery) on $\AmbientSpace$, we can give our result with only the \weaks topology on $\AmbientSpace$ and the norm used to quantify estimation errors  in $\Sigma$.

\subsection{Contributions}

We aim at giving a unified understanding at non-convex inverse problems with low-dimensional models frequently found in signal processing and machine learning in finite and infinite dimension. 
For low-dimensional models which can be parametrized in $\bR^d$:
\begin{itemize}
 \item We give a minimal framework where the gradient descent in the space of parameters can be described.
 \item We describe how basins of attraction of the global minimum can be studied, and how their size can be linked to the number of measurements in general under regularity conditions on the parametrization functional. This study is summarized by the general Theorem~\ref{th:control_general} and its Corollary~\ref{cor:control_general} (which is used for our examples).
 \item We describe how this framework can be applied to the examples of low-rank matrix recovery and off-the-grid spike super-resolution, and we give new results for the estimation of Gaussian mixture models (such results were never given in the GMM case to the best of our knowledge). 
 \item We present the general initialization technique by backprojection within our framework and we discuss its practical difficulties.
\end{itemize}

\section{Explicit basins of attraction of the global minimizers} \label{sec:basin}

We define precisely a framework where commonly encountered linear inverse problems can be studied. In this framework, we can study the non-convex minimization problem~\eqref{eq:minimization2}. In particular, we give conditions which guarantee that  explicit basins of attraction of the global minimizers of the function $g$ can be given. Notations used in the article are summarized in Section~\ref{sec:notations}.

\subsection{Definitions}
In our motivating examples, $\TestSpace$ is a Banach space. 

\begin{definition}
Let $x_n \in \AmbientSpace$. The sequence $(x_n)_n$  converges to $x\in \AmbientSpace$ for the \weaks topology if  for all $\alpha \in \TestSpace$

\begin{equation}
 \ls x_n, \alpha \rs \to  \ls x, \alpha \rs .
\end{equation}
In this case, we denote $x_n   \tos x$.
\end{definition}

By construction of $\AmbientSpace$ and its \weaks topology, the operator $A$ defined by~\eqref{eq:def_A} is a linear \weaks continuous operator over $\AmbientSpace$, which implies that for any $(x_n)_n$ such that $x_n  \tos x$, $Ax_n \to Ax$ (see Section~\ref{sec:weaks} for the precise definition of \weaks continuity).

In selected examples, the considered objects are generally not Fréchet differentiable.
We thus use the notion of \weaks Gateaux differentiability~\cite{Ekeland_1999}, which is based on directional derivatives.
   
 \begin{definition}[Differential, directional derivative]
  In  $\AmbientSpace$, a map $\phi: \bR^d \to \AmbientSpace$ is \weaks Gateaux differentiable at $\theta$ if there exists a linear map $L_\theta(\phi) : \bR^d \to \AmbientSpace$ such that for all $u \in \bR^d $, 
  \begin{equation}
  \frac{ \phi(\theta + h u ) - \phi(\theta)}{h} \underset{h\to 0}{\tos} L_\theta(\phi) u
  \end{equation}
  We write $\partial_u \phi (\theta) = L_\theta(\phi) u$, and $\frac{\partial \phi(\theta)}{\partial \theta_i}$ the derivative in the direction of the $i$-th canonical vector of~$\bR^d$.
 \end{definition}

 In the following sections, we shall assume that $\phi$ is twice \weaks Gateaux differentiable, i.e. $\phi$ is \weaks Gateaux differentiable and for  any $u$, $\partial_u \phi$ is \weaks Gateaux differentiable. Note that we will not suppose \weaks continuity of the derivatives in our analysis. 
 
 We summarize in Figure~\ref{fig:summary} the objects and structures used in this article.
 
 \begin{figure}[ht!]
\centering
\includegraphics[width=0.95\columnwidth]{./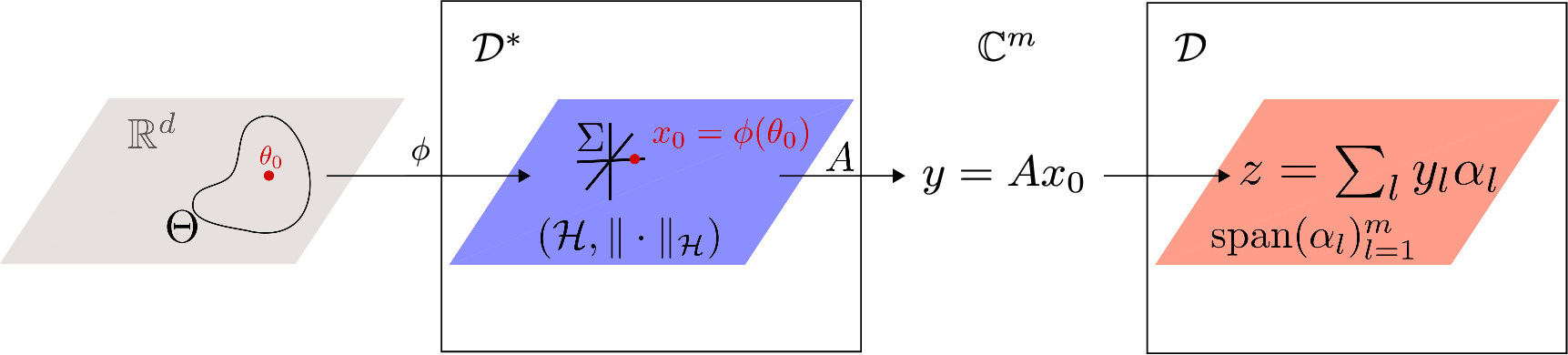}
\caption{A summary of the objects and structures of the framework. The  cone $\Sigma$ is a low-dimensional model set parametrized by $\Theta$. The measurements $y = A x_0 $ can be projected in $\TestSpace$ for initialization purpose (the ideal backprojection initialization $z$ is described in Section~\ref{sec:initialization}).
}\label{fig:summary}
\end{figure}
 
\subsection{Gradient and Hessian of the objective function }\label{sec:gradient_Hessian}

We calculate the gradient and Hessian of the function $g$ (defined in~\eqref{eq:def_g}) in the two following propositions. 
\begin{proposition}\label{prop:gradient}
  Let $A$ be a linear \weaks continuous operator from $\AmbientSpace$ to $\bC^m$ and $\phi$ a \weaks Gateaux differentiable function. Then for any $\theta \in \bR^{d}$, the function $g$ is Gateaux differentiable at $\theta$ and
\begin{equation}
  \frac{\partial g(\theta)}{\partial \theta_i }= 2 \re \ls A \frac{\partial \phi(\theta)}{\partial \theta_i } , A \phi(\theta)-y\rs.
\end{equation}
In the following we will denote $\nabla g (\theta) = \left( \frac{\partial g(\theta)}{\partial \theta_i } \right)_{1 \leq i \leq d}$ the Gateaux gradient.

\end{proposition}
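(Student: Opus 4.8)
The plan is to compute the partial derivative of $g(\theta) = \|A\phi(\theta) - y\|_2^2$ directly from the definition of Gateaux differentiability, reducing everything to scalar one-variable calculus in the increment parameter $h$. First I would fix $\theta \in \bR^d$ and the $i$-th canonical direction $v = e_i$, and write out the difference quotient
\begin{equation}
\frac{g(\theta + h e_i) - g(\theta)}{h} = \frac{\|A\phi(\theta + h e_i) - y\|_2^2 - \|A\phi(\theta) - y\|_2^2}{h}.
\end{equation}
Expanding the squared norms in $\bC^m$ via $\|u\|_2^2 = \ls u, u\rs$ and using bilinearity (with a conjugate on one side), the numerator becomes
\begin{equation}
\ls A\phi(\theta + h e_i) - A\phi(\theta), A\phi(\theta) - y\rs + \ls A\phi(\theta) - y, A\phi(\theta + h e_i) - A\phi(\theta)\rs + \|A\phi(\theta + h e_i) - A\phi(\theta)\|_2^2,
\end{equation}
so that, dividing by $h$, the first two terms together give $2\re\ls \tfrac{1}{h}(A\phi(\theta + h e_i) - A\phi(\theta)), A\phi(\theta) - y\rs$ and the last contributes a term of order $h$.

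Next I would pass to the limit $h \to 0$. By hypothesis $\phi$ is \weaks Gateaux differentiable, so $\tfrac{1}{h}(\phi(\theta + h e_i) - \phi(\theta)) \tos \tfrac{\partial \phi(\theta)}{\partial \theta_i}$; since $A$ is \weaks continuous (as recalled in the excerpt, $x_n \tos x$ implies $Ax_n \to Ax$ in $\bC^m$), and since $A$ is linear, we get $\tfrac{1}{h}(A\phi(\theta + h e_i) - A\phi(\theta)) \to A\tfrac{\partial \phi(\theta)}{\partial \theta_i}$ strongly in $\bC^m$. The inner product on $\bC^m$ is continuous in its strong topology, so the first two terms converge to $2\re\ls A\tfrac{\partial \phi(\theta)}{\partial \theta_i}, A\phi(\theta) - y\rs$. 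For the remaining term $\tfrac{1}{h}\|A\phi(\theta + h e_i) - A\phi(\theta)\|_2^2 = h \cdot \|\tfrac{1}{h}(A\phi(\theta + h e_i) - A\phi(\theta))\|_2^2$, the factor in the norm converges (hence stays bounded near $h = 0$), so the whole quantity vanishes as $h \to 0$. Therefore the difference quotient has a limit, which establishes that $g$ is Gateaux differentiable at $\theta$ in the direction $e_i$ and yields the claimed formula. The same argument with an arbitrary direction $v$ in place of $e_i$ shows $\partial_v g(\theta)$ exists, and linearity of $v \mapsto A\partial_v\phi(\theta) = A L_\theta(\phi) v$ shows $\partial_v g(\theta) = \sum_i v_i \partial_{e_i} g(\theta)$, so the collection of partials indeed assembles into a gradient.

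I do not expect a serious obstacle here; this is essentially a bookkeeping computation. The one point requiring a little care is making sure the \weaks convergence of the difference quotient of $\phi$ is upgraded to \emph{strong} convergence in $\bC^m$ after applying $A$ — this is exactly what \weaks continuity of $A$ buys us, and it is what makes both the continuity of the inner product and the boundedness argument for the $O(h)$ term legitimate. A secondary minor subtlety is the handling of the real part: one should be careful that $g$ is real-valued (it is, being a squared norm) and that the $\re$ arises precisely because $\ls u, \cdot \rs$ is conjugate-linear on one side, so the cross terms are complex conjugates of each other and sum to twice the real part. Beyond that, the proof is a direct limit computation.
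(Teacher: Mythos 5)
Your proof is correct and follows essentially the same route as the paper: the decisive step in both is that linearity and \weaks continuity of $A$ upgrade the \weaks convergence of the difference quotients of $\phi$ to strong convergence in $\bC^m$, i.e. $\frac{\partial [A\phi](\theta)}{\partial \theta_i} = A\frac{\partial \phi(\theta)}{\partial \theta_i}$, after which one differentiates the squared Hermitian norm. The only difference is that you carry out this last differentiation explicitly from the difference quotient (including the vanishing $O(h)$ quadratic term), whereas the paper invokes the standard calculus for the Gateaux differentiable map $\theta \mapsto A\phi(\theta)$ in $\bC^m$.
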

\begin{proof}
See Section~\ref{sec:proof_grad}.
\end{proof}

\begin{proposition}\label{prop:Hessian}
  Let $A$ be a linear \weaks continuous operator from $\AmbientSpace$ to $\bC^m$ and $\phi$ a twice \weaks Gateaux differentiable function.
  
  For any $\theta \in \bR^{d}$, $g$ is twice  Gateaux differentiable at $\theta$ and
\begin{equation}
   H_{i,j}:=\frac{\partial^2 g(\theta)}{\partial \theta_{i} \partial \theta_{j}}
   = G_{i,j} + F_{i,j} 
\end{equation}
where
\begin{equation}
 G_{i,j}:= 2 \re \ls  A\frac{\partial \phi(\theta)}{\partial \theta_i } , A \frac{\partial\phi(\theta)}{\partial \theta_j }\rs
\end{equation}
and 
\begin{equation}
 F_{i,j}:= 2 \re \ls  A\frac{\partial^2 \phi(\theta)}{\partial \theta_i \partial \theta_j } , A \phi(\theta)-y\rs .\\
\end{equation}
\end{proposition}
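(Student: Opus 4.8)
The plan is to compute the second-order directional derivative of $g$ directly from the definition, using the product-rule behavior of weak-* Gateaux differentiation established for the first derivative in Proposition~\ref{prop:gradient}, and then to read off the entries $H_{i,j}$ by specializing the directions to the canonical basis vectors $e_i, e_j$ of $\bR^d$. First I would fix $\theta$ and start from the expression for the first partial derivative already proved,
\begin{equation}
\frac{\partial g(\theta)}{\partial \theta_i} = 2\re\ls A\partial_{\theta_i}\phi(\theta), A\phi(\theta)-y\rs,
\end{equation}
viewing the right-hand side as a function of $\theta$ that we differentiate in the direction $e_j$. Since $A$ is linear and weak-* continuous, it commutes with weak-* limits, so $A\partial_{\theta_i}\phi$ is weak-* Gateaux differentiable in the direction $e_j$ with differential $A\partial^2_{\theta_i\theta_j}\phi(\theta)$, and likewise $A\phi(\theta)-y$ has directional derivative $A\partial_{\theta_j}\phi(\theta)$.

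The key computational step is then a bilinear difference-quotient argument: writing $u(\theta) = A\partial_{\theta_i}\phi(\theta)$ and $w(\theta) = A\phi(\theta)-y$, and using the algebraic identity
\begin{equation}
\ls u(\theta+he_j), w(\theta+he_j)\rs - \ls u(\theta), w(\theta)\rs = \ls u(\theta+he_j)-u(\theta), w(\theta+he_j)\rs + \ls u(\theta), w(\theta+he_j)-w(\theta)\rs,
\end{equation}
I would divide by $h$ and let $h\to 0$. The first term on the right converges to $\ls A\partial^2_{\theta_i\theta_j}\phi(\theta), w(\theta)\rs$ because $\frac{u(\theta+he_j)-u(\theta)}{h}\to A\partial^2_{\theta_i\theta_j}\phi(\theta)$ in the finite-dimensional space $\bC^m$ (where weak-* convergence is just norm convergence) while $w(\theta+he_j)\to w(\theta)$; the second term converges to $\ls u(\theta), A\partial_{\theta_j}\phi(\theta)\rs$ by weak-* differentiability of $\phi$ at $\theta$. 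Taking real parts and the factor $2$ through, this yields exactly $H_{i,j} = F_{i,j} + G_{i,j}$ with $F_{i,j}$ carrying the second derivative of $\phi$ against the residual and $G_{i,j}$ carrying the product of first derivatives.

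The main subtlety to be careful about — and the step I expect to require the most attention — is justifying the passage to the limit in the product, because the paper explicitly does not assume weak-* continuity of the derivatives of $\phi$. The point that makes it work is that all the relevant quantities have been pushed through $A$ into the finite-dimensional space $\bC^m$, so weak-* convergence there coincides with norm convergence, and products of convergent sequences converge; one does, however, need that $w(\theta+he_j)\to w(\theta)$ as $h\to 0$, which follows from weak-* Gateaux differentiability of $\phi$ at $\theta$ (hence weak-* continuity of $\phi$ along the segment, hence norm continuity of $A\phi$). I would also note that the resulting second directional derivative is symmetric and bilinear in the two directions, so the matrix $H = (H_{i,j})$ is a genuine Hessian, and that boundedness of the entries near a minimizer — which is what the later sections exploit — follows once one imposes local boundedness of the first and second derivatives of $\phi$ together with continuity of $A\phi - y$. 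A reference to the detailed computation can be placed in the proofs section, mirroring the treatment of Proposition~\ref{prop:gradient}.
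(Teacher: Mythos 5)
Your proposal is correct and follows essentially the same route as the paper, whose proof is simply the one-line instruction to differentiate the gradient formula of Proposition~\ref{prop:gradient} using the properties of the Hermitian product; you spell out the bilinear difference-quotient argument and the passage of weak-* limits through $A$ into $\bC^m$, which is exactly the intended justification. The only caveat is your closing aside that $H$ is symmetric: this is not needed for the statement and does not follow from twice Gateaux differentiability alone (Schwarz-type symmetry of mixed partials requires extra regularity), so it should be dropped or qualified.
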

\begin{proof}
See Section~\ref{sec:proof_grad}.
\end{proof}

\subsection{Secant sets and the RIP }\label{sec:secant_RIP}

The following definitions allow to express the restricted isometry property.  We then provide a fundamental lemma useful to make the connection between the RIP and the Hessian.

\begin{definition}[Secant]
  The secant set of the model set $\Sigma$ is $\sS(\Sigma) = \Sigma - \Sigma := \{ x-y : x \in \Sigma, y \in \Sigma\}$. A secant is  an element of the secant set.
\end{definition}

\begin{definition}[Generalized secant] 
Suppose $\phi$ is \weaks Gateaux differentiable. A generalized secant  is either a secant or a directional derivative $\partial_u \phi(\theta)$ with $\phi(\theta) \in \Sigma$. The generalized secant set $\overline{\sS(\Sigma)}$ is the  set of generalized secants.
\end{definition}

In the context of manifolds, the generalized secant set is linked with the tangent space of the manifold $\Sigma$: it contains the directional derivatives of elements of $\Sigma$ (with respect to their parametrization) because a tangent vector is a limit of secants.
We suppose the existence of a Hilbert space $(\sH, \|\cdot\|_\sH)$ such that $\Sigma \subset \sH$. The following assumption will be needed.
\begin{assumption}[Compatibility of $\|\cdot\|_\sH$ with generalized secants]\label{assum:k_norm}
  For all   $x \in\overline{\sS(\Sigma)}$ such that  $x= \partial_u\phi(\theta)$, we have that $\left\|\frac{\phi(\theta + |h_n| u)-\phi(\theta )}{|h_n|}\right\|_\sH $ converges for any $|h_n| \to 0$ to a limit that  does not depend on the choice of the real sequence $h_n$. This limit is written $\| \partial_u\phi(\theta)\|_\sH$.
\end{assumption}
With this assumption, we can extend $\|\cdot\|_\sH$ to  $\overline{\sS(\Sigma)}$.

While trivial in finite dimension, the last assumption must be considered carefully in infinite dimension (see Section~\ref{sec:application}). We now have sufficient tools to define the RIP.

\begin{definition}[RIP]
The operator $A$ has the RIP on $\sS(\Sigma)$ with respect to $\|\cdot\|_\sH$  with constant $\RIPcst$ if for all $x \in \sS(\Sigma)$
\begin{equation}\label{eq:DefRIP}
(1-\RIPcst)\|x\|_\sH^2 \leq \|A x\|_2^2 \leq (1+\RIPcst)\|x\|_\sH^2.
\end{equation}
\end{definition}

The RIP is very useful for the qualitative study of inverse imaging problems:  measurement operators $A$ are chosen such that the RIP constant $\RIPcst$ improves (i.e. decreases) when the number of measurements increases. In  many compressed sensing examples,  it can be guaranteed that appropriately chosen random operators $A$ have the RIP with high probability as long as $m \geq O(d\text{polylog}(d))$, i.e. the number of measurements is of the order of the intrinsic dimension $d$ of $\Sigma$, with a  dependency  on the dimension of $\sH$ (or the choice of $\|\cdot\|_\sH$ in infinite dimension) in $log$ factors.
Thanks to the compatibility assumption~\ref{assum:k_norm}, we can extend  the RIP  to $\overline{\sS(\Sigma)}$ which contains the directional derivatives of $\phi$.

\begin{lemma}[RIP on the generalized secant set]\label{lem:RIP_derivative}
 Suppose $A$  has the RIP on  $\Sigma-\Sigma$ with constant $\RIPcst$ and $\phi$ is \weaks Gateaux differentiable.  Suppose $A$ is \weaks continuous. Suppose that $\|\cdot\|_\sH$ verifies compatilbiltiy assumption~\ref{assum:k_norm}. Let $\nu \in \overline{\sS(\Sigma)}$ then
\begin{equation}
 (1-\RIPcst)\left\| \nu \right\|_\sH^2 \leq \left\|A\nu\right\|_2^2   \leq (1+\RIPcst)\left\|\nu \right\|_\sH^2.  \\
 \end{equation}
\end{lemma}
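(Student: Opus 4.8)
The statement extends the RIP from the secant set $\sS(\Sigma)$ to its closure $\overline{\sS(\Sigma)}$, which additionally contains the directional derivatives $\partial_v\phi(\theta)$ with $\phi(\theta)\in\Sigma$. For a genuine secant $x\in\sS(\Sigma)$ there is nothing to prove, so the plan is to treat an arbitrary generalized secant $\nu=\partial_v\phi(\theta)$ with $\phi(\theta)\in\Sigma$ and obtain the bound by a limiting argument.

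The key observation is that $\partial_v\phi(\theta)$ is, by definition of \weaks Gateaux differentiability, the \weaks limit of the difference quotients $\nu_n := \frac{\phi(\theta+h_n v)-\phi(\theta)}{h_n}$ as $h_n\to 0$. I would first note that, since $\Sigma$ is a union of subspaces containing $\phi(\theta)$ and we may also assume $\phi(\theta+h_nv)\in\Sigma$ (using that $\Theta=\phi^{-1}(\Sigma)$ is open, so $\theta+h_nv\in\Theta$ for $n$ large), the difference quotient $\nu_n$ lies in $\Sigma-\Sigma = \sS(\Sigma)$: indeed it is a scalar multiple of a difference of two elements of the same subspace, which again lies in that subspace, hence in $\Sigma$. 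Actually the cleanest route is: $\phi(\theta)$ and $\phi(\theta+h_nv)$ both belong to $\Sigma$, so their difference is in $\sS(\Sigma)$, and since $\sS(\Sigma)$ is a union of subspaces (a difference of subspaces is a subspace), the rescaled quantity $\nu_n = \tfrac{1}{h_n}(\phi(\theta+h_nv)-\phi(\theta))$ is still in $\sS(\Sigma)$. Therefore the RIP~\eqref{eq:DefRIP} applies to each $\nu_n$:
\begin{equation}
(1-\RIPcst)\|\nu_n\|_\sH^2 \leq \|A\nu_n\|_2^2 \leq (1+\RIPcst)\|\nu_n\|_\sH^2 .
\end{equation}

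Now I pass to the limit in each term. On the right-hand side of the measurements, \weaks continuity of $A$ gives $A\nu_n \to A\partial_v\phi(\theta) = A\nu$ in $\bC^m$ (here one uses that $A$ commutes with the difference quotient by linearity, and \weaks continuity converts the \weaks convergence $\nu_n\tos\nu$ into norm convergence in the finite-dimensional space $\bC^m$), hence $\|A\nu_n\|_2^2 \to \|A\nu\|_2^2$. For the $\sH$-norm terms, Assumption~\ref{assum:k_norm} is exactly what is needed: it guarantees that $\|\nu_n\|_\sH$ converges to a limit independent of the sequence $h_n$, and this limit is by definition $\|\partial_v\phi(\theta)\|_\sH = \|\nu\|_\sH$. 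Taking limits in the displayed double inequality then yields $(1-\RIPcst)\|\nu\|_\sH^2 \leq \|A\nu\|_2^2 \leq (1+\RIPcst)\|\nu\|_\sH^2$, as claimed.

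The only delicate point — and the one I would state carefully — is the justification that the difference quotients $\nu_n$ actually lie in $\sS(\Sigma)$, which relies on $\Sigma$ being a union of subspaces together with the openness of $\Theta$; without the union-of-subspaces structure one would only know $\nu_n$ lies in a scaled secant set, not $\sS(\Sigma)$ itself. A secondary point is making sure the \weaks continuity of $A$ is invoked correctly: strictly, $A$ \weaks continuous means $x_n\tos x \Rightarrow Ax_n\to Ax$, and this is precisely what the paper recalls just after~\eqref{eq:def_A}, so it applies directly to $\nu_n\tos\nu$. Everything else is routine passage to the limit in a finite chain of inequalities.
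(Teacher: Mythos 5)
Your proof is correct and follows essentially the same route as the paper: place the difference quotients $\nu_n$ in $\sS(\Sigma)$ (openness of $\Theta$ plus the union-of-subspaces structure making $\sS(\Sigma)$ scale-invariant), apply the RIP to each $\nu_n$, and pass to the limit using Assumption~\ref{assum:k_norm} for $\|\nu_n\|_\sH$ and the \weaks continuity of $A$ for $\|A\nu_n\|_2$. The only cosmetic difference is that the paper works with positive increments $|h_n|$ to match the statement of Assumption~\ref{assum:k_norm}, which you may as well do by choosing $h_n>0$.
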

\begin{proof}
See Section~\ref{sec:proof_secant}.
\end{proof}

\subsection{Indeterminacy of the parametrization} \label{sec:indeterminacy}

The parametrization function $\phi$ is not injective in general, leading to an indeterminacy in the parametrization. Theoretical complications appear especially when the set of equivalent parameters $ \left\{\tilde{\theta } : \phi(\tilde{\theta }) = \phi(\theta)\right\}$ is not a set of isolated points, e.g. in the low-rank matrix recovery case when the factors can only be recovered up to a multiplication by an orthogonal matrix. While a basin of attraction exists, the function $g$ might even not be locally convex~\cite{Chi_2019} (e.g. in the low-rank matrix recovery case).

To cope with this indeterminacy, we  study the Hessian of $g$ in the directions $u$ relevant to the proof of convergence of the gradient descent. To do this, we introduce the following notations.  Let $\theta^*$ be a global minimizer of $g$ on $\Theta$. We define
\begin{equation}
  \label{eq:distance_phi}
  d(\theta, \theta^*) := \min_{\substack{\tilde{\theta} \in \Theta \\ \phi(\tilde{\theta}) = \phi(\theta^*) }} \| \tilde{\theta}- \theta  \|_2 \ , \quad \text{and} \quad
  p(\theta, \theta^*) := \underset{\substack{\tilde{\theta} \in \Theta \\ \phi(\tilde{\theta}) = \phi(\theta^*) }}{\tam} \|\tilde{\theta} - \theta\|_2 \subset \Theta .
\end{equation}
We will study basins of attraction having the shape 
\begin{equation}
  \label{eq:Lambda}
  \begin{split}
  \Lambda_{\beta} &:= \{ \theta \in \Theta \ : \ d(\theta, \theta^*) < \beta \}.
  \end{split}
\end{equation}

Notice that $\phi^{-1}(\{\phi(\theta^*)\})$ is a closed subset of $\Theta$ when $\phi$ is \weaks continuous, which allows to define $d(\theta, \theta^*)$ as a minimum.
Actually, $d(\theta, \theta^*)$ is the distance to the closed set $\phi^{-1}(\{\phi(\theta^*)\})$ and $p(\theta, \theta^*)$ is the (set-valued) projection of $\theta$ on $\phi^{-1}(\{\phi(\theta^*)\})$.
However, one should be warned that $d(\cdot,\cdot)$ is not a true distance function. In practice, in our examples, the set $  p(\theta, \theta^*)$ is composed of a unique element in the basin of attraction.  

This notion of distance plays an important role in the proof of our main result to show the stability of iterates of the gradient descent. The use of a $\ell^2$-based distance between parameters permits to use classical arguments relying on the scalar product.

\subsection{Control of the Hessian with the restricted isometry property}\label{sec:control_Hessian}

We begin by giving a control on the Hessian of $g$ around a minimizer $\theta^*$.

\begin{lemma}\label{lem:control_Hessian} 
Suppose $A$ is \weaks continuous and has the RIP on $\Sigma-\Sigma$. Suppose $\phi$ is twice \weaks Gateaux differentiable. Suppose that $\|\cdot\|_\sH$ verifies the compatibility assumption~\ref{assum:k_norm}. Let $\theta^*$ be a global minimizer of $g$ on the open set $\Theta$.  Let $\Lambda \subset \bR^d$ be a set such that  for all $\theta \in \Lambda$, we have $\phi(\theta)-\phi(\theta^*) \in \sS(\Sigma)$.   Let $\theta \in \Lambda$ and $H$ be the Hessian of $g$ at $\theta$. 

For  all $u\in \bR^d$, we have 
\begin{equation}
     u^THu \geq 2(1-\RIPcst) \|  \partial_u \phi(\theta)\|_\sH^2 -  2\|  A\partial_u^2 \phi(\theta)\|_2  (\sqrt{1+\RIPcst} \|\phi(\theta) -\phi(\theta^*)\|_\sH +\|e\|_2 ) \\
\end{equation}
\begin{equation}
     u^THu \leq 2 \|  A\partial_u \phi(\theta)\|_2^2 +  2\|  A\partial_u^2 \phi(\theta)\|_2  (\sqrt{1+\RIPcst} \|\phi(\theta) -\phi(\theta^*)\|_\sH +\|e\|_2 ) \\
\end{equation}
\end{lemma}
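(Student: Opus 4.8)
The plan is to expand $u^T H u$ using the decomposition $H = G + F$ from Proposition~\ref{prop:Hessian}, handle the two pieces separately, and then combine with the RIP on the generalized secant set (Lemma~\ref{lem:RIP_derivative}).

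First I would note that, by bilinearity of the expressions in Proposition~\ref{prop:Hessian}, for a fixed direction $u = (u_i)_{1\leq i\leq d}$ one has $u^T G u = \sum_{i,j} u_i u_j G_{i,j} = 2\re\ls A\partial_u\phi(\theta), A\partial_u\phi(\theta)\rs = 2\|A\partial_u\phi(\theta)\|_2^2$, using $\partial_u\phi(\theta) = \sum_i u_i \frac{\partial\phi(\theta)}{\partial\theta_i}$ and the fact that $A$ is linear (this needs a short justification that $A$ commutes with the finite linear combination defining $\partial_u\phi$, and that $\partial_u\phi(\theta) = \sum_i u_i \partial_{e_i}\phi(\theta)$, which follows from \weaks Gateaux differentiability since $L_\theta(\phi)$ is linear). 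Similarly $u^T F u = \sum_{i,j} u_i u_j F_{i,j} = 2\re\ls A\partial_u^2\phi(\theta), A\phi(\theta)-y\rs$, where $\partial_u^2\phi(\theta) := \sum_{i,j} u_i u_j \frac{\partial^2\phi(\theta)}{\partial\theta_i\partial\theta_j}$ is the notation used in the statement.

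Next I would bound the term $u^T F u$ in absolute value by Cauchy–Schwarz in $\bC^m$: $|u^T F u| \leq 2\|A\partial_u^2\phi(\theta)\|_2 \, \|A\phi(\theta)-y\|_2$. Then I use $y = A\phi(\theta^*) + e$ (since $\theta^*$ is a global minimizer on $\Theta$ and $x_0 = \phi(\theta^*)$ up to the indeterminacy — more precisely $g(\theta^*)=\|e\|_2^2$ when $\phi(\theta^*)=x_0$; if one only assumes $\theta^*$ is a global minimizer with $\phi(\theta^*)-\phi(\theta)\in\sS(\Sigma)$, one still writes $y = Ax_0 + e$ and $x_0 \in \Sigma$, and here the relevant identity is $A\phi(\theta)-y = A(\phi(\theta)-\phi(\theta^*)) + (A\phi(\theta^*)-y)$; under the hypothesis that $x_0=\phi(\theta^*)$, the second summand is $-e$). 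This gives $\|A\phi(\theta)-y\|_2 \leq \|A(\phi(\theta)-\phi(\theta^*))\|_2 + \|e\|_2$. Since $\theta\in\Lambda$ implies $\phi(\theta)-\phi(\theta^*)\in\sS(\Sigma)$, the RIP gives $\|A(\phi(\theta)-\phi(\theta^*))\|_2 \leq \sqrt{1+\RIPcst}\,\|\phi(\theta)-\phi(\theta^*)\|_\sH$. Combining: $|u^T F u| \leq 2\|A\partial_u^2\phi(\theta)\|_2\,(\sqrt{1+\RIPcst}\,\|\phi(\theta)-\phi(\theta^*)\|_\sH + \|e\|_2)$.

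Finally I would assemble the two bounds. For the lower bound, $u^T H u = u^T G u + u^T F u \geq u^T G u - |u^T F u| = 2\|A\partial_u\phi(\theta)\|_2^2 - |u^T F u|$, and then I apply Lemma~\ref{lem:RIP_derivative} to $\nu = \partial_u\phi(\theta)$, which lies in $\overline{\sS(\Sigma)}$ because $\phi(\theta^*)\in\Sigma$ — wait, one needs $\phi(\theta)\in\Sigma$ for $\partial_u\phi(\theta)$ to be a generalized secant; this holds since $\theta\in\Lambda$ and, in the intended use, $\Lambda\subset\Theta=\phi^{-1}(\Sigma)$ (the hypothesis $\phi(\theta)-\phi(\theta^*)\in\sS(\Sigma)=\Sigma-\Sigma$ with $\phi(\theta^*)\in\Sigma$ does not by itself force $\phi(\theta)\in\Sigma$, so I would either add this or note it is automatic in applications; I would flag this and use $\phi(\theta)\in\Sigma$). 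Then $\|A\partial_u\phi(\theta)\|_2^2 \geq (1-\RIPcst)\|\partial_u\phi(\theta)\|_\sH^2$, yielding the first displayed inequality. For the upper bound, $u^T H u \leq u^T G u + |u^T F u| = 2\|A\partial_u\phi(\theta)\|_2^2 + |u^T F u|$, which is exactly the second displayed inequality (no RIP needed on $G$ here since the statement keeps $\|A\partial_u\phi(\theta)\|_2^2$ as is).

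The main obstacle I expect is the bookkeeping around $\partial_u^2\phi(\theta)$ and $\partial_u\phi(\theta)$: verifying that $\sum_{i,j}u_iu_j\frac{\partial^2\phi}{\partial\theta_i\partial\theta_j} = \partial_u^2\phi(\theta)$ and $\sum_i u_i\frac{\partial\phi}{\partial\theta_i} = \partial_u\phi(\theta)$ as \weaks limits, together with justifying that $A$ (being \weaks continuous and linear on finite combinations) passes through these identities so that $\|\sum_i u_i A\frac{\partial\phi}{\partial\theta_i}\|_2 = \|A\partial_u\phi(\theta)\|_2$. The rest is Cauchy–Schwarz and a direct application of the RIP lemma.
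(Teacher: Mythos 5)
Your proposal follows essentially the same route as the paper's proof: decompose $H=G+F$, compute $u^TGu = 2\|A\partial_u\phi(\theta)\|_2^2$ and lower bound it via Lemma~\ref{lem:RIP_derivative}, bound $|u^TFu|$ by Cauchy--Schwarz and the triangle inequality, and control $\|A(\phi(\theta)-\phi(\theta^*))\|_2$ with the RIP on the secant set. The only refinement worth noting is that you do not need to assume $\phi(\theta^*)=x_0$: global minimality of $\theta^*$ over $\Theta$ together with $x_0\in\Sigma$ already gives $\|A\phi(\theta^*)-y\|_2\leq\|Ax_0-y\|_2=\|e\|_2$, which is exactly how the paper handles the noise term, and your observation that $\phi(\theta)\in\Sigma$ is needed for $\partial_u\phi(\theta)\in\overline{\sS(\Sigma)}$ matches the paper's implicit use of this fact (ensured by Hypothesis~1 of Theorem~\ref{th:control_general} in the applications).
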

\begin{proof}
See Section \ref{sec:proof_control}.
\end{proof}

It is possible to control the Hessian of $g$ on a set $\Lambda$ with Lemma~\ref{lem:control_Hessian} in the directions which are relevant to guarantee convergence. For $\theta_1,\theta_2 \in \bR^d$, let us define the line segment $[\theta_1,\theta_2] = \{ t \theta_1 +(1-t)\theta_2: t \in [0,1] \}$. We gather technical hypotheses in the following assumption.

\begin{assumption}[Technical assumption on $\phi$ and radius $\beta$] \label{assum:technical}
Given $\theta^* \in \Theta$, $\beta >0$, we say that the technical assumption  on $\phi$ and on radius $\beta$ are fulfilled with constants $C_{\phi,\theta^*}$, $M_1$ and $M_2$ if 
\begin{enumerate}
 \item $\theta \in \Lambda_{2\beta} $ implies $\phi(\theta) \in \Sigma$ (local stability of the model set);  
 \item there is $C_{\phi,\theta^*}>0$ such that
   \begin{equation}\label{eq:local_lipschitz}
     \forall \theta \in \Lambda_{2\beta}, \quad \|\phi(\theta) -\phi(\theta^*) \|_\sH \leq C_{\phi,\theta^*}  d(\theta, \theta^*) \;\quad \text{(local control of} \|\cdot\|_\sH \text{ ) } ;
   \end{equation} 
   \item the first-order derivatives of $A \phi$ are uniformly bounded on $\phi^{-1}(\phi(\theta^*))$:
 \begin{equation}\label{eq:bound_fo_deriv}
M_1 := \sup_{\theta \in \phi^{-1}(\phi(\theta^*))}\sup_{u : \|u\|_2=1}\| A\partial_u\phi(\theta)\|_2  <+  \infty;
   \end{equation} 
 \item the second-order derivatives of $A \phi$ are uniformly bounded on $\Lambda_{2\beta}$:
 \begin{equation}\label{eq:hessian_Aphi}
M_2 := \sup_{\theta \in \Lambda_{2\beta}}\sup_{u,v : \|u\|_2=1,\|v\|_2=1}\| A\partial_v\partial_u\phi(\theta)\|_2  <+  \infty.
   \end{equation} 
   \end{enumerate}
\end{assumption}

We propose the following generic theorem to show that a set $\Lambda_\beta$ (defined by~\eqref{eq:Lambda}) is a $g$-basin of attraction.

\begin{theorem}\label{th:control_general} Consider the following two  set of hypotheses.

 \noindent \textbf{Framework hypotheses:} Let $A$ be  a \weaks continuous linear map from $\AmbientSpace$ to $\bC^m$.
  Suppose $A$ has the RIP on $\sS(\Sigma)$ with constant $\RIPcst$ and $\phi$ is \weaks continuous and twice \weaks Gateaux differentiable. Suppose that $\|\cdot\|_\sH$ verifies the compatibility assumption~\ref{assum:k_norm}.
  Let $\theta^*$ be a global minimizer of $g$ on the open set $\Theta$.
  
\noindent  \textbf{Specific hypotheses:} Assume that there exists $\beta > 0$ such that
\begin{enumerate}
 \item  the technical assumption~\ref{assum:technical}  on $\phi$ and  on radius $\beta$ is fulfilled with constants $C_{\phi,\theta^*}$, $M_1$ and $M_2$;
 \item for any $\theta \in \Lambda_{\beta}$, there exists $\tilde{\theta} \in p(\theta, \theta^*)$ such that
   \begin{equation} \label{eq:control_ratio}
     \forall z \in [\theta, \tilde{\theta}], \quad
     \frac{ (1-\RIPcst) \|  \partial_{\tilde{\theta}-\theta} \phi(z)\|_\sH^2 }{\sqrt{1+\RIPcst}\| A\partial_{\tilde{\theta}-\theta}^2 \phi(z)\|_2  }
     \geq C_{\phi,\theta^*} \beta + \frac{1}{\sqrt{1+\RIPcst}}\|e\|_2 .
   \end{equation}
\end{enumerate}
Then $\Lambda_{\beta}$ is a $g$-basin of attraction of $\theta^*$.
\end{theorem}
\begin{proof}
See Section \ref{sec:proof_control}.
\end{proof}

This theorem highlights the regularity properties and the control on the derivatives of $\phi$ that we require to ensure convergence.
However, it does not give an explicit expression of the radius of the basin of attraction at first sight.
We propose a corollary that makes it more explicit in the case when the projection $p(\theta,\theta^*)$ is composed of a  unique element when $\theta$ is in the basin, which is the case in all the examples covered in the next section.

\begin{corollary}\label{cor:control_general}
 Under the framework hypotheses of Theorem~\ref{th:control_general}, let $\beta_1 > 0$ such that
\begin{enumerate}
\item for any $\theta \in \Lambda_{2\beta_1} $, \textbf{there exists a unique $\tilde{\theta} \in p(\theta,\theta^*)$};
 \item the technical assumption~\ref{assum:technical}  on $\phi$ and on radius $\beta_1$ is fulfilled with constants $C_{\phi,\theta^*}$, $M_1$ and $M_2$; 
 \item  we have 
   \begin{equation} \label{eq:control_ratio2}
   \beta_2 := \frac{ (1-\RIPcst)}{C_{\phi,\theta^*}\sqrt{1+\RIPcst}}\inf_{\theta \in \Lambda_{\beta_1}} \inf_{z \in [\theta, \tilde{\theta}]}\left(
     \frac{ \|  \partial_{\tilde{\theta}-\theta} \phi(z)\|_\sH^2 }{\| A\partial_{\tilde{\theta}-\theta}^2 \phi(z)\|_2  }     \right) - \frac{1}{C_{\phi,\theta^*} \sqrt{1+\RIPcst}}\|e\|_2 
 >0.
   \end{equation}
Then $\Lambda_{\min (\beta_1,\beta_2)}$ is a $g$-basin of attraction of $\theta^*$.
\end{enumerate}
\end{corollary}

 \begin{proof}
See Section \ref{sec:proof_control}.
\end{proof}

\begin{remark}
This technique for the study of basins of attraction yields results when we can guarantee
\begin{equation}
\inf_{\theta \in \Lambda_{\beta_1}} \inf_{z \in [\theta, \tilde{\theta}]}\left(
  \frac{ \|  \partial_{\tilde{\theta}-\theta} \phi(z)\|_\sH^2 }{\| A\partial_{\tilde{\theta}-\theta}^2 \phi(z)\|_2  }     \right) >0
\quad \text{where} \ \tilde{\theta} \in p(\theta, \theta^*) \ \text{is unique} .
 \end{equation}
We will see that we can verify this in all our  examples. In the low-rank recovery example where indeterminacy causes problems, we control the second order derivatives in the relevant directions $u =\tilde{\theta}-\theta$. For Dirac and Gaussian estimation, we can bound uniformly the Hessian in all directions within the basin of attraction. 
\end{remark}
\begin{remark}
The fact that regularity assumptions  are on $\Lambda_{2 \beta_1}$ instead of $\Lambda_{\beta_1}$ is essentially a technical argument to guarantee the stability of the iterates in $\Lambda_{\beta_1}$  in a general theorem. It could be reduced to an assumption on $\Lambda_{\beta_1 +\eta}$ with $\eta$ small,  by reducing the step size. In our examples (Dirac and Gaussian estimation), it could be reduced to~$\Lambda_{\beta_1}$ by using a specific convergence proof. 
\end{remark}

 Note that in the noisy case, a small noise assumption (which is linked with the smallest amplitudes in $\phi(\theta^*)$ in practice)  guarantees the non-negativity of the Hessian with Lemma~\ref{lem:control_Hessian}. In the next section we will present the results in the noiseless case for clarity purpose. 

From the expression of $\beta_2$ (i.e. the size of the basin), we observe a general behavior that was already observed in the case of low-rank matrix recovery and spike estimation: when the RIP constant decreases (i.e. the number of measurements increases), the size of the basin increases (possibly not strictly). 

 In Hilbert spaces, when $\sS(\Sigma)$ has finite dimension $d$ (for the upper-box counting dimension),  it was shown in~\cite{Puy_2017} that it is possible to construct a random linear operator that have, with high probability, the  RIP on $\sS(\Sigma)$ with constant $\gamma$ such that
\begin{equation}
 m = O\left(\frac{d}{\gamma^2}\right).
\end{equation}

With such operators, we can write $\gamma =  \frac{1}{c \sqrt{m}}$ where $c$ is a constant independent of~$m$. This  gives the explicit dependency on the number of measurements. 
Note that the constants involved are dependent on the model $\Sigma$ and typically include \text{log} factors (see Section~\ref{sec:application}). 

In all our examples,  random operators following a similar relation can be constructed (even for sparse spike reconstruction and GMM estimation \cite{Gribonval_2017}).

Finally, we provide the following Corollary to show that the gradient descent leads to a solution that has the right estimation guarantees. 
\begin{corollary}\label{cor:recovery_guarantees}
 Under the framework hypotheses  and specific hypotheses of Theorem~\ref{th:control_general}, let $\theta_n$ be the iterates  (constructed in the proof of Theorem~\ref{th:control_general}) of a gradient descent with sufficiently small fixed step size such that $g(\theta_n) \to g(\theta^*)$. Then 
 \begin{equation}
   \|\phi(\theta_n) -x_0\|_\sH^2 \leq \frac{4}{1-\gamma}\|e\|_2^2 +O\left(\frac{1}{n}\right).
\end{equation}
\end{corollary}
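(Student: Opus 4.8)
The plan is to exploit the fact that, by hypothesis, the iterates satisfy $g(\theta_n)\to g(\theta^*)$, so it suffices to convert this convergence of objective values into a bound on $\|\phi(\theta_n)-x_0\|_\sH^2$ using the RIP. First I would write $g(\theta_n) = \|A\phi(\theta_n)-y\|_2^2$ and $y = Ax_0 + e = A\phi(\theta^*)+e$ (recall $x_0=\phi(\theta^*)\in\Sigma$, up to reparametrization; since $\theta^*$ is a global minimizer and the RIP guarantees noiseless exact recovery, $\phi(\theta^*)=x_0$). Then $g(\theta^*) = \|A\phi(\theta^*)-y\|_2^2 = \|e\|_2^2$. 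Writing $\delta_n := g(\theta_n)-g(\theta^*)$, we have $\delta_n\to 0$, and the goal is an inequality of the form $\|\phi(\theta_n)-x_0\|_\sH^2 \leq \frac{4}{1-\gamma}\|e\|_2^2 + O(\delta_n)$, after which absorbing $\delta_n = O(1/n)$ — which holds by the standard descent-lemma rate for gradient descent with fixed step on a function with Lipschitz gradient — finishes the proof.

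The key algebraic step is a triangle-type estimate in $\bC^m$. Set $r_n := A\phi(\theta_n) - y = A(\phi(\theta_n)-x_0) - e$. Then $\|A(\phi(\theta_n)-x_0)\|_2 \leq \|r_n\|_2 + \|e\|_2 = \sqrt{g(\theta_n)} + \|e\|_2 = \sqrt{\delta_n + \|e\|_2^2} + \|e\|_2$. Next I would invoke the RIP: since for $\theta_n$ in the basin $\Lambda_\beta$ we have $\phi(\theta_n)\in\Sigma$ (local stability, hypothesis 1 of Theorem~\ref{th:control_general}), the difference $\phi(\theta_n)-x_0 = \phi(\theta_n)-\phi(\theta^*)$ lies in $\sS(\Sigma)$, so $(1-\gamma)\|\phi(\theta_n)-x_0\|_\sH^2 \leq \|A(\phi(\theta_n)-x_0)\|_2^2$. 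Combining,
\begin{equation}
\|\phi(\theta_n)-x_0\|_\sH^2 \leq \frac{1}{1-\gamma}\left(\sqrt{\delta_n+\|e\|_2^2}+\|e\|_2\right)^2 .
\end{equation}
As $n\to\infty$, $\delta_n\to 0$, so the right-hand side tends to $\frac{1}{1-\gamma}(2\|e\|_2)^2 = \frac{4\|e\|_2^2}{1-\gamma}$; quantitatively, expanding $(\sqrt{\delta_n+\|e\|_2^2}+\|e\|_2)^2 = \delta_n + 2\|e\|_2^2 + 2\|e\|_2\sqrt{\delta_n+\|e\|_2^2} \leq 4\|e\|_2^2 + O(\delta_n)$ using $\sqrt{\delta_n+\|e\|_2^2}\leq \|e\|_2 + O(\delta_n)$ (or, more simply, $\sqrt{a+b}\leq\sqrt a+\sqrt b$ and a crude bound on the cross term), gives $\|\phi(\theta_n)-x_0\|_\sH^2 \leq \frac{4}{1-\gamma}\|e\|_2^2 + O(\delta_n)$.

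The remaining piece is to justify $\delta_n = g(\theta_n)-g(\theta^*) = O(1/n)$. For this I would use the descent lemma: since $g$ has Lipschitz-continuous gradient (which follows in the basin from the boundedness of the second-order derivatives, hypotheses 3–4, and Proposition~\ref{prop:Hessian}), with step $\tau$ small enough relative to the Lipschitz constant the standard argument gives $g(\theta_n)-g(\theta^*)\leq \frac{\|\theta_0-\theta^*\|_2^2}{2\tau n}$ provided the iterates stay in a region where $g$ behaves well — which is exactly what the basin-of-attraction conclusion of Theorem~\ref{th:control_general} provides (the iterates remain in $\Lambda_\beta$ and $g(\theta_n)$ decreases toward $g(\theta^*)$). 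The main obstacle, and the only subtle point, is the identification $x_0 = \phi(\theta^*)$ and the claim $\phi(\theta_n)-x_0\in\sS(\Sigma)$: one must be careful because $\theta^*$ parametrizes $x_0$ only up to the indeterminacy set $\phi^{-1}(\{\phi(\theta^*)\})$, but since $g$ depends on $\theta$ only through $\phi(\theta)$, the value $\phi(\theta^*)$ is well-defined, and under the RIP the global minimizer of \eqref{eq:minimization} over $\Sigma$ is within noise of $x_0$ — in the noiseless comparison that underlies the $\frac{4}{1-\gamma}\|e\|_2^2$ constant one simply carries the $\|e\|_2$ terms through the triangle inequalities as done above, so no exact-recovery claim is actually needed, only that $\phi(\theta_n)$ and $\phi(\theta^*)$ both lie in $\Sigma$ and $x_0\in\Sigma$, making all relevant differences secants.
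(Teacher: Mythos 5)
Your main chain of inequalities is the same as the paper's: lower-bound $\|A(\phi(\theta_n)-x_0)\|_2$ by the RIP (legitimate since $\phi(\theta_n)$ and $x_0$ both lie in $\Sigma$, so their difference is a secant), upper-bound it by $\sqrt{g(\theta_n)}+\|e\|_2$ using $y=Ax_0+e$, compare $g(\theta_n)$ with $g(\theta^*)$, and absorb the gap into $O(1/n)$. Two points need fixing. First, the assertion that ``$\phi(\theta^*)=x_0$ because the RIP guarantees exact recovery'' is both false in the noisy setting and unnecessary: all that is needed is $g(\theta^*)=\|A\phi(\theta^*)-y\|_2^2\le\|Ax_0-y\|_2^2=\|e\|_2^2$, which follows from $\theta^*$ being a global minimizer on $\Theta$ and $x_0\in\Sigma=\phi(\Theta)$; this is exactly how the paper argues, and you essentially concede the point in your last paragraph. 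Note also that the paper closes the estimate with $(a+b)^2\le 2(a^2+b^2)$, which avoids your cross term $2\|e\|_2\sqrt{\delta_n+\|e\|_2^2}$: your primary expansion does work, but it hides a $1/\|e\|_2$ in the $O(\delta_n)$ constant and requires the case $e=0$ to be treated separately, while your ``more simply'' variant via $\sqrt{a+b}\le\sqrt{a}+\sqrt{b}$ only yields a $O(1/\sqrt{n})$ remainder.

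Second, and this is the one genuine gap: the rate $\delta_n=g(\theta_n)-g(\theta^*)=O(1/n)$ does not follow from the descent lemma and a Lipschitz gradient alone. For a nonconvex $g$ these only give monotone decrease and vanishing gradients, and the textbook bound $\frac{\|\theta_0-\theta^*\|_2^2}{2\tau n}$ you invoke is the convex-case rate. In this paper the rate is obtained inside the proof of Theorem~\ref{th:control_general} from the generalized convexity inequality along the segments $[\theta_n,\tilde{\theta}_n]$ with $\tilde{\theta}_n\in p(\theta_n,\theta^*)$, combined with the descent inequality and the bound $g(\theta_n)-g(\tilde{\theta}_n)\le \langle \nabla g(\theta_n),\theta_n-\tilde{\theta}_n\rangle \le \beta\|\nabla g(\theta_n)\|_2$, leading to $g(\theta_n)-g(\theta^*)\le \frac{1}{c_1 n}$, i.e.\ inequality~\eqref{eq:rate_convergence}; the paper's proof of the corollary simply cites that inequality. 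So you should either quote~\eqref{eq:rate_convergence} directly or reproduce that argument; as written, your justification of the $O(1/n)$ term would not stand on its own.
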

\begin{proof}
 See Section~\ref{sec:proof_control}.
\end{proof}

 \section{Application of the framework}\label{sec:application}
 In this section, we apply our framework to three examples. We highlight how it relates to existing results and how it permits to give new ones. While suboptimal for the study of non-convex algorithms in general (e.g. for low-rank matrix recovery), we are able to give new results for non-convex recovery of low-dimensional models where such a study did not exist yet (Gaussian mixture models). For the sake of simplicity we study the three following examples in the noiseless case  $\|e\|_2 =0$.
 
\subsection{Low-rank matrix recovery} \label{sec:LR}
 The low-rank PSD matrix recovery problem falls into our analysis. We are able to give  explicit basins of attraction of global minimizers. Stronger results involving the study of all critical points of the functional  show that the global minimum is the only local minimum, thus justifying the use of stochastic descents which escape saddle points for global convergence results (see~\cite{Chi_2019} for a complete overview). We still apply our framework to this case to check its validity against well known results. It is also a first step for the understanding of more complex models such as Gaussian mixture models with low-rank modeling of covariances (see Section \ref{sec:GMM}). 
 
 The set-up is as follows: 
\begin{itemize}
\item We measure matrices with projections on $\bR^m$, i.e. $ \TestSpace =  \AmbientSpace = \bR^{p \times p}$, the duality product is the scalar product associated to the Frobenius norm, which also defines the Hilbert structure. In the context of matrices the Euclidean norm is referred to as the Frobenius norm $\|\cdot\|_\sH^2= \|\cdot\|_F^2$. The associated scalar product is $\ls X,Y\rs_F = \tr (X^TY)$.
 \item The set of rank at most $r$ PSD matrices is  $\Sigma = \Sigma_r = \{ ZZ^T : Z \in \bR^{p \times r} \}$.
 \item We can simply parametrize the model set $\Sigma$ with $\phi(\theta) = \phi(Z) = ZZ^T$ where  we identified the $d$-dimensional parameter $\theta \in \bR^{d}$ ($d = pr$) as a  matrix $Z \in \bR^{p \times r}$ . Note that given $Z_1,Z_2$ such that $\phi(Z_1)=Z_1Z_1^T = Z_2Z_2^T = \phi(Z_2)$, then there exists an orthogonal matrix $H$ such that $Z_2H =Z_1$.
 \item Low-rank matrix recovery results (e.g.~\cite{Candes_2011LR}) show that an appropriately chosen random $A$ has a RIP on $\Sigma_{2r}$ with constant $\gamma$ with high probability provided $m \geq O(\frac{pr}{\gamma^2})$, hence  $\phi(\theta^*) = x_0$ (we set  $x_0 = Z_0Z_0^T$). 
\end{itemize}
For low-rank  matrix recovery, the parametrized minimization~\eqref{eq:minimization2} is written 
\begin{equation}\label{eq:min_LR}
 \min_{Z \in \bR^{p \times r}} \|A(ZZ^T) -y\|_2^2.
\end{equation}
This minimization is often called the Burer-Monteiro method~\cite{Burer_2005}.

To apply our framework, we calculate $\partial_U \phi(Z) $ the directional derivative in the direction $U \in \bR^{p \times r}$. We have (see Section~\ref{sec:proof_LR}):
\begin{equation}
\begin{split}
 \partial_{U} \phi(Z) &= UZ^T + ZU^T \\ 
   \partial^2_{U} \phi(Z)&= 2UU^T. \\ 
\end{split}
\end{equation}

We see that  the directional derivative $\partial_{U} \phi(Z) $ is $0$ for (non-trivial) $U$ such that  $UZ^T + ZU^T =0$. In the study of non-convex low-rank matrix recovery, most complications arise from this indeterminacy of the parametrization. To  give a basin of attraction of $\theta^* = Z_0$, it is shown in the literature that the interesting directions of the Hessian are the solutions of an orthogonal Procrustes problem, i.e. it is sufficient to lower bound the Hessian in all directions $U \in \bR^{p \times r}$ that can be written $U = Z'H_0 - Z_0$ with an arbitrary $Z'$ and with $H_0= \arg \min_{H \in \mathcal{O}(r)} \|Z'H -Z_0\|_F^2$ to guarantee the success of the gradient descent~\cite{Chi_2019} ($\mathcal{O}(r)$ is the orthogonal matrix group). Qualitatively, in the parameter space, the important directions  of the Hessian are those between the unkown $Z_0$ and matrices $Z$ minimizing the Frobenius distance to $Z_0$ within its class of equivalent parametrizations $\{ \tilde{Z}: \tilde{Z}\tilde{Z}^T = ZZ^T\}$.

This idea can be used within our general framework to recover a $g$-basin of attraction of $Z_0$. In our case we will instead study the Hessian in the direction $U =  Z_0H_0-Z$  where $H_0= \arg \min_{H \in \mathcal{O}(r)} \|Z_0H -Z\|_F^2$ for $Z \in \Lambda_\beta$.  While we cannot expect the function $g$ to be convex in a neighborhood of $Z_0$~\cite{Chi_2019},   there is an underlying convexity property in these particular directions which allows to use Corollary~\ref{cor:control_general}. Within our framework, we have $d(Z,Z_0)= \|Z_0H_0 -Z\|_F^2$ and $p(Z,Z_0) =Z_0H_0$.

\begin{theorem}\label{th:basin_LR}
 Let $A$  be a linear map on $\bR^{p\times p}$ with the RIP on $\Sigma_{2r}$ with constant $\RIPcst$. Suppose $e=0$ and let $ Z_0$ be a rank-$r$ global minimizer of~\eqref{eq:min_LR}.  Let $\beta_{LR} := \frac{1}{8} \frac{  (1-\RIPcst)  (\sigma_{min}(Z_0))^2 }{(1+\RIPcst)\sigma_{max}(Z_0)}$.  Then $\Lambda_{\beta_{LR}}:= \{Z : \inf_{H \in \mathcal{O}(r)} \|ZH -Z_0\|_F <\beta_{LR} \}$ is a $g$-basin of attraction of $Z_0$.
\end{theorem}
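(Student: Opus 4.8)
The plan is to verify the six hypotheses of Corollary~\ref{cor:control_general} for the parametrization $\phi(Z) = ZZ^T$ with $\theta^* = Z_0$, choosing $\beta_1$ appropriately and then showing $\beta_2 \geq \beta_1$ so that $\Lambda_{\min(\beta_1,\beta_2)} = \Lambda_{\beta_1} = \Lambda_{\beta_{LR}}$. Since $\sH$ is finite-dimensional here, Assumption~\ref{assum:k_norm} and all the \weaks machinery are automatic, so the work is entirely a sequence of explicit matrix estimates built on the formulas $\partial_U\phi(Z) = UZ^T + ZU^T$ and $\partial^2_U\phi(Z) = 2UU^T$.

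First I would address the indeterminacy and local stability. The fibre $\phi^{-1}(\{Z_0Z_0^T\})$ is exactly $\{Z_0 H : H \in \mathcal{O}(r)\}$ (for $Z_0$ of rank $r$), so $d(Z,Z_0) = \inf_{H\in\mathcal{O}(r)}\|Z_0H - Z\|_F$ and $\tilde Z = Z_0H_0$ with $H_0 = \arg\min_H\|Z_0H-Z\|_F$; uniqueness of $H_0$ (hence of $p(\theta,\theta^*)$) holds as long as $Z$ is close enough to the orbit, which one controls via the singular values of $Z_0^TZ$ — this forces a bound like $\beta_1 \lesssim \sigma_{\min}(Z_0)$. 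Hypothesis~2 ($\phi(Z)\in\Sigma_{2r}$, using the RIP on $\Sigma_{2r}$) is free since $\mathrm{rank}(ZZ^T)\leq r$ always, so $\phi(Z)-\phi(Z_0)\in\Sigma_{2r}-\Sigma_{2r}$ lands in the secant set where the RIP holds. For hypothesis~3 (local Lipschitz), I would write $ZZ^T - Z_0Z_0^T = (Z-Z_0H_0)(Z_0H_0)^T + Z(Z - Z_0H_0)^T$ and bound $\|ZZ^T - Z_0Z_0^T\|_F \leq (\|Z_0\|_{op} + \|Z\|_{op})d(Z,Z_0) \leq (2\sigma_{\max}(Z_0) + \beta_1)\,d(Z,Z_0)$, giving $C_{\phi,\theta^*}$ of order $\sigma_{\max}(Z_0)$.

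Next the derivative bounds. For $M_1$: on the fibre, $Z = Z_0H$ with $H$ orthogonal, so $A\partial_U\phi(Z_0H) = A(UH^TZ_0^T + Z_0HU^T)$, and applying the RIP (via Lemma~\ref{lem:RIP_derivative}, as $\partial_U\phi(Z_0H)$ is a generalized secant) gives $\|A\partial_U\phi(Z_0H)\|_2 \leq \sqrt{1+\gamma}\,\|UH^TZ_0^T + Z_0HU^T\|_F \leq 2\sqrt{1+\gamma}\,\sigma_{\max}(Z_0)$ for $\|U\|_F=1$, so $M_1 < \infty$. For $M_2$: $A\partial^2_U\phi(Z) = 2A(UU^T)$ and $UU^T$ has rank $\leq r$, so it lies in the secant set and $\|A\partial^2_U\phi(Z)\|_2 \leq 2\sqrt{1+\gamma}\,\|UU^T\|_F = 2\sqrt{1+\gamma}$, uniformly, so $M_2 < \infty$ — note the second-order derivative here does not even depend on $Z$, which is what makes this example clean.

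The main obstacle is hypothesis~6: showing $\beta_2 \geq \beta_{LR}$, i.e. lower-bounding the ratio $\|\partial_{U}\phi(z)\|_\sH^2 / \|A\partial^2_U\phi(z)\|_2$ for $U = \tilde Z - Z = Z_0H_0 - Z$ along the segment $z \in [Z, Z_0H_0]$. The numerator is $\|U z^T + z U^T\|_F^2$ and the denominator is at most $2\sqrt{1+\gamma}\|U\|_F^2$, so everything reduces to a lower bound of the form $\|Uz^T + zU^T\|_F^2 \geq c\,\|U\|_F^2$ with $c$ comparable to $(1-\gamma)\sigma_{\min}(Z_0)^2$. The key facts to exploit are: (i) by the Procrustes optimality of $H_0$, the matrix $Z_0^TZ$ (equivalently $(Z_0H_0)^TZ$) is symmetric PSD, which kills the cross terms and yields $\|Uz^T + zU^T\|_F^2 = 2\|Uz^T\|_F^2 + 2\,\mathrm{tr}((Uz^T)^2) \geq 2\|Uz^T\|_F^2 \geq 2\sigma_{\min}(z)^2\|U\|_F^2$ when $U^Tz$ is symmetric — one must track that this symmetry-type structure persists (or is nearly preserved) along the whole segment, not just at the endpoint; and (ii) on $\Lambda_{\beta_1}$ with $\beta_1$ of the stated order, $\sigma_{\min}(z) \geq \sigma_{\min}(Z_0) - \beta_1 \gtrsim \sigma_{\min}(Z_0)$ via Weyl's inequality. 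Assembling these with the explicit constants — and choosing $\beta_1$ just large enough that $\beta_2$ (which scales like $(1-\gamma)\sigma_{\min}(Z_0)^2/((1+\gamma)\sigma_{\max}(Z_0))$) dominates it — should land exactly on $\beta_{LR} = \frac18\frac{(1-\gamma)\sigma_{\min}(Z_0)^2}{(1+\gamma)\sigma_{\max}(Z_0)}$, after which Corollary~\ref{cor:control_general} delivers the conclusion.
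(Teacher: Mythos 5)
Your proposal follows the same overall strategy as the paper --- verify the hypotheses of Corollary~\ref{cor:control_general} with $\tilde Z = Z_0H_0$ given by the orthogonal Procrustes alignment, a Lipschitz constant $C_{\phi,\theta^*}$ of order $3\sigma_{\max}(Z_0)$, the RIP bound $\|A(UU^T)\|_2\le\sqrt{1+\gamma}\,\|UU^T\|_F$ for the ($Z$-independent) second derivative, and Weyl's inequality to keep $\sigma_{\min}$ bounded below on the basin --- but you treat the crucial Hypothesis~6 differently. The paper writes $\partial_U\phi(Z+tU)=UZ^T+ZU^T+2tUU^T$ and explicitly minimizes this quadratic in $t$ over $[0,1]$, splitting into three cases according to the location of the unconstrained minimizer $t^*$; the interior case costs an extra $-2\|UU^T\|_F^2\ge-2\beta_{LR}^2$ and degrades the constant from $(7/8)^2=49/64$ to $47/64$. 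You instead want the pointwise bound $\|Uz^T+zU^T\|_F^2\ge 2\|Uz^T\|_F^2\ge 2\sigma_{\min}(z)^2\|U\|_F^2$ at every $z$ on the segment, which needs $U^Tz$ symmetric there; you flag this as something to ``track'', possibly only ``nearly preserved''. In fact it holds exactly: $U^T(Z+tU)=U^TZ+tU^TU$, and both $U^TZ$ (by the Procrustes optimality, exactly as the paper notes at the endpoints) and $U^TU$ are symmetric, so the cross term $\mathrm{tr}\big((U^Tz)^2\big)=\|U^Tz\|_F^2\ge 0$ for all $t\in[0,1]$ and no case analysis is needed. Combining this with $\sigma_{\min}(z)\ge\sigma_{\min}(Z_0)-\beta_{LR}\ge\tfrac78\sigma_{\min}(Z_0)$ (Weyl along the segment, since $\|z-\tilde Z\|_F\le\|Z-\tilde Z\|_F$), $\|UU^T\|_F\le\|U\|_F^2$, and Lemma~\ref{lem:lower_bound_tr1}, you obtain $\beta_2\ge\frac{49}{64}\cdot\frac{(1-\gamma)\,\sigma_{\min}(Z_0)^2}{3\sigma_{\max}(Z_0)(1+\gamma)}\ge\beta_{LR}$, which is exactly what the paper needs (it gets $47/64$ in place of $49/64$). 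So your route is correct once the hedge is closed by this one-line symmetry check, and it is in fact slightly cleaner than the paper's; the remaining hypothesis checks in your plan (fibre equal to $\{Z_0H: H\in\mathcal{O}(r)\}$, uniqueness of $H_0$ from invertibility of $Z_0^TZ$ near the orbit, model stability, finiteness of $M_1,M_2$) coincide with the paper's.
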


\begin{proof}
See Section~\ref{sec:proof_LR}.
\end{proof}

Our theorem gives a result that is similar to the one of~\cite{Tu_2016}: the size of our basin of attraction depends on the smallest singular value of $\phi(\theta^*)$. It also shows explicitly the dependency on the number of measurements through the RIP constant. Our result has an added dependency on the conditioning of $Z_0$. In return, we require the RIP only on $\Sigma_{2r}$ instead of $\Sigma_{6r}$.

\begin{remark}
 The descent algorithm for low-rank matrix recovery  coined as Procrustes flow in the literature uses the expression of the gradient for measurements $\alpha_l$ obtained from symmetric measurement matrices $A$. In this case  
 \begin{equation}
 \begin{split}
  \partial_U g (Z) &= 2 \re \ls A (UZ^T +ZU^T), A \phi(Z)-y\rs= 4   \re \ls A UZ^T , A \phi(Z)-y\rs\\
 \end{split}
\end{equation}
Our analysis uses the true value of the gradient for any measurement operator $A$.
\end{remark}

\subsection{Off-the-grid sparse spike recovery}\label{sec:SR}

{Off-the-grid sparse spike recovery is at the core of imaging problems in signal processing~\cite{Candes_2013, Duval_2015, Castro_2015}. They can also be used to perform some machine learning tasks such as compressive clustering~\cite{Keriven_2017}. The size of basins of attraction is directly linked with the number of measurements through RIP constants~\cite{Traonmilin_2019a}. The proof of the results of~\cite{Traonmilin_2019a} is exactly the proof of Lemma~\ref{lem:control_Hessian} coupled with controls of the chosen norm $\|\cdot\|_\sH$ and the explicit computation of the gradient and Hessian. We recall here the set-up leading to explicit basins of attraction for this specific case. 
\begin{itemize}
\item The off-the-grid sparse signals supported on $\bR^p$ are measured by projections on twice differentiable functions with bounded derivatives  (weighted Fourier measurements) $\alpha_l \in \sC_b^2(\bR^p)=\TestSpace$. Hence $\AmbientSpace$ contains the set of compactly-supported distributions of order $\leq 2$ on $\bR^p$. 
 \item The low-dimensional model is the subset of finite signed measures over $\bR^p$ defined by $\Sigma=\Sigma_{k,\epsilon}:= \{  \sum_{i=1}^k a_i \delta_{t_i} : \|t_i-t_j\|_2 >\epsilon, t_i \in \sB_2(R) \} $, where $\sB_2(R) := \{t \in\bR^p : \|t\|_2 < R \}$. 
 
 \item The parametrization function is defined for $\theta = (a_1,\ldots,a_k,t_1,\ldots,t_k) \in \bR^{k(p+1)}$ by $\phi(\theta) = \sum_{i=1}^k a_i \delta_{t_i}$ ($a =(a_1,\ldots,a_k)$ is the vector of amplitudes, $t= (t_1,\ldots,t_k)$ defines the $k$ positions in $\bR^p$). Note that any parametrization is equivalent up to a permutation of the positions and amplitudes. 
 \item In this case,  the minimization~\eqref{eq:minimization2} must be performed on the constrained set
   \begin{equation}
     \Theta_{k,\epsilon} = \big\{  (a_1,\ldots,a_k,t_1,\ldots,t_k)  :  a_i \in \bR ,  t_i \in \sB_2(R), \|t_i-t_j\|_2 >\epsilon\big\} 
   \end{equation}
   The fact that we place ourselves within the hypotheses of Theorem~\ref{th:control_general} permits to guarantee that the gradient descent iterates stay in the constraint.
 \item The Hilbert norm for $\Sigma$ can be induced by a kernel metric defined on the space of finite signed measures over $\bR^d$ and  extended to distributions of order 2 of interest. Such a kernel metric takes the following form on a linear combination of Dirac masses
 \begin{equation}
 \begin{split}
 \left\|\sum_i {a_i}\delta_{t_i}\right\|_\sH^2 = \left\|\sum_i {a_i}\delta_{t_i}\right\|_K^2&:= \int K(t,s) \left(\sum_i {a_i}\id \delta_{t_i} ( t) \right) \left(\sum_j {a_j}\id \delta_{t_j}(s) \right)  \\
 &= \sum_i a_i a_j  K(t_i-t_j)  \\
 \end{split}
 \end{equation}
where $K(t,s) \propto e^{-\frac{\|t-s\|_2^2}{2\sigma^2}}$ is a Gaussian kernel with a  variance $\sigma^2$ that defines the precision at which we measure distances between elements of~$\Sigma$.  It was shown in~\cite{Traonmilin_2019a} that this kernel verifies the compatibility assumption~\ref{assum:k_norm}. 
\item Either random or regular Fourier measurements $A$ over $\bR^d$ can be considered. They have been shown to have a RIP on $\sS(\Sigma_{k,\epsilon})$ with constant $\gamma$ with respect to the kernel metric $\|\cdot\|_\sH$ as long as $m = O(\frac{k^2d}{\gamma^2}(\log(k))^2 \log(kd/\epsilon) )$ (random Gaussian Fourier measurements) or $m= O(\frac{1}{\epsilon^d})$ (regular Fourier measurements). 
 \end{itemize}

For off-the-grid  sparse spike estimation, the parametrized minimization~\eqref{eq:minimization2} is written 
\begin{equation}\label{eq:min_spikes}
 \min_{a_1,..,a_k \in \bR; t_1,..,t_k \in \sB_2(R); \forall i\neq j, \|t_i-t_j\|_2>\epsilon} \left\|A\left(\sum_{i=1}^k a_i \delta_{t_i} \right) -y\right\|_2^2.
\end{equation}

 The derivatives of $\phi$ are given by  $\partial_u \phi(a,t) = \sum_i v_i \delta_{t_i} + a_i \partial_{w_i}\delta_{t_i}$ where $u=(v,w)$ ($v$ is the direction for the derivative with respect to amplitudes and $w$ is the direction for the derivative with respect to positions) and $\partial_{w_i}\delta_{t_i}$ is a directional derivative of the Dirac in the distribution sense: for  $\alpha \in \TestSpace$, $\ls \partial_{w_i}\delta_{t_i},\alpha\rs = -\partial_{w_i} \alpha(t_i) $.
 
Within this framework, the case of Dirac recovery is the one with the most complications as elements of the generalized secant set $\overline{\sS(\Sigma)}$ are not elements of the set of finite signed measures which is naturally considered for off-the-grid Dirac recovery. In consequence, the considered kernel metric $\|\cdot\|_\sH$ which is defined on the space of finite signed measures, must be extended to elements of $\overline{\sS(\Sigma)}$ which are distributions of order 1 (the Dirac derivatives). Such an extension uses the smoothness of the Gaussian kernel and the fact that $\overline{\sS(\Sigma)}$ acts on a bounded domain.

We obtain the following theorem in the noiseless case.
 
\begin{theorem}\label{th:basin_spikes}
  Suppose $A$ has RIP with constant $\RIPcst$ on $\sS(\Sigma_{k,\frac{\epsilon}{2}})$.

  Let $\theta^* = (a_1,\ldots,a_k,t_1,\ldots,t_k)\in \Theta_{k,\epsilon}$ be a result of constrained minimization~\eqref{eq:min_spikes} such that $0<|a_1|\leq|a_2|...\leq |a_k|$. Then there is an explicit $\beta_{spikes}$ depending on the $a_i, \RIPcst, K, A$ such that $\Lambda_{\beta_{spikes}} :=  \{ \theta:  \|\theta-\theta^*\|_2 < \beta_{spikes} \}$ is a $g$-basin of attraction of~$\theta^*$.
\end{theorem}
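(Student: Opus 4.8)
The plan is to verify the hypotheses of Corollary~\ref{cor:control_general} for the off-the-grid spike setting, with $\sH$ the kernel space $\sH_K$, $\Sigma = \Sigma_{k,\epsilon}$, $\Theta = \Theta_{k,\epsilon}$, and $\phi(a,t) = \sum_i a_i\delta_{t_i}$. First I would record the structural facts that make the corollary applicable: the parametrization is injective up to permutation of the $k$ atoms, so for $\theta$ close enough to $\theta^*$ (say within half the separation $\epsilon$ in the natural product metric, so that atoms cannot swap) the projection $p(\theta,\theta^*)$ is the unique reordering of $\theta^*$ matching $\theta$, giving hypothesis~1 for an explicit $\beta_1$ of order $\epsilon$ (together with a constraint coming from the amplitude scale, so that atoms stay non-degenerate). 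Hypothesis~2, local stability $\phi(\theta)\in\Sigma$ on $\Lambda_{2\beta_1}$, follows because a small perturbation of an $\epsilon$-separated configuration remains $\frac{\epsilon}{2}$-separated and stays in $\sB_2(R)$ — which is exactly why the RIP is required on $\sS(\Sigma_{k,\epsilon/2})$ in the statement.

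Next I would establish the three regularity/boundedness ingredients. For the local Lipschitz bound~\eqref{eq:local_lipschitz}, I would use the explicit form $\phi(\theta)-\phi(\theta^*) = \sum_i (a_i - a_i^*)\delta_{t_i} + (\text{position mismatch terms})$ and bound its $\sH_K$-norm by a mean-value / Taylor argument using smoothness of the Gaussian kernel $K$ and the compactness of $\sB_2(R)$; this yields $C_{\phi,\theta^*}$ depending on $K$, $R$, and $\|a^*\|$. For $M_1$ and $M_2$, I would use $\partial_u\phi(a,t) = \sum_i v_i\delta_{t_i} + a_i\partial_{w_i}\delta_{t_i}$ and the analogous second-order formula, then invoke the RIP on $\overline{\sS(\Sigma)}$ (Lemma~\ref{lem:RIP_derivative}) plus Assumption~\ref{assum:k_norm} to turn $\|A\partial_u\phi\|_2$, $\|A\partial_u^2\phi\|_2$ into kernel-norm quantities, which are finite and explicitly bounded because the Dirac derivatives are evaluated through a smooth kernel on a bounded domain (this is precisely the extension of $\|\cdot\|_\sH$ to $\overline{\sS(\Sigma)}$ already invoked before the theorem). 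The bounds depend on $\sup$'s of derivatives of $K$ up to order $4$ and on the amplitude scale $|a_k|$.

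The heart of the proof is hypothesis~6: a uniform lower bound on the ratio $\|\partial_{\tilde\theta-\theta}\phi(z)\|_\sH^2 / \|A\partial^2_{\tilde\theta-\theta}\phi(z)\|_2$ over $z\in[\theta,\tilde\theta]$ and $\theta\in\Lambda_{\beta_1}$. Writing $u = \tilde\theta-\theta = (v,w)$, the numerator is (via Lemma~\ref{lem:RIP_derivative}) comparable to $\|\partial_u\phi(z)\|_{\sH_K}^2$, and the key is that this first-order kernel norm is \emph{nondegenerate} in $u$: one shows $\|\partial_u\phi(z)\|_{\sH_K}^2 \geq c\,\|u\|_2^2$ with $c>0$ depending on $a^*$, $\epsilon$, $K$, $\sigma^2$ — this is the spike analogue of the ``the differential is injective on the tangent space'' fact, and the separation $\epsilon$ is what prevents the Gram matrix of the $\delta_{t_i},\partial_{w_i}\delta_{t_i}$ from becoming singular (smallest amplitude $|a_1|$ controls how much the position-derivative part contributes). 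The denominator $\|A\partial_u^2\phi(z)\|_2 \leq M_2\|u\|_2^2$ by hypothesis~5. Hence the ratio is bounded below by $c/( \sqrt{1+\RIPcst}\,M_2)$, a positive constant, and since $\|e\|_2=0$ the quantity $\beta_2$ in~\eqref{eq:control_ratio2} is strictly positive; setting $\beta_{spikes} := \min(\beta_1,\beta_2)$ and invoking Corollary~\ref{cor:control_general} concludes.

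The main obstacle I anticipate is making the nondegeneracy constant $c$ in $\|\partial_u\phi(z)\|_{\sH_K}^2 \geq c\|u\|_2^2$ genuinely explicit and uniform over the whole segment and over $\Lambda_{\beta_1}$: one must control the smallest eigenvalue of the (mixed amplitude/position) kernel Gram matrix at a perturbed, $\frac{\epsilon}{2}$-separated configuration, and check that this eigenvalue stays bounded away from $0$ as long as $\beta_1$ is chosen small relative to $\epsilon$ and to the amplitude ratios. This is where the bulk of the quantitative work lies and where the precise dependence of $\beta_{spikes}$ on $a^*,\RIPcst,K,A$ is produced; the rest is bookkeeping with the Taylor formula and the RIP transfer lemma. (This is exactly the computation carried out in \cite{Traonmilin_2019a}, which the theorem statement signals we are recovering as a special case.)
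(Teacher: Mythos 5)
Your plan coincides with the paper's own treatment: the paper proves Theorem~\ref{th:basin_spikes} by deferring to \cite[Corollary 3.1]{Traonmilin_2019a}, stating explicitly that the proof there is exactly Lemma~\ref{lem:control_Hessian} combined with controls of the kernel norm $\|\cdot\|_\sH$ and explicit gradient/Hessian computations — that is, a verification of the hypotheses of Corollary~\ref{cor:control_general} just as you outline, with the RIP on $\sS(\Sigma_{k,\frac{\epsilon}{2}})$ used for local stability of the model set and the kernel-norm nondegeneracy of $\partial_u\phi$ (your constant $c$, depending on $|a_1|$, $\epsilon$ and $K$) supplying the ratio bound in hypothesis~6. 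The quantitative Gram-matrix lower bound you flag as the main remaining work is precisely the content the paper delegates to the cited reference (and carries out in-text only for the analogous GMM case), so your sketch is faithful to the paper's argument.
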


We refer to~\cite[Corollary 3.1]{Traonmilin_2019a} for the proof and precise value of $\beta_{spikes}$. The value of $\beta_{spikes}$ exhibits the behavior with respect to the RIP constant mentioned in Section~\ref{sec:control_Hessian}. It also shows a dependency on the smallest amplitudes in $\theta^*$. The strong RIP assumption (i.e with separation $\frac{\epsilon}{2}$ instead of  $\epsilon$) is used to guarantee that the first condition of Theorem~\ref{th:control_general} is met (the same argument is used in the next section for GMM).

\subsection{Gaussian mixture estimation from compressive measurements}\label{sec:GMM}
 
  Gaussian mixture model recovery from  linear measurements can be used to model blind deconvolution problems. Additionally, the estimation of GMM  from compressive measurements can be used to perform compressive statistical learning.  Consider a database $x_1,\ldots, x_n \in \bR^p$ and represent it with its associated empirical  probability measure $x = \frac{1}{n}\sum_{i=1}^n \delta_{x_i}$. Let $A$ be a random Fourier measurement operator. For $m<np$, we obtain a compressed version $y=Ax$ of $y$ called a sketch of the dataset (using random Fourier features). It was shown that  the parameters of  a Gaussian mixture (GMM) can be recovered by solving the minimization~\eqref{eq:minimization} with the appropriate design of $A$~\cite{Keriven_2016,Gribonval_2017,gribonval2020statistical}. Recovery guarantees have been given for the case of a fixed known covariance PSD matrix~$\Gamma$ while practical results including the estimation of diagonal covariances were obtained using a heuristic based on orthogonal matching pursuit~\cite{Keriven_2016}. In such greedy methods, an unconstrained gradient descent step is used to refine the solution within the algorithm.  We give the expression of an explicit basin of attraction in the fixed covariance case, which gives an understanding of the success of such descent algorithms and we discuss how the result can be extended to the case of variable covariance afterwards.
 
 \begin{itemize}
 \item The Gaussian  mixtures  on  $\bR^d$ are measured by projections on bounded  functions (Fourier measurements) $\alpha_l \in \sC_b(\bR^p) =\TestSpace$. Hence $\AmbientSpace$ contains the set of finite signed measures on $\bR^p$. 
 \item The low-dimensional model is a subset of finite signed measures over $\bR^p$ defined by $\Sigma=\Sigma_{k,\epsilon,\Gamma} := \{  \sum_{i=1}^k a_i  \mu_{t_i} : \|t_i-t_j\|_\Gamma >\epsilon, t_i \in \sB_2(R) \} $, where $\id \mu_{t_i}(t)= e^{-\frac{1}{2}\|t-t_i\|_\Gamma^2} \id t $, $\|u\|_\Gamma^2 = u^t\Gamma^{-1}u$ and $\Gamma$ is the fixed known covariance matrix. 
 \item The parametrization function is defined for $\theta = (a_1,\ldots,a_k,t_1,\ldots,t_k) \in \bR^{k(p+1)}$ by $\phi(\theta) = \sum_{i=1}^k a_i \mu_{t_i}$ ($a =(a_1,\ldots,a_n)$ is the vector of amplitudes, $t= (t_1,\ldots,t_n)$ defines the means). As in the spike estimation problem, any parametrization is equivalent up to a permutation of the means and amplitudes.
 \item The minimization~\eqref{eq:minimization2} is performed on the constrained set $\Theta_{k,\epsilon} \subset \bR^{(k+1)d}$. 
 \item The Hilbert norm for $\Sigma$ can be obtained from a kernel metric $\|\cdot\|_\sH := \|\cdot\|_K$ defined on the space of finite signed measures over $\bR^d$ where the kernel is a Gaussian function with covariance proportional to $\Gamma$ that defines the precision at which we measure distances between elements of $\Sigma$. The compatibility assumption~\ref{assum:k_norm} essentially comes from the fact that the directional derivatives of Gaussian measures with respect to amplitudes and means are finite signed measures (they are obtained by differentiating their $\sC^{\infty}$ densities). 
\item Adequately chosen random Fourier measurements with Gaussian frequency measurements have been shown to satisfy the RIP  on the secant set $\sS(\Sigma_{k,\epsilon,\Gamma})$ with constant $\gamma$ with respect to $\|\cdot\|_K$ as long as $\epsilon \geq O(\sqrt{d(1+\log(k))})$ and   $m = O(\frac{k^2d}{\gamma^2}\text{polylog}(k,d))$ \cite{Gribonval_2017}.
 \end{itemize}
 
For GMM estimation, the parametrized minimization~\eqref{eq:minimization2} is written
\begin{equation}\label{eq:min_GMM}
 \min_{a_1,..,a_k \in \bR; t_1,..,t_k \in \sB_2(R); \forall i\neq j, \|t_i-t_j\|_\Gamma>\epsilon} \left\|A\left(\sum_{i=1}^k a_i \mu_{t_i} \right) -y\right\|_2^2.
\end{equation}

 The derivatives of $\phi$ are given by  $\partial_u \phi(a,t) = \sum_i v_i \mu_{t_i} + a_i\partial_{w_i} \mu_{t_i}$ where $u=(v,w)$ ($v$~is the direction for the derivative with respect to amplitudes and $w$ is the direction for the derivative with respect to means). As $\mu_{t_i}$ has a smooth density with respect to the Lebesgue measure its directional derivative $\partial_{w_i} \mu_{t_i}$ in the distribution sense is also a finite signed measure with density 
 \begin{equation}\label{eq:GMM_kernel_norm_expr1}
  t \to - w_i^T\Gamma^{-1}(t-t_i)e^{-\frac{1}{2}\|t-t_i\|_\Gamma^2}.
  \end{equation} 
  This makes the case of Gaussian mixtures slightly easier to manage than the Dirac recovery case as the  kernel metric $\|\cdot\|_K$  is well defined on the space of finite signed measures.
 
 We give an explicit uniform bound of the Hessian on a neighborhood of $\theta^*$ in this case. The main missing ingredient to give the bound is the mutual coherence of the kernel metric. It  can be shown that  an  appropriately chosen  Gaussian kernel $K$ satisfies the following assumption \cite{gribonval2020statistical} (by taking a kernel with small enough variance with respect to the separation).
 
 \begin{assumption}\label{assum:gaussian_k}
 The kernel $K$ follows this assumption if
 \begin{itemize}
  \item   $\| \mu_{t_i}\|_K=1$.
  \item  There is a constant $c_K$ such that for any $k$ means $(t_i)_{i=1}^k$ that verifies for $i\neq j$, $\|t_i-t_j\|_\Gamma>\epsilon$, we have $\| \sum_i v_i \mu_{t_i} + a_i\partial_{w_i} \mu_{t_i}\|_K^2 \geq (1-c_K) \sum_i \|v_i \mu_{t_i} + a_i \partial_{w_i}\mu_{t_i}\|_K^2$.
 \end{itemize}
  
 \end{assumption}

  In the noiseless case we get the following basin of attraction. 
 
 \begin{theorem}\label{th:control_gaussian}
 Suppose the Fourier measurement operator $A$ has RIP $\RIPcst$ on $\sS(\Sigma_{k,\frac{\epsilon}{2},\Gamma})$. Suppose $K$ is a Gaussian kernel with covariance proportional to $\Gamma$ that follows Assumption~\ref{assum:gaussian_k}. Suppose $e=0$. Let $\theta^* = (a_1,\ldots,a_k,t_1,\ldots,t_k)\in \Theta_{k,\epsilon}$ be a solution of the constrained minimization problem~\eqref{eq:min_GMM} such that $0<|a_1|\leq|a_2|...\leq |a_k|$.  Let 
 \begin{equation}
  \beta_{GMM} = \min \left(  \frac{\epsilon\sqrt{\lambda_{min}(\Gamma)}}{8 },\frac{|a_1|}{2}, \frac{  (1-\RIPcst)  (1-c_K) \min(1 , d_K |a_1|^2) }{8C_{\phi,\theta^*}\sqrt{1+\RIPcst}(\sqrt{1+\gamma}\sqrt{1+c_k} \sqrt{D_K} + 2|a_k|D_{A,K}')} \right) 
 \end{equation}
 where $\lambda_{min}(\Gamma)$ is minimum eigenvalue of $\Gamma$, $c_K$, $d_K$ and $D_K$ are constants depending only on the chosen kernel $K$, $D_{A,K}$  is an explicit constant depending on $K$ and the acquisition operator $A$ and $C_{\phi,\theta^*}>0$. Additionally suppose that for all $i$, $\|t_i\|_2 \leq R - 2 \beta_{GMM}$.

 Then  $\Lambda_{\beta_{GMM}} :=   \{ \theta:  \|\theta-\theta^*\|_2 < \beta_{GMM} \}$ is a $g$-basin of attraction of $\theta^*$.

\end{theorem}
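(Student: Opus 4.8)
The plan is to verify the six hypotheses of Corollary~\ref{cor:control_general}, applied with the model $\Sigma = \Sigma_{k,\frac{\epsilon}{2},\Gamma}$ (note that $\theta^* \in \Theta_{k,\epsilon} \subset \Theta_{k,\epsilon/2}$), the parametrization $\phi(a,t) = \sum_{i=1}^k a_i \mu_{t_i}$, and the kernel Hilbert norm $\|\cdot\|_\sH = \|\cdot\|_K$; the framework applies since $\AmbientSpace$ contains the finite signed measures, $A$ is a \weaks continuous Fourier operator, $\phi$ is twice \weaks Gateaux differentiable with derivatives that are finite signed measures, and $\|\cdot\|_K$ satisfies Assumption~\ref{assum:k_norm}. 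Because $\phi$ is injective only up to permutations of the $k$ pairs $(a_i,t_i)$, the set $\phi^{-1}(\{\phi(\theta^*)\})$ is the finite set of permutations of $\theta^*$. Using that the means are $\epsilon$-separated for $\|\cdot\|_\Gamma$ (hence differ by at least $\epsilon\sqrt{\lambda_{min}(\Gamma)}$ in Euclidean norm) and that $0 < |a_1| \leq \dots \leq |a_k|$, one checks that as soon as $\beta_1 \leq \min\bigl(\frac{\epsilon\sqrt{\lambda_{min}(\Gamma)}}{8}, \frac{|a_1|}{2}\bigr)$ these permutations are pairwise at distance $> 4\beta_1$, so for every $\theta \in \Lambda_{2\beta_1}$ the projection $p(\theta,\theta^*)$ reduces to the single point $\theta^*$ and $d(\theta,\theta^*) = \|\theta-\theta^*\|_2$ (hypothesis 1). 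The same smallness of $\beta_1$, together with the assumption $\|t_i\|_2 \leq R - 2\beta_{GMM}$, ensures that for $\theta \in \Lambda_{2\beta_1}$ the perturbed means stay $\frac{\epsilon}{2}$-separated inside $\sB_2(R)$ and the amplitudes stay nonzero, so $\phi(\theta) \in \Sigma_{k,\epsilon/2,\Gamma}$ and $\phi(\theta) - \phi(\theta^*) \in \sS(\Sigma_{k,\epsilon/2,\Gamma})$ — this is exactly where the strong RIP hypothesis on $\sS(\Sigma_{k,\epsilon/2,\Gamma})$ is used (hypothesis 2).

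Next I would establish the regularity bounds. Each $\mu_t$ has a $\sC^\infty$ density depending smoothly on $t$, so $\partial_u\phi(z)$ and $\partial_v\partial_u\phi(z)$ are finite signed measures with explicit densities, namely a polynomial of degree $\leq 2$ in $t-t_i$ times the Gaussian weight (as in~\eqref{eq:GMM_kernel_norm_expr1} for the first derivative). Their $\|\cdot\|_K$-norms are therefore bounded by kernel-dependent constants (producing $D_K$), which gives both the local Lipschitz constant $C_{\phi,\theta^*}$ via $\|\phi(\theta)-\phi(\theta^*)\|_K \leq \sup_{z \in [\theta,\theta^*]} \|\partial_{\theta-\theta^*}\phi(z)\|_K$ (hypothesis 3) and, through Lemma~\ref{lem:RIP_derivative} applied to the generalized secants $\partial_u\phi$, the finiteness of $M_1$ on the finite set $\phi^{-1}(\theta^*)$ (hypothesis 4). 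Hypothesis 5 — uniform boundedness of $\|A\partial_v\partial_u\phi(z)\|_2$ on $\Lambda_{2\beta_1}$ — is more delicate, since $\partial_v\partial_u\phi(z)$ is in general not a generalized secant and the RIP cannot be invoked; instead I would bound $\|A\partial_v\partial_u\phi(z)\|_2$ directly using the explicit Fourier structure of $A$ acting on this finite signed measure (decay of the Fourier transform of the second derivative of a Gaussian density at the chosen frequencies), which yields the constant $D_{A,K}'$ appearing in $\beta_{GMM}$. I expect this direct control of $\|A\partial_v\partial_u\phi\|_2$ to be the main technical obstacle of the proof, precisely because it cannot be routed through the RIP and must exploit the specific choice of measurement frequencies.

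It then remains to verify the ratio condition~\eqref{eq:control_ratio2}. Taking $u = \theta^* - \theta$ and writing, along $z \in [\theta,\theta^*]$, $\partial_u\phi(z) = \sum_i v_i \mu_{s_i} + b_i \partial_{w_i}\mu_{s_i}$, Assumption~\ref{assum:gaussian_k} provides the almost-orthogonality lower bound $\|\partial_u\phi(z)\|_K^2 \geq (1-c_K)\sum_i \|v_i\mu_{s_i} + b_i\partial_{w_i}\mu_{s_i}\|_K^2$; since $\|\mu_{s_i}\|_K = 1$ while $\|b_i\partial_{w_i}\mu_{s_i}\|_K$ is comparable to $|b_i|\,\|w_i\|_2$ with a kernel constant $d_K$ (and $|b_i| \geq \frac{|a_1|}{2}$ on $\Lambda_{2\beta_1}$), one obtains, up to the numerical constant, $\|\partial_u\phi(z)\|_K^2 \geq (1-c_K)\min(1, d_K|a_1|^2)\,\|u\|_2^2$. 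Combining this with the upper bound $\|A\partial_u^2\phi(z)\|_2 \leq \sqrt{1+\RIPcst}\,\bigl(\sqrt{1+c_K}\sqrt{D_K} + 2|a_k|D_{A,K}'\bigr)\|u\|_2^2$ obtained from the previous step shows that the infimum in~\eqref{eq:control_ratio2} is strictly positive, and rearranging (recalling $e=0$) yields $\beta_2 \geq \beta_{GMM}$. Since also $\beta_1 \geq \beta_{GMM}$ by the first two terms in the minimum defining $\beta_{GMM}$, Corollary~\ref{cor:control_general} applies with $\min(\beta_1,\beta_2) = \beta_{GMM}$, and therefore $\Lambda_{\beta_{GMM}}$ is a $g$-basin of attraction of $\theta^*$.
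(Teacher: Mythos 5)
Your overall route is the same as the paper's: verify the hypotheses of Corollary~\ref{cor:control_general} with $\Sigma=\Sigma_{k,\frac{\epsilon}{2},\Gamma}$ (separation stability of the iterates plus $\|t_i\|_2\le R-2\beta_{GMM}$ giving hypothesis 2, uniqueness of the projection among permutations giving hypothesis 1, a local Lipschitz constant $C_{\phi,\theta^*}$, direct bounds on $A$ applied to first/second derivatives, and the kernel coherence Assumption~\ref{assum:gaussian_k} for the ratio condition). The one genuine gap is in the lower bound of hypothesis~6. Writing $\partial_u\phi(z)=\sum_i v_i\mu_{s_i}+b_i\partial_{w_i}\mu_{s_i}$, Assumption~\ref{assum:gaussian_k} reduces the problem to lower bounding each $\|v_i\mu_{s_i}+b_i\partial_{w_i}\mu_{s_i}\|_K^2 = v_i^2\|\mu_{s_i}\|_K^2+b_i^2\|\partial_{w_i}\mu_{s_i}\|_K^2+2v_ib_i\ls \mu_{s_i},\partial_{w_i}\mu_{s_i}\rs_K$, and your ``is comparable to'' step silently drops the cross term, which a priori could cancel the two squared terms. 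The paper spends a computation proving $\ls \mu_{s_i},\partial_{w_i}\mu_{s_i}\rs_K=0$ for the Gaussian kernel (reducing the $t$-integral to that of an odd function); a shortcut is that $\|\mu_s\|_K\equiv 1$ by translation invariance, so the directional derivative of $\|\mu_s\|_K^2$ vanishes. Without this orthogonality, and without the quantitative bound $\|\partial_w\mu_t\|_K^2\ge d_K\|w\|_2^2$ (a genuine Gaussian-integral computation, Lemma~\ref{lem:bound_kernel_norm_gaussian}, not just ``comparability''), the claimed bound $\|\partial_u\phi(z)\|_K^2\ge (1-c_K)\min(1,d_K|a_1|^2)\|u\|_2^2$ is not established, and it is exactly this bound that produces the $\min(1,d_K|a_1|^2)$ in $\beta_{GMM}$.

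Two smaller remarks. For hypothesis~3 you use a mean-value bound $\|\phi(\theta)-\phi(\theta^*)\|_K\le\sup_z\|\partial_{\theta-\theta^*}\phi(z)\|_K$, whereas the paper expands $\|a_i\mu_{t_i}-b_i\mu_{s_i}\|_K^2=(a_i-b_i)^2+2a_ib_i(1-\ls\mu_{t_i},\mu_{s_i}\rs_K)$ and uses the coherence estimate $1-\ls\mu_{t_i},\mu_{s_i}\rs_K\le C_K\|t_i-s_i\|_2^2$ to get $C_{\phi,\theta^*}=\max(1,4C_K a_k^2)$; since the theorem only requires some $C_{\phi,\theta^*}>0$, your route is acceptable provided you justify the mean-value inequality for the Hilbert-valued Gateaux derivative. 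For the second-derivative bound you anticipate needing Fourier decay at the chosen frequencies; in fact boundedness of the $\alpha_l$ suffices, since $\|A\nu\|_2\le\bigl(\sum_l\sup_t|\alpha_l(t)|^2\bigr)^{1/2}\int|\id\nu|$ and $\partial^2_{w}\mu_{t}$ is a finite signed measure with explicit Gaussian-type density (this is the paper's Lemma~\ref{lem:bound_kernel_norm_gaussian2}, yielding $D_{A,K}'$), so this step is simpler than you expect. Finally, to land exactly on the factor $8$ in $\beta_{GMM}$ you must track the constants you wave away: $\min_i|b_i|^2\ge|a_1|^2/4$ and $|b_k|\le 2|a_k|$ on the enlarged set, together with the factor $2$ in $\partial_u^2(b_i\mu_{s_i})=2v_i\partial_{w_i}\mu_{s_i}+b_i\partial^2_{w_i}\mu_{s_i}$.
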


\begin{proof}
See Section~\ref{sec:proof_GMM}.
\end{proof}

With Theorem~\ref{th:control_gaussian}, we are able to control the non-negativity of the Hessian over an explicit neighborhood of $\theta^*$. Note that the assumption on the norm of the $t_i$ is technical to guarantee the stability of iterates.  It is not a very strong assumption in practice as $\beta_{GMM}$ is generally small with respect to $R$.  Also, it could be dropped with a dedicated convergence proof. The constants involved in the expression  lead to a size of the basin of attraction having the same behavior with respect to parameters of the problem: it increases with  respect to the number of measurements (RIP constant) and it decreases with respect to the minimum weight in~$x_0$. Note that with Fourier measurements proposed in \cite{gribonval2020statistical} the constant $D_{A,K}$ is independent of $m$. A dependency on the minimum amplitude in $\theta^*$ is also observed similarly to the case of Dirac estimation. 

 \paragraph{Open questions for the extension to unknown variable covariances with low-rank constraint}  

 For real world applications, the known covariance case can be too simplistic to perform compressive statistical learning with good results. In practice, the covariance matrices of Gaussian mixtures are also estimated.  Often, a low-rank approximation is made (flat-tail) to reduce the number of estimated parameters. For example, an unknown low-rank covariances model was used in  \cite{Hui_2021a,Hui_2021} to learn an image patch model from a compressed database. This model is then used to perform patch based image denoising. 
 
  A low-dimensional model with unknwon low-rank covariances can be defined as follows. The parameters are  $\theta =(a_1,\ldots,a_k, t_1,\ldots,t_k , Z_1,\ldots,Z_k)$ where the $Z_i$ are $d \times r$ matrices used to model covariances $\Gamma_i = Z_i Z_i^T + \rho I$. In this case, the model is 
 \begin{equation}
 \begin{split}
  \Sigma_{k,\epsilon,r,\rho,P} := \Bigl\{ &\sum_{i=1}^k a_i \mu_{t_i, \Gamma_i} : a_i \in \bR, \|t_i-t_j\|_2 > \epsilon, t_i \in \sB_2(R), \rho < \lambda_j(\Gamma_i) < P ,\\
  &\text{rank}(\Gamma_i-\rho I)\leq r \Bigr\}
  \end{split}
 \end{equation}
where $\mu_{t_i, \Gamma_i} $ is the Gaussian measure of mean $t_i$ and covariance $\Gamma_i$.  Recovery guarantees of the ideal decoder have not yet been given in this more general case (and it is out of the scope of this article). To do so, one would need to show the existence of a linear measurement operator with a RIP on $\sS(\Sigma_{k,\epsilon,r,\rho,P})$.

For off-the-grid sparse recovery,  a block coordinate descent with respect to amplitudes and means is easier to implement in practice. To consider the case of variable covariances, we suggest to follow this guideline and to perform a block coordinate descent step with respect to each covariance matrix. With block coordinate descent, we would just need to make sure the Hessian with respect  to the coordinates of $Z_i$ is positive over a neighborhood of $\theta^*$~\cite{Beck_2013} to calculate explicitly a basin of attraction.  
 
 The directional derivative of $a_i\mu_{t_i,Z_iZ_i^T + \rho I}$ in the direction $(v,w,W)$ has density (proof in Section~\ref{sec:proof_GMM_var}): 
 
 \begin{equation}
 \sum_i ( v_i -  a_i w^T\Gamma_i^{-1}(t-t_i ) + a_i \frac{1}{2}(t-t_i)^T \Gamma_i^{-1} (Z_iW_i^T +W_iZ_i^T)\Gamma_i^{-1}(t-t_i) )e^{-\frac{1}{2}\|t-t_i\|_{Z_iZ_i^T + \rho I}^2}.
 \end{equation}
 
 Finding a lower bound of the kernel norm of this directional derivative is quite technical.  We observe that we would need to handle the indeterminacy of the low-rank parametrization as in Section~\ref{sec:LR}. We propose the following lemma as a first step towards building a theorem for an explicit basin of attraction in the variable flat tail covariance case. The proof of this lemma shows the technicalities involved to deal with the full variable covariance case. With this lemma it is possible to show that the Hessian blocks corresponding to the covariances in the full rank diagonal case are positive over a neighborhood of the global minimum (in other words, the radius $\beta_2$ from Corollary~\ref{cor:control_general} is strictly positive) as long as the precision of the kernel is good enough.  We leave the full study of non-convex methods for the variable covariance case with low-rank approximation including establishing  RIP recovery guarantees for future work.
 
 \begin{lemma} \label{lem:bound_variable_cov}
 Let $K$ be a Gaussian kernel with covariance $\frac{1}{\lambda}I$. Suppose $\Gamma \in \bR^{p \times p}$ is a diagonal definite  positive matrix. Let $\partial_W \mu_{t,\Gamma}$ the directional derivative of $\mu_{t,\Gamma}$  with respect to $\Gamma$ in (diagonal) direction $W$ such that $\|W\|_F =1$. There exists $L >0$ such that  if $\lambda \geq L$, it then holds:
 \begin{equation}
  \|\partial_W \mu_{t,\Gamma}\|_K^2 \geq D_{\lambda,\Gamma} 
 \end{equation}
where $D_{\lambda,\Gamma}>0$ is a constant that only depends on $\lambda $ and $\Gamma$.
 \end{lemma}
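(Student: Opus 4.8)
The plan is to reduce $\|\partial_W\mu_{t,\Gamma}\|_K^2$ to a fully explicit Gaussian integral, using that $K$ is translation invariant and that $\Gamma$ and $W$ are diagonal. First, since $K(x,y)=e^{-\frac{\lambda}{2}\|x-y\|_2^2}$ depends only on $x-y$, the measure $\partial_W\mu_{t,\Gamma}$ is the $t$-translate of $\partial_W\mu_{0,\Gamma}$, so $\|\partial_W\mu_{t,\Gamma}\|_K$ does not depend on $t$ and I may take $t=0$. Writing $\Gamma=\mathrm{diag}(\sigma_1,\dots,\sigma_p)$ and $W=\mathrm{diag}(w_1,\dots,w_p)$ with $\sum_i w_i^2=1$, and differentiating the density $x\mapsto e^{-\frac12 x^T\Gamma^{-1}x}$ with respect to $\Gamma$ (using $\partial_W\Gamma^{-1}=-\Gamma^{-1}W\Gamma^{-1}$, exactly as in the variable-covariance derivative recalled above), the distribution $\partial_W\mu_{0,\Gamma}$ is the finite signed measure with density
\[
\nu_W(x)=\frac12\,x^T\Gamma^{-1}W\Gamma^{-1}x\;e^{-\frac12 x^T\Gamma^{-1}x}
=\frac12\Big(\sum_{i=1}^p b_i x_i^2\Big)\prod_{k=1}^p e^{-x_k^2/(2\sigma_k)},\qquad b_i:=w_i/\sigma_i^2 .
\]

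Next I would expand $\|\nu_W\|_K^2=\iint e^{-\frac{\lambda}{2}\|x-y\|_2^2}\nu_W(x)\nu_W(y)\,\id x\,\id y$ and integrate coordinate by coordinate, since both $K(x-y)$ and the Gaussian weights factor over coordinates. Each term of the resulting double sum $\frac14\sum_{i,j}b_ib_j(\dots)$ then collapses to a product of one-dimensional pair integrals of only three types,
\[
J_0(\sigma):=\iint_{\bR^2}e^{-\frac{\lambda}{2}(x-y)^2-\frac{x^2+y^2}{2\sigma}}\,\id x\,\id y,\qquad
J_1(\sigma):=\iint_{\bR^2}x^2\,e^{-\frac{\lambda}{2}(x-y)^2-\frac{x^2+y^2}{2\sigma}}\,\id x\,\id y,
\]
\[
J_2(\sigma):=\iint_{\bR^2}x^2y^2\,e^{-\frac{\lambda}{2}(x-y)^2-\frac{x^2+y^2}{2\sigma}}\,\id x\,\id y
\]
(the factor carrying $y_j^2$ is handled by the $x\leftrightarrow y$ symmetry). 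Setting $r_i:=J_1(\sigma_i)/J_0(\sigma_i)$, $s_i:=J_2(\sigma_i)/J_0(\sigma_i)$ and using the identity $\sum_{i\ne j}b_ib_jr_ir_j=(\sum_i b_ir_i)^2-\sum_i b_i^2r_i^2$, this yields the closed form
\[
\|\partial_W\mu_{t,\Gamma}\|_K^2
=\frac14\Big(\prod_{k=1}^p J_0(\sigma_k)\Big)\Big[\Big(\sum_i b_ir_i\Big)^2+\sum_i b_i^2\,(s_i-r_i^2)\Big].
\]

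To conclude I would evaluate $J_0,J_1,J_2$ as (unnormalized) moments of the $2$-D Gaussian whose precision matrix has $\lambda+\sigma^{-1}$ on the diagonal and $-\lambda$ off the diagonal: for $(X,Y)$ this Gaussian, $r_i=\mathrm{Var}(X)$, while Wick's formula gives $s_i=\mathrm{Var}(X)^2+2\,\mathrm{Cov}(X,Y)^2$, hence
\[
s_i-r_i^2=2\,\mathrm{Cov}(X,Y)^2=\frac{2\lambda^2\sigma_i^4}{(1+2\lambda\sigma_i)^2}>0,\qquad
J_0(\sigma_k)=\frac{2\pi\sigma_k}{\sqrt{1+2\lambda\sigma_k}} .
\]
Dropping the nonnegative square term and using $\sum_i b_i^2=\|\Gamma^{-1}W\Gamma^{-1}\|_F^2\ge\|W\|_F^2/\sigma_{\max}^4=1/\sigma_{\max}^4$, with $\sigma_{\max}:=\max_i\sigma_i$, then gives
\[
\|\partial_W\mu_{t,\Gamma}\|_K^2\ \ge\ \frac{1}{2\,\sigma_{\max}^4}\Big(\prod_{k=1}^p\frac{2\pi\sigma_k}{\sqrt{1+2\lambda\sigma_k}}\Big)\ \min_{1\le i\le p}\frac{\lambda^2\sigma_i^4}{(1+2\lambda\sigma_i)^2}\ =:\ D_{\lambda,\Gamma}>0,
\]
a quantity depending only on $\lambda$ and on the diagonal of $\Gamma$, uniform in $t$ and in the unit-Frobenius diagonal direction $W$.

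The main obstacle is bookkeeping rather than conceptual: carrying out the coordinate-wise reduction cleanly and pinning down the sign of $s_i-r_i^2$. Using the exact Gaussian moments as above makes $s_i-r_i^2=2\,\mathrm{Cov}(X,Y)^2$ manifestly positive for every $\lambda>0$; the threshold $L$ in the statement becomes genuinely necessary only if one prefers to bound the off-diagonal contributions $\sum_{i\ne j}$ by cruder kernel estimates, in which case $\lambda\ge L(\Gamma)$ (a narrow enough kernel) is needed to absorb the remainder into the diagonal part. As a qualitative sanity check one also has the soft argument via Bochner's theorem: $\|\nu_W\|_K^2=\int\widehat K(\omega)\,|\widehat{\nu_W}(\omega)|^2\,\id\omega$ with $\widehat K>0$ everywhere, so $\|\nu_W\|_K^2=0$ forces $\nu_W\equiv0$, i.e.\ $\Gamma^{-1}W\Gamma^{-1}=0$, i.e.\ $W=0$; since $W\mapsto\|\nu_W\|_K^2$ is a positive-definite quadratic form on the finite-dimensional space of diagonal matrices, it attains a strictly positive minimum on $\{\|W\|_F=1\}$.
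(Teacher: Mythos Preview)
Your proof is correct and takes a genuinely different route from the paper. The paper integrates out one variable of $\bR^p$ at a time, introducing the auxiliary Gaussian-integral functions $\mathbf{C},\mathbf{D},\mathbf{E}$; this leaves a cross term $(\sum_i (1+\lambda\Gamma_i)^{-2}Z_iW_i)(\sum_j Z_j\Gamma_j^{-2}W_j)$ whose sign is only fixed via an asymptotic argument as $\lambda\to\infty$, which is precisely why the threshold $L$ appears. You instead exploit the full separability (isotropic kernel, diagonal $\Gamma$ and $W$) to factor the $2p$-dimensional integral into a product of $p$ two-dimensional pair integrals and then use Isserlis' formula to obtain the identity
\[
\|\partial_W\mu_{t,\Gamma}\|_K^2=\tfrac14\Big(\prod_k J_0(\sigma_k)\Big)\Big[\Big(\sum_i b_ir_i\Big)^2+\sum_i b_i^2\,(s_i-r_i^2)\Big],\qquad s_i-r_i^2=2\,\mathrm{Cov}(X,Y)^2>0,
\]
which is a manifest sum of nonnegative terms. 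This buys you a cleaner computation, an explicit constant $D_{\lambda,\Gamma}$, and a strictly stronger conclusion: your bound holds for \emph{every} $\lambda>0$, not only for $\lambda\ge L$. Conversely, the paper's $\mathbf{C},\mathbf{D},\mathbf{E}$ machinery is set up so that one could in principle relax the diagonality assumptions on $\Gamma$ or the kernel, whereas your coordinate factorization relies essentially on all three objects being diagonal (or isotropic). Your closing Bochner/compactness remark is also a sound qualitative sanity check.
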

\begin{proof}
See Section~\ref{sec:proof_GMM_var}.
\end{proof}

%%%%%%%%%%%%%%%%%%%%%%%%%%%%%%%%%%%%%%%%%%%%%%%%%%%%%%%%%%%%%%%%%%%%%%%%%%%%%%%%%%%%%%%%%%%%ù
%%%%%%%%%%%%%%%%%%%%%%%%%%%%%%%%%%%%%%%%%%%%%%%%%%%%%%%%%%%%%%%%%%%%%%%%%%%%%%%%%%%%%%%%%%%%ù
%%%%%%%%%%%%%%%%%%%%%%%%%%%%%%%%%%%%%%%%%%%%%%%%%%%%%%%%%%%%%%%%%%%%%%%%%%%%%%%%%%%%%%%%%%%%

%%%%%%%%%%%%%%%%%%%%%%%%%%%%%%%%%%%%%%%%%%%%%%%%%%%%%%%%%%%%%%%%%%%%%%%%%%%%%%%%%%%%%%%%%%%%ù
%%%%%%%%%%%%%%%%%%%%%%%%%%%%%%%%%%%%%%%%%%%%%%%%%%%%%%%%%%%%%%%%%%%%%%%%%%%%%%%%%%%%%%%%%%%%ù
%%%%%%%%%%%%%%%%%%%%%%%%%%%%%%%%%%%%%%%%%%%%%%%%%%%%%%%%%%%%%%%%%%%%%%%%%%%%%%%%%%%%%%%%%%%%ù

\section{From ideal to practical backprojection initialization: the challenge of non-convex low-dimensional recovery?}\label{sec:initialization}

A common approach for the initialization of non-convex low-dimensional recovery is the  initialization by backprojection techniques. Such initialization is outdated for low-rank matrix recovery as global convergence of descent algorithms has been proven~\cite{Li_2018}. However, it is still necessary for other models such as sparse spikes, Gaussian mixtures and phase recovery (which is not developed in this article). 
%The idea of such initialization relies on an immediate lemma. 
Finding a practical initialization technique is essentially a case by case heuristic design problem. We investigate the usual backprojection method in the noiseless case $y= A x_0$. 

In finite dimension, backprojection techniques rely on the fact that the RIP of $A$ implies $\|(A^HA -I)x_0\|_2^2 \leq \RIPcst $ (where $A^H$ is the adjoint of $A$). In our general framework, the adjoint of $A$ does not back-project in $\AmbientSpace$, but in $\TestSpace$. We formalize such backprojection within our framework.

\begin{definition}[Ideal backprojection]
Given measurements $y = A x_0 \in \bC^m$, we define the  ideal backprojection $z \in \TestSpace$ with 
\begin{equation}
 z :=  \sum_{l=1}^m  y_l \alpha_l.
 \end{equation}
 
 \end{definition}
 
We immediately have  the following Lemma (which is a direct consequence of the RIP):
\begin{lemma}\label{lem:init}
 Suppose $A$ has the RIP with constant $\RIPcst$. Let $z \in \TestSpace$ the ideal backprojection for measurements $y = A x_0$. Then
 \begin{equation}
   (1-\RIPcst) \|x_0\|_\sH^2 \leq \ls x_0, z\rs \leq (1+\RIPcst) \|x_0\|_\sH^2.
\end{equation}
\end{lemma}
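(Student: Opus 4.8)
The plan is to obtain both inequalities by applying the RIP to the single element $x_0$, once $\ls x_0, z\rs$ has been rewritten as $\|Ax_0\|_2^2$.

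First I would record that $x_0 \in \sS(\Sigma)$. Indeed $x_0 \in \Sigma$ (this is the setting of the initialization section), and since $\Sigma$ is a union of subspaces it contains $0$, so $x_0 = x_0 - 0 \in \Sigma - \Sigma = \sS(\Sigma)$. The RIP~\eqref{eq:DefRIP} therefore applies to $x_0$ directly and yields
\begin{equation*}
(1-\RIPcst)\|x_0\|_\sH^2 \;\leq\; \|Ax_0\|_2^2 \;\leq\; (1+\RIPcst)\|x_0\|_\sH^2 .
\end{equation*}

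Second I would unfold the definitions of $z$ and $A$ to get $\ls x_0, z\rs = \ls Ax_0, y\rs_{\bC^m}$: pairing $x_0$ against $z = \sum_{l=1}^m y_l \alpha_l$ reproduces, by linearity of the duality product and $(Ax_0)_l = \ls x_0,\alpha_l\rs$ (see~\eqref{eq:def_A}), the $\bC^m$-inner product of $Ax_0$ with $y$ (up to the conjugation convention on $\bC^m$ under which $\|\cdot\|_2$ is the induced norm — this is precisely the property that makes $z$ an ``ideal backprojection''). Since $y = Ax_0$ in the noiseless case, this gives $\ls x_0, z\rs = \ls Ax_0, Ax_0\rs_{\bC^m} = \|Ax_0\|_2^2$, in particular a real nonnegative number.

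Substituting this identity into the RIP bounds from the first step yields exactly $(1-\RIPcst)\|x_0\|_\sH^2 \leq \ls x_0, z\rs \leq (1+\RIPcst)\|x_0\|_\sH^2$, which is the claim. There is no real obstacle here — the statement is, as noted in the paper, a direct consequence of the RIP — and the only two points worth a word of care are (i) that $x_0$ genuinely lies in $\sS(\Sigma)$, so that the RIP may be invoked on $x_0$ itself rather than on a difference of two distinct model elements, and (ii) the bookkeeping of the Hermitian conjugation on $\bC^m$, so that $\ls x_0, z\rs$ comes out equal to $\ls Ax_0, y\rs_{\bC^m} = \|Ax_0\|_2^2$ and not to an unconjugated bilinear sum.
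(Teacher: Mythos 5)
Your proposal is correct and follows essentially the same route as the paper: rewrite $\ls x_0, z\rs$ by unfolding the definition of the ideal backprojection to obtain $\ls x_0, z\rs = \|Ax_0\|_2^2$ (with the proper Hermitian conjugation on $\bC^m$), then apply the RIP to $x_0$. Your extra remark that $x_0 \in \sS(\Sigma)$ because $\Sigma$ is a union of subspaces (hence contains $0$) is a small justification the paper leaves implicit, but the argument is otherwise identical.
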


\begin{proof}[Proof of Lemma \ref{lem:init}]
\begin{equation}
\begin{split}
  \ls x_0, z\rs &=  \ls x_0, \sum_{l=1}^m  y_l \alpha_l \rs\\
  &=  \ls x_0, \sum_{l=1}^m  \ls x_0,\alpha_l \rs \alpha_l \rs\\
               &=  \sum_{l=1}^m  \overline{\ls x_0, \alpha_l \rs} \ls x_0, \alpha_l  \rs  \\
               &=  \sum_{l=1}^m  | \ls x_0, \alpha_l \rs |^2  = \|Ax_0\|_2^2 .
 \end{split}
\end{equation}
Hence, with the RIP,
\begin{equation}
\begin{split}
   (1-\RIPcst) \|x_0\|_\sH^2 \leq \ls x_0, z\rs \leq (1+\RIPcst) \|x_0\|_\sH^2 .
 \end{split}
\end{equation}
  \end{proof}
  
 Lemma~\ref{lem:init} shows that the ideal backprojection preserves the energy in $x_0$ (up to a RIP constant). Hence when the number of measurement increases, it can get arbitrarily close to the global optimum (looking through the duality product). The main challenge is to extract initial parameters $\theta_{init}$ from $z$. One possibility would be to separate this task in two steps: define $\theta_{init}$ minimizing a ``distance'' between $\phi(\theta)$ and $z$ for theoretical study  and then find practical ways to minimize such distance.  Such distance could be built using the norm $\|\cdot\|_\sH$ or even directly the duality product.   

In  phase retrieval~\cite{Waldspurger_2018},  spectral initialization techniques consist in taking the leading eigenvectors of a matrix constructed as linear combination of  backprojections of individual measurements. 

In the case of spike super-resolution a heuristic based on a sampling on a grid of the ideal backprojection followed by dimension reduction was proposed to  initialize a descent algorithm to perform spike recovery for Diracs supported on a low-dimensional domain~$\bR^p$ (e.g. $p=2$ and possibly $3,4$)~\cite{Traonmilin_2019b}. Qualitative evidence that a grid step size $\epsilon_g$ small enough permits the initialization in the basin of attraction are the only recovery results for sparse spike recovery with descent methods in the parameter space. Grid based initialization techniques also requires one algorithmic step (the backprojection) whose computational complexity $\epsilon_g^{-d}$  might be a drawback  for high-dimensional $d$ (curse of dimension). A practical alternative consists in greedy heuristics to initialize descent algorithms by greedily adding spikes one by  one (general theoretical recovery guarantees for such methods are still being investigated beyond specific examples~\cite{Elvira_2019}).  

For Gaussian mixture estimation, it is possible to imagine a grid based heuristic to estimate the means followed by clustering (to backproject onto the separation constraint). However, it is an open question to determine if the curse of dimension can be avoided in the initialization in order to outperform greedy methods~\cite{Keriven_2016}.

\section{Conclusion}
 
 We have described a generic framework to perform low-dimensional recovery from linear measurements using non-convex optimization. We showed how recent examples of the literature can be studied within this framework and we gave new results for the case of Gaussian mixture modeling. 
 
 The following open questions emerge from this work: 
 \begin{itemize}
 \item  Other applications: The objective of our framework is to be used for other applications where such results do not exist. For example, our results should apply to low-rank tensor recovery, where recent works study the RIP in this context~\cite{Rauhut_2015} and study convergence of the Burer-Monteiro method~\cite{Li_2022}. It might be also possible to apply directly our framework for inverse problems using generative models with deep neural networks (e.g. in \cite{Asim_2020}, a neural network is used to learn the parametrization function $\phi$). We would then need to be able to show the existence of a RIP on such models first.
  \item RIP based guarantees, while useful qualitative tools, are often too strong to give useful quantitative guarantees. Can our framework be extended to a non-uniform recovery case to obtain more precise estimates?  
  \item Is it possible in general to  initialize within the basin of attraction, avoiding the curse of dimension (for parametrized infinite dimensional problems), and providing a full quantitative proof of convergence in the case of super-resolution and GMM estimation? 
 \end{itemize}

 \section{Annex}
 \subsection{Summary of notations}\label{sec:notations}
 \noindent Spaces  and sets:
 \begin{itemize}
  \item $\TestSpace$: Banach space of functions used to measure the unknown $x_0$.
  \item $\AmbientSpace$: ambient space where the objects of interest (e.g. the unknown $x_0$) belong, dual space of $\TestSpace$.
  \item $\Sigma$: low dimensional model set, i.e. a cone $\Sigma \subset \AmbientSpace$.
  \item $\sS(\Sigma)$: secant set of $\Sigma$ (differences of elements of $\Sigma$).
  \item $\Theta$: open subset of $\bR^d$ which parametrizes the model $\Sigma$.
  \item $\sH$: Hilbert space containing the low dimensional model. The reconstruction error is measured using the associated Hilbert norm. 
  \item $\bC^m$: complex finite dimensional vector space containing the $m$ finite measurements. 
  \item $\Lambda_\beta$:  basin of attraction for the appropriate $\beta$ (given by our theorems). 
  \item $[\theta_1,\theta_2]$: for $\theta_1,\theta_2 \in \bR^d$, denotes the line segment $ \{ t \theta_1 +(1-t)\theta_2: t \in [0,1] \}$.
 \end{itemize}
Norms, functions and operators:
 \begin{itemize}
   \item $ \ls \cdot, \cdot\rs$: depending on context: duality product between $\TestSpace$ and  $\AmbientSpace$, conventional scalar Hermitian product in finite dimension.
 \item $\|\cdot\|_\sH$: norm associated with $\sH$.
  \item $\|\cdot\|_2$: usual $\ell^2$-norm.
  \item $d(\cdot,\theta)$: distance to the set of equivalent parametrizations of $\theta$.
  \item $p(\cdot,\theta)$: projection on the set of equivalent parametrizations of $\theta$.
  \item $\phi$: parametrization function ($\phi(\Theta) = \Sigma$).
  \item $\alpha_l$: measurement functions in $\TestSpace$ for $1 \leq l \leq m$.
  \item $A$ : linear measurement operator ( $\AmbientSpace \to \bC^m$) defined by $Ax= (\ls x, \alpha_l \rs)_{1 \leq l \leq m}$.
  \item $g$: parametrized functional we want to minimize, $g(\theta) = \|A\phi(\theta) - y\|_2^2$.
 \end{itemize}

 \subsection{ Weak-* topology and continuity}\label{sec:weaks}

\begin{proposition}[Neighborhood]
 For all $x_0 \in \AmbientSpace$, a basis of neighborhoods of $x_0$ for the \weaks topology is formed with the sets 
 \begin{equation}
  V_{\epsilon,(\alpha_i)_{1 \leq i \leq n}}(x_0) := \{ x \in \AmbientSpace : \forall i =1,\ldots, n, |\ls x-x_0,\alpha_i \rs| < \epsilon \}
 \end{equation}
for all $\epsilon > 0$ and all  $\alpha_i \in \AmbientSpace$ and  $n\in \bN$.
\end{proposition}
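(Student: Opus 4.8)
The plan is to recognize that the \weaks topology on $\AmbientSpace$ is, by its very definition, the \emph{initial topology} induced by the family of evaluation functionals $E_\alpha : x \mapsto \ls x, \alpha \rs$ indexed by the predual $\TestSpace$, i.e. the coarsest topology on $\AmbientSpace$ making every $E_\alpha$ continuous, and then to invoke the standard description of a neighborhood basis for such an initial topology. Concretely, the sets $\{ E_\alpha^{-1}(U) : \alpha \in \TestSpace, \ U \subseteq \bC \text{ open}\}$ form a subbasis of the \weaks topology, and finite intersections of such sets form a basis; the whole argument then reduces to translating this general fact into the explicit form of the sets $V_{\epsilon,(\alpha_i)}(x_0)$.

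First I would verify that each $V_{\epsilon,(\alpha_i)}(x_0)$ is a \weaks-open neighborhood of $x_0$. For a single index, one observes that $\{ x : |\ls x - x_0, \alpha_i\rs| < \epsilon \} = E_{\alpha_i}^{-1}(B(\ls x_0,\alpha_i\rs, \epsilon))$, the preimage under the \weaks-continuous map $E_{\alpha_i}$ of the open ball $B(\ls x_0,\alpha_i\rs,\epsilon)$ of radius $\epsilon$ about $\ls x_0, \alpha_i\rs$ in $\bC$; this set is \weaks-open and contains $x_0$. Since $V_{\epsilon,(\alpha_i)}(x_0) = \bigcap_{i=1}^n \{ x : |\ls x - x_0, \alpha_i\rs| < \epsilon \}$ is a finite intersection of such sets, it is itself \weaks-open and contains $x_0$.

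Next I would show these sets are cofinal in the neighborhood filter at $x_0$, i.e. that every \weaks-open set $W$ with $x_0 \in W$ contains some $V_{\epsilon,(\alpha_i)}(x_0)$. By the subbasis/basis description of the initial topology, $W$ is a union of finite intersections $\bigcap_{i=1}^n E_{\alpha_i}^{-1}(U_i)$ with each $U_i \subseteq \bC$ open; since $x_0 \in W$, one such finite intersection both contains $x_0$ and sits inside $W$. Each $U_i$ is then an open subset of $\bC$ containing $\ls x_0, \alpha_i\rs$, so there is $\epsilon_i > 0$ with $B(\ls x_0, \alpha_i\rs, \epsilon_i) \subseteq U_i$. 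Setting $\epsilon := \min_{1\leq i \leq n} \epsilon_i > 0$ yields $V_{\epsilon,(\alpha_i)}(x_0) \subseteq \bigcap_{i=1}^n E_{\alpha_i}^{-1}(U_i) \subseteq W$, which is exactly the required inclusion.

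Nothing here is genuinely difficult: the statement is an instance of the general fact that finite intersections of subbasic sets form a basis of an initial topology, specialized to the \weaks topology. The only points requiring care are bookkeeping ones — correctly identifying the generating functionals as those indexed by the predual $\TestSpace$ (so that the $\alpha_i$ run over the test functions, consistently with the measurement model $(Ax)_l = \ls x, \alpha_l\rs$), and reducing the arbitrary open sets $U_i \subseteq \bC$ to metric balls so that a single radius $\epsilon = \min_i \epsilon_i$ works simultaneously for all $n$ indices.
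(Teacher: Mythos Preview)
Your argument is correct and is exactly the standard textbook verification that the sets $V_{\epsilon,(\alpha_i)}(x_0)$ form a local basis for the initial topology generated by the evaluation functionals $E_\alpha$. The paper itself does not prove this proposition: it is stated in the annex (Section~\ref{sec:weaks}) as a reminder of a classical fact about the \weaks topology, with no proof given, so there is nothing to compare against beyond noting that your write-up supplies precisely the routine justification the paper omits. Your side remark that the $\alpha_i$ should range over the predual $\TestSpace$ rather than $\AmbientSpace$ is also well taken; the ``$\alpha_i \in \AmbientSpace$'' in the paper's statement is a typo.
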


Continuity is defined as follows.

\begin{definition}[Weak* continuity]
Let $A : \AmbientSpace \to \bR^d$. $A$ is \weaks continuous at $x_0$ if for any $\epsilon >0$, there is a neighborhood $V(x_0)$ for the \weaks topology of $x_0$ such that 

\begin{equation}
 x \in V(x_0) \implies \|f(x)-f(x_0)\|_2 < \epsilon.
\end{equation}
\end{definition}

\subsection{Proofs for Section~\ref{sec:gradient_Hessian}}\label{sec:proof_grad}

\begin{proof}[Proof of Proposition \ref{prop:gradient}]
  Let $z \in \bC^m$, and let  $z^H$ be the Hermitian conjugate of $z$. Remark that $\theta \to A\phi(\theta)$ is a Gateaux differentiable function $\bR^d \to \bC^m$. We have
\begin{equation}
\begin{split}
  \frac{\partial g(\theta)}{\partial \theta_i } &= \left(\frac{\partial [ A\phi](\theta) }{\partial \theta_i }\right)^H  A \phi(\theta) +  (A \phi(\theta))^H\frac{\partial [ A\phi](\theta) }{\partial \theta_i }  - 2 \re \ls  \frac{\partial [ A\phi](\theta)}{\partial \theta_i } , y\rs \\
   &=    2\re \left(\frac{\partial [ A\phi](\theta) }{\partial \theta_i }\right)^H  A \phi(\theta)  - 2 \re \ls  \frac{\partial [ A\phi](\theta)}{\partial \theta_i } , y\rs \\
   &=    2 \re \ls  \frac{\partial [ A\phi](\theta)}{\partial \theta_i } , A \phi(\theta)-y\rs, \\
\end{split}
\end{equation}
By linearity  and continuity of $A$ :  $  \frac{\partial[ A\phi](\theta)}{\partial \theta_i } = A \frac{\partial \phi(\theta)}{\partial \theta_i } $
 and 
 \begin{equation}
\begin{split}
  \frac{\partial g(\theta)}{\partial \theta_i } 
   &=    2 \re \ls  A\frac{\partial \phi(\theta)}{\partial \theta_i } , A \phi(\theta)-y\rs. \\
\end{split}
\end{equation}
\end{proof}

\begin{proof}[Proof of Proposition \ref{prop:Hessian}]
Use Proposition~\ref{prop:gradient} with the properties of the Hermitian product. 
\end{proof}

\subsection{Proof for Section~\ref{sec:secant_RIP}} \label{sec:proof_secant}

\begin{proof}[Proof of Lemma \ref{lem:RIP_derivative}]
  Let $\nu = \partial_u \phi(\theta) \in \overline{\sS(\Sigma)}$. Using the fact that $\Theta$ is an open set,  let  $\nu_n =\frac{\phi(\theta + |h_n| u)-\phi(\theta )}{|h_n|}$ with $|h_n| \to 0$ be a  sequence of  real numbers. Since $\Sigma$ is a cone, $\nu_n  \in \sS(\Sigma)$. Thanks to the compatibility assumption~\ref{assum:k_norm} on $\|\cdot\|_\sH$, $\|\nu_n\|_\sH$ converges to $\|\partial_u\phi(\theta)\|_\sH$. Moreover, by definition  $\nu_n \tos \partial_u\phi(\theta) \in \overline{\sS(\Sigma)}$, and by continuity of $A$, we have that $A \nu_n \to A \nu$ w.r.t to  $\|\cdot\|_2$. Using the hypothesis,
 \begin{equation}
\begin{split}
 (1-\RIPcst)\left\| \nu_n \right\|_\sH^2 \leq \left\|A\nu_n\right\|_2^2   \leq (1+\RIPcst)\left\|\nu_n \right\|_\sH^2.  \\
\end{split}
 \end{equation}
 Taking both inequalities to the limit yields the result. 
\end{proof}
 
\subsection{Proofs for Section~\ref{sec:control_Hessian}}\label{sec:proof_control}

\begin{proof}[Proof of Lemma \ref{lem:control_Hessian}]
We write $H = F+G$ as in Proposition~\ref{prop:Hessian}.

Let $u \in \bR^d$, with the linearity of $A$ and the linearity of the Gateaux differential, we have $ \sum_i  u_i \frac{\partial \phi(\theta)}{\partial \theta_i }= \partial_u \phi(\theta) $ and 
\begin{equation}\label{eq:proof_control_0}
 \begin{split}
  u^TGu = \sum_{i,j} u_i u_j G_{i,j} &=  \sum_{i,j} u_i u_j  2 \re \ls A \frac{\partial  \phi(\theta)}{\partial \theta_i } ,A \frac{\partial\phi(\theta)}{\partial \theta_j }\rs  \\
  & =  2 \re \ls A\sum_i  u_i \frac{\partial \phi(\theta)}{\partial \theta_i } , A \sum_j u_j\frac{\partial \phi(\theta)}{\partial \theta_j }\rs  \\
  &= 2 \left\| A \sum_i  u_i \frac{\partial \phi(\theta)}{\partial \theta_i } \right\|_2^2 \\
    &= 2 \left\| A \partial_u \phi(\theta) \right\|_2^2. \\
 \end{split}
\end{equation}

By definition of the generalized secant, we have $\partial_u \phi(\theta) \in  \overline{\sS(\Sigma)}$. Using the RIP of $A$ with Lemma~\ref{lem:RIP_derivative}, we get

\begin{equation}\label{eq:proof_control_1}
 \begin{split}
 2(1-\RIPcst) \|  \partial_u \phi(\theta)\|_\sH^2 \leq      u^TGu  \leq  2(1+\RIPcst) \|  \partial_u \phi(\theta)\|_\sH^2 . \\
 \end{split}
\end{equation}

Moreover, with the Cauchy-Schwarz and triangle inequalities,
\begin{equation}
 \begin{split}
  |u^TFu| & = |2\ls A\partial_u^2 \phi(\theta),A \phi(\theta)-y\rs| \\
  &\leq 2 \|  A\partial_u^2 \phi(\theta)\|_2  (\|A \phi(\theta) -A\phi(\theta^*)\|_2 + \|A \phi(\theta^*) -y\|_2 ).\\  
   \end{split}
\end{equation}
By definition of $\theta^*$, $ \|A \phi(\theta^*) -y\|_2 \leq \|A x_0 -y\|_2 =\|e\|_2$. With the RIP, 
\begin{equation}\label{eq:proof_control_2}
 \begin{split}
 |u^TFu|  &\leq 2 \|  A\partial_u^2 \phi(\theta)\|_2  (\sqrt{1+\RIPcst}\|\phi(\theta) -\phi(\theta^*)\|_\sH +\|e\|_2 ).\\
 \end{split}
\end{equation}
Combining~\eqref{eq:proof_control_1} and~\eqref{eq:proof_control_2} gives the lower bound. Combining the last equality of~\eqref{eq:proof_control_0} and~\eqref{eq:proof_control_2} gives the upper bound. 

\end{proof}

\begin{proof}[Proof of Theorem~\ref{th:control_general}]

  We start the proof by showing that given $\theta \in \Lambda_{\beta}$, there exists a global minimizer $\tilde{\theta} \in \Theta$ such that $g$ is convex on the segment $[\tilde{\theta},\theta]$. We then show the Lipschitz property of the gradient. These two facts lead to the proof of the stability of iterates in $\Lambda_{\beta}$ and permit to show the convergence of the fixed step gradient descent.
  \bigskip

\noindent\textbf{Convexity property.} Let $\theta \in \Lambda_{\beta}$ and $\tilde{\theta} \in p(\theta, \theta^*)$ (see Equation~\eqref{eq:distance_phi}) so that $u := \tilde{\theta} - \theta$ satisfies $\|u\|_2<\beta$.
By hypothesis, we can assume that $\tilde{\theta}$ satisfies~\eqref{eq:control_ratio}.
Let also $t \in [0,1]$.
We have  ${\|\theta+ tu -  \tilde{\theta}\|_2 = (1-t) \|\theta - \tilde{\theta}\|_2  < \beta}$. This implies $\theta+t u \in \Lambda_\beta$ and $\phi(\theta+t u)  \subset \Sigma$.
Denoting by $H_{\theta+t u}$ the Hessian of $g$ at $\theta+t u$, Lemma~\ref{lem:control_Hessian} and technical assumption~\ref{assum:technical} number 2 imply that
\begin{equation}
\begin{split}
 u^TH_{\theta +t u} u &\geq  2(1-\RIPcst) \|  \partial_u \phi(\theta+t u)\|_\sH^2 -  2\|  A\partial_u^2 \phi(\theta+t u)\|_2  (\sqrt{1+\RIPcst} \|\phi(\theta+t u) -\phi(\theta^*)\|_\sH +\|e\|_2 ) \\
 & \geq 2 (1-\RIPcst) \|  \partial_u \phi(\theta+t u)\|_\sH^2 -  2\|  A\partial_u^2 \phi(\theta +t u)\|_2  (\sqrt{1+\RIPcst} C_{\phi,\theta^*} d(\theta+t u , \theta^*) +\|e\|_2 ) \\
 & \geq 2 (1-\RIPcst) \|  \partial_u \phi(\theta+t u)\|_\sH^2 -  2\|  A\partial_u^2 \phi(\theta +t u)\|_2  (\sqrt{1+\RIPcst} C_{\phi,\theta^*} \|\theta+t u-\tilde{\theta}\|_2+\|e\|_2 ) \\
 & \geq  2(1-\RIPcst) \|  \partial_u \phi(\theta +t u)\|_\sH^2 -  2\|  A\partial_u^2 \phi(\theta +t u)\|_2  (\sqrt{1+\RIPcst} C_{\phi,\theta^*}\beta+\|e\|_2 ) .
\end{split}
\end{equation}
Therefore, using  Hypothesis~\eqref{eq:control_ratio}, we get that $u^TH_{\theta +t u} u \geq 0$ for all $t \in [0,1]$, which guarantees the convexity of $g$ on $[\tilde{\theta},\theta]$. 
Since $g$ is Gateaux differentiable, the properties of one-dimensional convex functions give
\begin{equation}\label{eq:control_convexity}
  g(\tilde{\theta}) -g(\theta)  \geq \ls \nabla g(\theta)  , \tilde{\theta} -\theta\rs.
\end{equation}
 While $g$ might not be convex in $\Lambda_{\beta}$ due to indeterminacies, this generalized convexity inequality  is enough to prove the convergence of the gradient descent, if we have the required Lipschitz nature of the gradient of $g$, which we address now. 
\bigskip

 \noindent\textbf{Lipschitz gradient property.}
Let $u$ be a vector such that $\|u\|_2=1$. Let $\theta \in \Lambda_{2\beta}$ and $\tilde{\theta} \in p(\theta, \theta^*)$ such that $\|\theta-\tilde{\theta}\|<2\beta$.

With Lemma~\ref{lem:control_Hessian} and technical assumption~\ref{assum:technical} number 4, we get the upper control 
\begin{equation}
\begin{split}
 u^TH_\theta u &\leq  2\| A \partial_u \phi(\theta) \|_2^2 +  2\|  A\partial_u^2 \phi(\theta)\|_2  (\sqrt{1+\RIPcst} C_{\phi,\theta^*}2\beta+\|e\|_2 ). \\  
 &\leq  2\| A \partial_u \phi(\theta) \|_2^2 +  2M_2 (\sqrt{1+\RIPcst} C_{\phi,\theta^*}2\beta+\|e\|_2 ). \\  
\end{split}
\end{equation}
Using again technical assumptions~\ref{assum:technical} number 4 and  3 give
\begin{equation}
\begin{split}
 \| A \partial_u \phi(\theta) - A \partial_u \phi(\tilde{\theta})\|_2 &\leq  M_2 \| \theta -\tilde{\theta} \|_2 \\
 \| A \partial_u \phi(\theta) \|_2 &\leq \|  A \partial_u \phi(\tilde{\theta})\|_2 + M_2 \| \theta - \tilde{\theta} \|_2\\
 &\leq \|  A \partial_u \phi(\tilde{\theta})\|_2 + M_2 2\beta\\ 
 &\leq M_1 + 2 M_2 \beta.
\end{split}
 \end{equation}
  
We conclude that
\begin{equation}
\begin{split}
 u^TH_\theta  u 
 &\leq 2 \left(M_1 + 2M_2 \beta \right)^2 +2M_2 (\sqrt{1+\RIPcst} C_{\phi,\theta^*}2\beta+\|e\|_2 ) \\  
\end{split}
\end{equation}
and finally
\begin{equation}
 L := \sup_{\theta \in \Lambda_{2\beta}}\sup_{u : \|u\|_2=1}  u^TH_\theta u < +\infty .
\end{equation}

 Hence the  gradient  of $g$ is $L$-Lipschitz on $\Lambda_{2\beta}$. 
 \bigskip
 
 \noindent\textbf{Stability of iterates.} Let $(\theta_n)_{n\geq 0}$ be the sequence of iterates of the gradient descent with fixed step $\tau$ and $\theta_0 \in \Lambda_\beta$. Since $\nabla g$ is $L$-Lipschitz on $\Lambda_{\beta}$ and  $\|\nabla g(\tilde{\theta})\|_2  < \infty$  (using the fact that $\left|\frac{\partial g (\tilde{\theta})}{\partial \theta_i}\right| \leq 2 \|A \frac{\partial \phi (\tilde{\theta})}{\partial \theta_i}\|_2\|e\|_2 \leq 2M_1 \|e\|_2 $),  we have $\sup_{\Lambda_{\beta}} \|\nabla g(\theta)\|_2 < \infty$. We can thus set the descent step
 \begin{equation}
   \tau < \min \left(\frac{1}{L} , \ \frac{\beta}{\sup_{\Lambda_{\beta}} \|\nabla g(\theta)\|_2} \right) .
 \end{equation}
 For each $n$, we introduce $\tilde{\theta}_n \in p(\theta_n, \theta^*)$ which satisfies~\eqref{eq:control_ratio} in order to have the convexity property shown above. Suppose $\theta_n \in \Lambda_\beta$. 
 We then have
 \begin{equation}
   d(\theta_{n+1} , \theta^*) 
   \leq \|\theta_{n+1} - \tilde{\theta}_n \|_2
   \leq \|\theta_n - \tilde{\theta}_n \|_2 +\|\tau \nabla g(\theta_n)\|_2
   < \beta +\tau \sup_{\Lambda_{\beta}} \|\nabla g(\theta)\|_2
   \leq 2\beta.
 \end{equation}
 This proves that $\theta_{n+1} \in \Lambda_{2 \beta}$ and, using technical assumption~\ref{assum:technical} number 1, that $\phi(\theta_{n+1}) \in \Sigma$. These last inequalities also give $[\theta_n, \theta_{n+1}] \subset \Lambda_{2 \beta}$ (by replacing $\tau$ by $\tau'$ such that $0 \leq\tau' \leq \tau$).
 
As shown in the first part of the proof, $g$ is convex on $[\theta_n, \tilde{\theta}_n]$, and thus
  \begin{equation}
\begin{split}
  g(\theta_{n+1}) -  g(\tilde{\theta}_n)
  & =   g(\theta_{n+1}) - g(\theta_n) + g(\theta_n )-  g(\tilde{\theta}_n)\\
& \leq  g(\theta_{n+1}) - g(\theta_n)           +   \ls \nabla g(\theta_n),\theta_n-\tilde{\theta}_n \rs   
\end{split}
  \end{equation}
  
  Since $g$ is twice Gateaux differentiable with $L$-Lipschitz gradient on $\Lambda_{2\beta}$,
  we can use the second-order Taylor inequality on the segment $[\theta_n, \theta_{n+1}] \subset \Lambda_{2\beta}$, which gives
  \begin{equation} \label{eq:descent_inequality}
    g(\theta_{n+1}) - g(\theta_n) \leq  - \tau \ls \nabla g(\theta_n),\nabla g (\theta_n) \rs   +\frac{ L}{2} \|\tau \nabla g(\theta_n) \|_2^2
    =  \left( \frac{ \tau^2L}{2} - \tau \right) \| \nabla g(\theta_n)\|_2^2 .
  \end{equation}
Since $\phi(\theta_{n+1}) \in \Sigma$ and $\tilde{\theta}_n$ is a global minimizer, we get
  \begin{equation}
 \ls \nabla g(\theta_n), \theta_n-\tilde{\theta}_n \rs  -  (\tau - \frac{ \tau^2L}{2} )\| \nabla g(\theta_n)\|_2^2 \geq  g(\theta_{n+1}) - g(\tilde{\theta}_n) \geq 0 .
  \end{equation}

Now,
\begin{equation}
d(\theta_{n+1}, \theta^*)^2 \leq \|\theta_{n+1}-\tilde{\theta}_n\|_2^2 =  \|\theta_{n}-\tilde{\theta}_n\|_2^2 - 2 \ls\theta_{n}-\tilde{\theta}_n , \tau \nabla g(\theta_n) \rs +\tau^2\|\nabla g(\theta_n)\|_2^2.
\end{equation}
Plugging the previous inequality gives
\begin{equation}
\begin{split}
\|\theta_{n+1}-\tilde{\theta}_n\|_2^2&\leq \|\theta_{n}-\tilde{\theta}_n\|_2^2   -  (2\tau^2 -  \tau^3L )\| \nabla g(\theta_n) \|_2^2 +\tau^2\|\nabla g(\theta_n)\|_2^2\\
&= \|\theta_{n}-\tilde{\theta}_n\|_2^2   - c_0 \| \nabla g(\theta_n) \|_2^2 
\end{split}
\end{equation}
where $c_0 = \tau^2(1 - L\tau) > 0$.
Because $\tau < 1/L$, we deduce that $d(\theta_{n+1},\theta^*) < d(\theta_n, \theta^*)$ and $\theta_{n+1} \in \Lambda_{\beta}$.
By induction, we get that the iterates stay in $\Lambda_{\beta}$ because $\theta_0 \in \Lambda_\beta$.
\bigskip

\noindent\textbf{Convergence of $g(\theta_n)$}.
We use the same notation $\theta_n, \tilde{\theta}_n$ as in the last paragraph.
Using again~\eqref{eq:descent_inequality}, we get
\begin{equation}
\begin{split}
  g(\theta_{n+1})-g(\theta^*) &= g(\theta_n) - g(\theta^*) + g(\theta_{n+1}) - g(\theta_n) \\
  &\leq g(\theta_n)- g(\theta^*) -(\tau - \frac{L \tau^2}{2})\|\nabla g(\theta_n)\|_2^2\\
\end{split}
\end{equation}
Remark that $\tau - \frac{L \tau^2}{2} = \tau(1-L\tau/2) \geq \tau/2 >0$.
Using the convexity on $[\theta_n, \tilde{\theta}_n]$ and the Cauchy-Schwarz inequality, we get
\begin{equation}
\begin{split}
g(\theta_{n})- g(\theta^*)=g(\theta_{n})- g(\tilde{\theta}_n)& \leq \ls\nabla g(\theta_n), \theta_n-\tilde{\theta}_n\rs \\
& \leq \|\nabla g(\theta_n)\|_2\| \theta_n-\tilde{\theta}_n \|_2 \\
&\leq \|\nabla g(\theta_n)\|_2\beta.\\
\end{split}
\end{equation}
We get 
\begin{equation}
\begin{split}
g(\theta_{n+1})-g(\theta^*)& \leq g(\theta_n)- g(\theta^*)  -(\tau - \frac{L \tau^2}{2}) \frac{(g(\theta_{n})- g(\theta^*))^2}{\beta^2}\\
& =g(\theta_n)- g(\theta^*)-c_1 (g(\theta_n)- g(\theta^*))^2\\
\end{split}
\end{equation}
with  $c_1 :=(\tau - \frac{L \tau^2}{2}  )  \frac{1}{\beta^2}$. 

Let $d_n =  g(\theta_n)- g(\theta^*)\geq 0$, using $d_{n+1}\leq d_n$, we have that
\begin{equation}
\begin{split}
\frac{d_{n+1}}{d_n} &\leq  1 - c_1 d_n \\
\frac{1}{d_n}& \leq \frac{ 1}{d_{n+1}} - c_1 \frac{d_n}{d_{n+1}} \\
c_1 \leq c_1  \frac{d_n}{d_{n+1}}& \leq \frac{ 1}{d_{n+1}} -\frac{1}{d_n}
\end{split}
\end{equation}
We sum this inequality for  $0, \ldots,n-1$  and get 
\begin{equation}\label{eq:rate_convergence}
\begin{split}
n c_1  & \leq \frac{ 1}{d_n} -\frac{1}{d_0}  \leq \frac{ 1}{d_n} \\
d_n &\leq \frac{1}{c_1 n}.
\end{split}
\end{equation}
\end{proof}

\begin{proof}[Proof of Corollary~\ref{cor:control_general}]
Let $\beta = \min (\beta_1,\beta_2)$, as $\Lambda_{2\beta} \subset \Lambda_{2\beta_1}$, the whole technical assumption~\ref{assum:technical} required for Theorem~\ref{th:control_general} is verified. 

Let $\theta \in \Lambda_\beta \subset \Lambda_{\beta_1}$ and let us consider the unique $\tilde{\theta} \in p(\theta,\theta^*)$.
Using the last hypothesis, we have for all $z \in [\theta,\tilde{\theta}]$ that
 \begin{equation} 
\frac{1}{C_{\phi,\theta^*}}
     \frac{ (1-\RIPcst) \|  \partial_{\tilde{\theta}-\theta} \phi(z)\|_\sH^2 }{\sqrt{1+\RIPcst}\| A\partial_{\tilde{\theta}-\theta}^2 \phi(z)\|_2  } - \frac{1}{C_{\phi,\theta^*} \sqrt{1+\RIPcst}}\|e\|_2 \geq  \beta_2 \geq \beta.
   \end{equation}
   which implies the last hypothesis of Theorem~\ref{th:control_general}
   \begin{equation} 
     \frac{ (1-\RIPcst) \|  \partial_{\tilde{\theta}-\theta} \phi(z)\|_\sH^2 }{\sqrt{1+\RIPcst}\| A\partial_{\tilde{\theta}-\theta}^2 \phi(z)\|_2  }
    \geq C_{\phi,\theta^*} \beta+ \frac{1}{\sqrt{1+\RIPcst}}\|e\|_2 .
   \end{equation}
   
\end{proof}

\begin{proof}[Proof of Corollary~\ref{cor:recovery_guarantees}]
Consider $x_0 \in \Sigma$, and $(\theta_n)_{n\geq 0}$ a sequence of iterates of the gradient descent converging to $\theta^*$ under the hypotheses of Theorem~\ref{th:control_general}. Using the RIP, we have 
\begin{equation}
 \sqrt{1-\gamma}\|\phi(\theta_n) -x_0\|_\sH \leq \|A\phi(\theta_n) -A x_0\|_2 \leq \|A\phi(\theta_n) -y\|_2 + \|e\|_2 
  \end{equation}

  Using the rate of convergence of $g(\theta_n)$ (inequality \eqref{eq:rate_convergence}), we have 
  \begin{equation}
  \begin{split}
    \|A\phi(\theta_n) -y\|_2^2 &\leq \|A\phi(\theta^*) -y\|_2^2 +O\left(\frac{1}{n}\right).\\
     \end{split}
  \end{equation}
  
  This gives, using the inequality $(a+b)^2 \leq 2(a^2 +b^2)$,
 \begin{equation}
 \begin{split}
 \|\phi(\theta_n) -x_0\|_\sH^2 &\leq \frac{1}{1-\gamma}(\sqrt{\|A\phi(\theta^*) -y\|_2^2 +O\left(\frac{1}{n}\right)}+ \|e\|_2 )^2 \\
 &\leq  \frac{2}{1-\gamma}(\|A\phi(\theta^*) -y\|_2^2  +O\left(\frac{1}{n}\right) + \|e\|_2^2 ) \\
 &\leq\frac{2}{1-\gamma}(\|Ax_0 -y\|_2^2 +O\left(\frac{1}{n}\right)+ \|e\|_2^2 ) = \frac{4\|e\|_2^2}{1-\gamma}+O\left(\frac{1}{n}\right).
 \end{split}
 \end{equation}
\end{proof}

\subsection{Proofs for Section~\ref{sec:LR}}\label{sec:proof_LR}
The expression of the directional derivative is deduced from the fact that
\begin{equation}
\begin{split}
 \partial_{ij} ZZ^T &= (\partial_{ij} Z)Z^T + Z(\partial_{ij} Z)^T = (\partial_{ij} Z)Z^T  + ( (\partial_{ij} Z)Z^T)^T \\ 
 &= E_{ij}Z^T + ZE_{ji}\\
\end{split}
\end{equation}
where the family $\{ E_{ij} : 1 \leq i \leq p, 1 \leq j \leq r \}$ is the canonical basis of $\bR^{p \times r}$.
Furthermore
\begin{equation}
\begin{split}
 \partial_{ij} \partial_{U} ZZ^T &= UE_{ij}^T + E_{ij}U^T. \\ 
\end{split}
\end{equation}
% 
% \begin{lemma}\label{lem:lower_bound_LR}
%  Let $X \in \bR^{p \times r}$ Consider the linear map from $\bR^{p\times r} \to \bR^{p\times p}$, $\mathcal{A} : U \to UX^T +XU^T$ then $\sigma_{min}(\mathcal{A}) \geq 2\sigma_{min}( U) $
% \end{lemma}
% \begin{proof}
% First consider  $\mathcal{A}_0 : M \to M +M^T$. The kernel of  $\mathcal{A}_0 $ is exactly the space of skew-symmetric matrices as any matrix is the orthogonal sum of a symmetric and skew-symmetric matrix. We have  $\sigma_{min}(\mathcal{A}_0)^2 = \inf_{M\; \text{symmetric},\|M\|_F^2=1} \| M+M^T\|_F^2 = 4$. Now consider $ \mathcal{A}_1 : U \to UX^T$ and note $X_i$ and $U_i$ the rows of $X$ and $U$ .  We have  $\|XU^T\|_F^2 = \sum_{j} \sum_i |\ls X_i, U_j\rs|_2^2 = \sum_j \|XU_j^T\|_2^2  \geq  \sum_j  \sigma_{min}(X)^2 \|U_j\|_2^2 = \sigma_{min}(X)^2  \|U\|_F^2$
% 
% Finally we get 
% \begin{equation}
%  \sigma_{min}( \mathcal{A} ) = \sigma_{min}(  \mathcal{A}_0\mathcal{A}_1 ) \geq \sigma_{min}( \mathcal{A}_0 )\sigma_{min}( \mathcal{A}_1 ) = 2 \sigma_{min}(X) 
% \end{equation}
%  
% \end{proof}
\begin{lemma}\label{lem:lower_bound_tr1} 
Let $U \in \bR^{p \times r}$ and $M \in \bR^{r \times r}$. Suppose $M$ PSD, then 
\begin{equation}
\textup{\tr} (U M U^T) \geq \sigma_{min}(M) \|U\|_F^2.
\end{equation}
\end{lemma}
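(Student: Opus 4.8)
The plan is to reduce everything to the eigendecomposition of $M$ and exploit the orthogonal invariance of the Frobenius norm. Since $M$ is symmetric PSD, write $M = V D V^T$ with $V \in \mathcal{O}(r)$ and $D = \tdiag(\lambda_1,\dots,\lambda_r)$, where each $\lambda_i \geq 0$ is an eigenvalue of $M$; because $M$ is symmetric PSD its eigenvalues coincide with its singular values, so $\lambda_i \geq \sigma_{min}(M)$ for every $i$.

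Next I would substitute this into the trace. Setting $W := UV \in \bR^{p\times r}$, cyclicity of the trace gives
\begin{equation}
\tr(UMU^T) = \tr\big(U V D V^T U^T\big) = \tr\big(W D W^T\big) = \sum_{i=1}^p \sum_{j=1}^r \lambda_j W_{ij}^2 .
\end{equation}
Since $\lambda_j \geq \sigma_{min}(M) \geq 0$ and $W_{ij}^2 \geq 0$, each term satisfies $\lambda_j W_{ij}^2 \geq \sigma_{min}(M) W_{ij}^2$, hence
\begin{equation}
\tr(UMU^T) \geq \sigma_{min}(M) \sum_{i,j} W_{ij}^2 = \sigma_{min}(M)\,\|W\|_F^2 .
\end{equation}
Finally, $\|W\|_F^2 = \|UV\|_F^2 = \tr(V^T U^T U V) = \tr(U^T U) = \|U\|_F^2$, using cyclicity of the trace and $V^T V = I$, which yields the claim.

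There is essentially no obstacle here: the only point requiring a word of justification is that $M$ PSD implies its eigenvalues equal its singular values (so that $\sigma_{min}(M) = \lambda_{min}(M)$), which is standard. An equivalent and slightly shorter route would be to write $\tr(UMU^T) = \tr(MU^TU)$ and invoke the general inequality $\tr(MN) \geq \lambda_{min}(M)\tr(N)$ valid for symmetric PSD matrices $M,N$ (itself proved by the same diagonalization of $M$ together with $\tr(N) \geq 0$ and positivity of the diagonal entries of $V^TNV$); I would pick whichever presentation the surrounding text favors.
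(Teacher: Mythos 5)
Your proof is correct and rests on the same idea as the paper's: reducing $\tr(UMU^T)$ to the bound $x^T M x \geq \sigma_{min}(M)\|x\|_2^2$ coming from the smallest eigenvalue of the symmetric PSD matrix $M$. The paper simply applies this bound row by row, writing $(UMU^T)_{i,i} = U_i M U_i^T \geq \sigma_{min}(M)\|U_i\|_2^2$ and summing over $i$, whereas you make the diagonalization $M = VDV^T$ explicit and use the orthogonal invariance $\|UV\|_F = \|U\|_F$ — a purely presentational difference.
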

\begin{proof}
 Let $U_i$ be the rows of $U$. We have 
 \begin{equation}
 \begin{split}
  (UMU^T)_{i,i} &= \sum_{l=1,r} U_{i,l} (MU^T)_{l,i} \\
  &=\sum_{l=1,r} U_{i,l}  \sum_{s=1,r} M_{l,s} U_{i,s} \\
   &=\sum_{l=1,r} \sum_{s=1,r}  U_{i,l}   M_{l,s} U_{i,s} \\
   &=  U_{i}   M U_{i}^T \\
   &\geq \sigma_{min}(M)  U_{i}  U_{i}^T \\
   &= \sigma_{min}(M)  \|U_{i}\|_2^2  \\
 \end{split}
 \end{equation}
Summing over $i$ yields the result.
 
\end{proof}

\begin{proof}[Proof of Theorem \ref{th:basin_LR}]
For a rank $r$ matrix $M$, we consider $\sigma_{max}(M) = \sigma_1(M) \geq ... \geq \sigma_r(M)=\sigma_{min}(M)$ the singular values of $M$ in decreasing order. We will use the following inequalities in this proof.
\begin{itemize}
  \item $\sigma_{max}(M)\leq \|M\|_F $
  \item $\|AB\|_F \leq \sigma_{max}(A) \|B\|_F$
\end{itemize}

We verify the hypotheses of Corollary~\ref{cor:control_general}.

\noindent\textbf{Technical assumption~\ref{assum:technical} number 1:} First, remark that for any $Z \in \bR^{p \times r}$, we have $\phi(Z) \in \Sigma_{r} $ (i.e. this assumption is verified for any $\beta$).

\noindent\textbf{Hypothesis 1:} Let $Z \in \Lambda_{\beta_{LR}}$.
Consider $H_0$ the unique solution of the orthogonal Procrustes problem $\min_{H \in \mathcal{O}(r)} \|Z_0H-Z\|_F$. Then it has been shown that $H_0 = QR^T$ where $Q$ and $R$ are obtained from the  singular value decomposition  $Z_0^{T}Z = Q \Delta R^T$  of $Z_0^{T}Z $ (see e.g. \cite[Proof of Lemma~5.7]{Tu_2016}).
It gives 
\begin{equation}
\begin{split}
Z^TZ_0H_0=  R \Delta Q^TQ  R^T = R \Delta R^T  =  RQ^T Q \Delta R^T = H_0^{T} Z_0^{T}Z.
\end{split}
\end{equation}
We get the projection $p(Z,Z_0) = \{Z_0H_0\}$. 

Denoting  $\tilde{Z} := Z_0H_0$, we have $Z^T\tilde{Z}=  \tilde{Z}^{T}Z$ and $ Z^T\tilde{Z}$ PSD. Notice that for $U =\tilde{Z}-Z$, we also have $\tilde{Z}^TU = U^T\tilde{Z}$ and $Z^TU =U^TZ$. Note also that for $1\leq i \leq r$,  $\sigma_i (\tilde{Z}) = \sigma_i (Z_0) $.

 \noindent\textbf{Technical assumption~\ref{assum:technical} number 2:} We calculate a constant $C_{\phi,Z_0}$. Using the triangle inequality on the opertor norm $\sigma_{max}$, the fact that $Z\in \Lambda_{\beta_{LR}} $ and the typical matrix norm inequality $\sigma_{max}(Z-\tilde{Z}) \leq \|Z-\tilde{Z}\|_F$, we have $\sigma_{max}(Z)\leq \sigma_{max}(\tilde{Z}) + \beta_{LR} \leq 2 \sigma_{max}(Z_0)$ and
 \begin{equation}
 \begin{split}
 \| ZZ^T -Z_0Z_0^T \|_F &= \| ZZ^T -\tilde{Z}\tilde{Z}^T \|_F \\
 &= \| Z(Z-\tilde{Z})^T   +  (Z- \tilde{Z})\tilde{Z}^T \|_F\\
 &\leq \| Z(Z-\tilde{Z})^T \|_F  +  \|(Z- \tilde{Z})\tilde{Z}^T \|_F\\
 &\leq (\sigma_{max}(Z)+\sigma_{max}(\tilde{Z})) \| (Z-\tilde{Z})\|_F \\
 &\leq 3 \sigma_{max}(Z_0) \| Z-\tilde{Z}\|_F.
 \end{split}
 \end{equation}
 
Hence we can set $C_{\phi,Z_0} = 3 \sigma_{max}(Z_0)$.   

\noindent\textbf{Technical assumption~\ref{assum:technical} number 3 and 4:} They come from the fact that  $g$ is infinitely differentiable on a bounded domain. 

\noindent\textbf{Hypothesis 3:} First note that, as  $Z \in \Lambda_{\beta_{LR}}$, we have  $\|Z-\tilde{Z}\|_F < \beta_{LR} \leq  \frac{\sigma_{min}(Z_0)}{8}$.  Hence, using Weyl's perturbation inequality, $\sigma_r(Z) \geq \sigma_r(\tilde{Z}) -\sigma_1(Z-\tilde{Z}) \geq \sigma_r(Z_0) -\|Z-\tilde{Z}\|_F  \geq \sigma_r(Z_0) - \frac{\sigma_r(Z_0)}{8} >0 $ and the rank of $Z$ is $r$.

As $\partial_U^2 \phi(Z)$ is an element of $\Sigma_{r}$, with the RIP on $\Sigma_{2r}$, 
\begin{equation}
\begin{split}
 \|A \partial_U^2 \phi(Z)\|_2 =  2\|A UU^T\|_2 &\leq  2\sqrt{1+\RIPcst}\|UU^T\|_F. \\
 \end{split}
\end{equation}

Now,  we  bound the ratio $\frac{\|  \partial_U \phi(Z+tU)\|_F^2 }{ \|A \partial_U^2 \phi(Z)\|_2} = \frac{\| U(Z + t U)^T +(Z + t U)U^T\|_F^2 }{2\|UU^T\|_F}$ for $U= \tilde{Z}-Z$  with $0 \leq t \leq 1$.
 
We have 
\begin{equation}
\begin{split}
\partial_{U} \phi(Z+tU) &=  U(Z + t U)^T +(Z + tU) U^T \\
 &=  UZ^T+Z  U^T  + 2t UU^T . \\
\end{split}
\end{equation}
For any $M_1,M_2 \in \bR^{p \times r}$ with $M_1 \neq 0$, we have that $ \|tM_1 +M_2\|_F^2$ is minimized for $t^* = -\frac{\ls M_1,M_2\rs_F}{\|M_1\|_F^2} $ and  $ \|t^*M_1 +M_2\|_F^2= \|M_2\|_F^2 -\frac{\ls M_1,M_2\rs_F^2}{\|M_1\|_F^2} $. Hence if $t^* \geq 1$, $\|tM_1 +M_2\|_F^2$ is minimized over $[0,1]$ at $t=1$,  at $t=0$ if  $t^* \leq 0$ and $t=t^*$ otherwise. Applying this to the case $M_1= 2UU^T$ and $M_2 = UZ^T+Z  U^T$.

\noindent\textbf{Case 1: $t^* \geq 1$.} 
\begin{equation}
\begin{split}
\min_{t\in [0,1]} \| U(Z + t U)^T +(Z + tU) U^T\|_F^2
&= \|U\tilde{Z}^T +\tilde{Z}U^T \|_F^2.
\end{split}
\end{equation}

\noindent\textbf{Case 2:  $t^* \leq 0$.} 
\begin{equation}
\begin{split}
\min_{t\in [0,1]} \| U(Z + t U)^T +(Z + tU) U^T\|_F^2
&= \|UZ^T +ZU^T \|_F^2.
\end{split}
\end{equation}

 We consider Case 1 and 2 together. Let $\bar{Z} = Z$ or  $\bar{Z} = \tilde{Z}$.  We  have

\begin{equation}
 \begin{split}
\frac{\|U\bar{Z}^T +\bar{Z}U^T \|_F^2}{2\|UU^T\|_F}&=   \frac{\|U\bar{Z}^T\|_F^2 + \|\bar{Z}U^T\|_F^2 + 2 \ls U\bar{Z}^T, \bar{Z}U^T\rs_F }{2\|UU^T\|_F} \\
&=   \frac{2\|\bar{Z}U^T\|_F^2 + 2 \ls \bar{Z}^T, U^T\bar{Z}U^T\rs_F }{2\|UU^T\|_F} .\\
 \end{split}
\end{equation}
Using the fact that $U^T\bar{Z}  = \bar{Z}^TU $,
\begin{equation}
 \begin{split}
\frac{\|U\bar{Z}^T +\bar{Z}U^T \|_F^2}{2\|UU^T\|_F}&=   \frac{\|\bar{Z}U^T\|_F^2 +  \ls \bar{Z}^T, \bar{Z}^TUU^T\rs_F }{\|UU^T\|_F} \\
&=   \frac{\|\bar{Z}U^T\|_F^2 +  \ls \bar{Z}\bar{Z}^T, UU^T\rs_F }{\|UU^T\|_F}. \\
 \end{split}
\end{equation}

Let $A_1,A_2 $ be two positive symmetric matrices (e.g $ZZ^T$ and $UU^T$), then ${\ls A_1,A_2 \rs_F \geq 0}$.
Indeed, let $A_2^{\frac{1}{2}}$ be a positive symmetric square root of $A_2$. Remark that the matrix $ A_2^{\frac{1}{2}} A_1^T A_2^{\frac{1}{2}} = A_2^{\frac{1}{2}} A_1^T (A_2^{\frac{1}{2}})^T$ is positive symmetric.  Hence its trace is positive and we have 
\begin{equation}
 \ls A_1,A_2 \rs_F = \tr (A_1^TA_2) = \tr (A_1^T A_2^{\frac{1}{2}}A_2^{\frac{1}{2}} ) =  \tr (A_2^{\frac{1}{2}} A_1^T A_2^{\frac{1}{2}}) \geq 0.
\end{equation}

We deduce  
\begin{equation}
 \begin{split}
\frac{\|\bar{Z}U^T + U\bar{Z}^T\|_F^2  }{2\|UU^T\|_F} &\geq  \frac{\|\bar{Z}U^T\|_F^2 }{\|UU^T\|_F} =
\frac{\tr (U\bar{Z}^T\bar{Z}U^T)  }{\|UU^T\|_F}. \\
 \end{split}
\end{equation}
But $\bar{Z}^T\bar{Z}$ is a full rank PSD matrix, hence, with Lemma~\ref{lem:lower_bound_tr1},
\begin{equation}
  \tr (U\bar{Z}^T\bar{Z}U^T) \geq \sigma_{min}(\bar{Z}^T\bar{Z}) \tr(UU^T) \geq  (\sigma_{min}(\tilde{Z})-\beta_{LR})^2\tr(UU^T).
\end{equation}
Let $\lambda_i(UU^T)$ denote the eigenvalues of $UU^T$, then 

\begin{equation}\label{eq:eq_LR1} 
 \begin{split}
 \frac{\|  \partial_U \phi(Z+tU)\|_F^2 }{\|\partial_U^2 \phi(Z+tU)\|_F}& \geq \frac{\|\bar{Z}U^T + U\bar{Z}^T\|_F^2 }{2\|UU^T\|_F}\\
&\geq (\sigma_{min}(Z_0)-\beta_{LR})^2 \frac{\sum_i \lambda_i (UU^T) }{\sqrt{\sum_i (\lambda_i(UU^T))^2}} \\
&\geq   (\sigma_{min}(Z_0)-\beta_{LR})^2  \\
&\geq  \left(\frac{7}{8}\sigma_{min}(Z_0)\right)^2 . \\
 \end{split}
\end{equation}

\noindent\textbf{Case 3: $0<t^* <1$.} 

Using the definition of $t^* $, we have 
\begin{equation}\label{eq:proof_LR0}
\begin{split}
 0 &< - \frac{\ls M_1,M_2 \rs_F}{\|M_1\|_F^2} <1  , \\
  0 &< - \frac{\ls UU^T,U\tilde{Z}^T +\tilde{Z}U^T-2UU^T\rs_F}{2\|UU^T\|_F^2} <1 , \\
  0 &< - \frac{\ls UU^T,U\tilde{Z}^T +\tilde{Z}U^T\rs_F}{2\|UU^T\|_F^2} +1 <1 ,  \\
  -1 &< - \frac{\ls UU^T,U\tilde{Z}^T +\tilde{Z}U^T\rs_F}{2\|UU^T\|_F^2}  <0 , \\
    0 &< \ls UU^T,U\tilde{Z}^T +\tilde{Z}U^T\rs_F<2 \|UU^T\|_F^2 . \\
\end{split}
\end{equation}

We calculate
\begin{equation}
 \begin{split}
 \|M_2\|_2^2 - \frac{|\ls M_1,M_2\rs_F|^2}{\|M_1\|_2^2}&= \|UZ^T+ZU^T\|_F^2 - \frac{|\ls UU^T,UZ^T+ZU^T\rs_F|^2}{\|UU^T\|_F^2} \\
&= \|U\tilde{Z}^T +\tilde{Z}U^T-2 UU^T\|_F^2 - \frac{|\ls UU^T,U\tilde{Z}^T +\tilde{Z}U^T-2UU^T\rs_F|^2}{\|UU^T\|_F^2}\\
&= \|U\tilde{Z}^T+\tilde{Z}U^T\|_F^2  -4 \ls UU^T , U\tilde{Z}^T +\tilde{Z}U^T \rs_F +  4\|UU^T\|_F^2 \\
&-  \frac{\left|\ls UU^T,U\tilde{Z}^T +\tilde{Z}U^T\rs_F -2\ls UU^T,UU^T\rs_F \right|^2}{\|UU^T\|_F^2} \\
&= \|U\tilde{Z}^T+\tilde{Z}U^T\|_F^2  -4 \ls UU^T , U\tilde{Z}^T +\tilde{Z}U^T \rs_F +  4\|UU^T\|_F^2 \\
&-  \frac{|\ls UU^T,U\tilde{Z}^T +\tilde{Z}U^T\rs_F|^2}{\|UU^T\|_F^2}  -  4\|UU^T\|_F^2 +4  \ls UU^T,U\tilde{Z}^T +\tilde{Z}U^T\rs_F\\
&=  2\|U\tilde{Z}^T \|_F^2 +2 \ls \tilde{Z}\tilde{Z}^T,UU^T \rs_F  - \frac{|\ls UU^T,U\tilde{Z}^T +\tilde{Z}U^T\rs_F|^2}{\|UU^T\|_F^2} .
 \end{split}
\end{equation}
Using the last inequality of \eqref{eq:proof_LR0}, we have 

\begin{equation}
 \begin{split}
 \|M_2\|_2^2 - \frac{|\ls M_1,M_2\rs_F|^2}{\|M_1\|_2^2} &\geq 2\|U\tilde{Z}^T \|_F^2 +2 \ls \tilde{Z}\tilde{Z}^T,UU^T \rs_F   - 4\|UU^T\|_F^2 .\\
 \end{split}
\end{equation}
Using Equation~\eqref{eq:eq_LR1} and the fact that $\|UU^T\|_F = \sqrt{\sum_i \sigma_i(U^T)^4 }\leq  \sum_{i} \sigma_i(U^T)^2 = \|U^T\|_F^2 \leq \beta_{LR}^2$, we conclude
\begin{equation}
 \begin{split}
 \frac{\|  \partial_U \phi(Z+tU)\|_F^2 }{\|\partial_U^2 \phi(Z+tU)\|_F}& \geq \frac{\|U\tilde{Z}^T \|_F^2 + \ls \tilde{Z}\tilde{Z}^T,UU^T \rs_F   - 2\|UU^T\|_F^2 }{\|UU^T\|_F}\\
&\geq\left(\frac{7}{8}\sigma_{min}(Z_0)\right)^2   - 2\|UU^T\|_F \\
&\geq  \left(\frac{7}{8}\sigma_{min}(Z_0)\right)^2   - 2 \beta_{LR}^2\\
&\geq   (\frac{49}{64}   - \frac{2}{64})(\sigma_{min}(Z_0))^2 = \frac{47}{64} (\sigma_{min}(Z_0))^2 >0.\\
 \end{split}
\end{equation}

This gives, for $\beta_2$ from Corollary~\ref{cor:control_general},
 \begin{equation}
 \begin{split}
  \beta_2 &= \frac{1}{C_{\phi,Z_0}}\inf_{Z \in \Lambda_{\beta_1}} \inf_{Y \in [Z, \tilde{Z}]}\left(
     \frac{ (1-\RIPcst) \|  \partial_{\tilde{Z}-Z } \phi(Y)\|_\sH^2 }{\sqrt{1+\RIPcst}\| A\partial_{\tilde{Z}-Z } ^2 \phi(Y)\|_2  }     \right) \\
   &\geq \frac{47}{64} \frac{  (1-\RIPcst)  (\sigma_{min}(Z_0))^2 }{3\sigma_{max}(Z_0)(1+\RIPcst)}\\
   &\geq \frac{1}{8} \frac{  (1-\RIPcst)  (\sigma_{min}(Z_0))^2 }{(1+\RIPcst)\sigma_{max}(Z_0)}= \beta_{LR} >0 ,
  \end{split}
 \end{equation}
which implies $\Lambda_{\beta_{LR}} \subset \Lambda_{\min (\beta_1,\beta_2)}$. Using Corollary~\ref{cor:control_general} yields the final result: the set $\Lambda_{\beta_{LR}}$ is a $g$-basin of attraction of $\theta^*$.

\end{proof}

 \subsection{Proofs for the GMM example} \label{sec:proof_GMM}
 We start by giving two lemma that bound derivatives of $\phi$.
 \begin{lemma}\label{lem:bound_kernel_norm_gaussian}
  Suppose $K(t) \propto e^{-\frac{1}{2}\lambda^2\|t\|_\Gamma^2}$. Then  there is a strictly positive constant $d_K$ such that $\|\partial_{w} \mu_{t_0}\|_K^2 \geq d_K \|w\|_2^2$ where $\partial_{w} \mu_{t_0}$ is the derivative of $\mu_{t_0}$ with respect to $t$ in the direction $w$. 
 \end{lemma}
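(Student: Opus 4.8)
The plan is to turn the statement into an elementary Gaussian computation. Using the translation invariance of the kernel metric $\|\cdot\|_K$ (the kernel depends only on $t-s$), I may assume $t_0 = 0$; this also makes the eventual bound uniform in $t_0$. By~\eqref{eq:GMM_kernel_norm_expr1}, $\partial_w\mu_0$ is the finite signed measure with density $\psi_w(t) = -(w^T\Gamma^{-1}t)\,e^{-\frac{1}{2}\|t\|_\Gamma^2}$, which is a Schwartz function; in particular $\|\partial_w\mu_0\|_K^2 = \int\!\int K(t-s)\,\psi_w(t)\,\psi_w(s)\,\id t\,\id s$ is an absolutely convergent integral given directly by the definition of $\|\cdot\|_K$.

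From here I would pass to the Fourier domain. Since $K$ is Gaussian, it is positive definite with $\widehat K \geq 0$, and Plancherel's theorem gives $\|\partial_w\mu_0\|_K^2 = c_0\int \widehat K(\xi)\,|\widehat{\psi_w}(\xi)|^2\,\id\xi$ for a positive constant $c_0$ depending only on the Fourier normalization. Writing $g(t) = e^{-\frac{1}{2}\|t\|_\Gamma^2}$ for the (unnormalized) density of $\mu_0$, one has $\psi_w = \partial_w g$, hence $\widehat{\psi_w}(\xi) = (i\,w^T\xi)\,\widehat g(\xi)$, while $\widehat g(\xi)\propto e^{-\frac{1}{2}\,\xi^T\Gamma\xi}$ and $\widehat K(\xi)\propto e^{-\frac{1}{2\lambda^2}\,\xi^T\Gamma\xi}$ are explicit Gaussians. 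Collecting the constants, this yields
\begin{equation}
\|\partial_w\mu_0\|_K^2 = c_1\int (w^T\xi)^2\,e^{-\beta\,\xi^T\Gamma\xi}\,\id\xi = c_2\,w^T\Gamma^{-1}w ,
\end{equation}
with $\beta = 1 + \frac{1}{2\lambda^2} > 0$ and $c_1, c_2 > 0$ depending only on $\lambda$ and $\Gamma$, because $\int \xi\xi^T e^{-\beta\,\xi^T\Gamma\xi}\,\id\xi$ is a positive multiple of the covariance $(2\beta)^{-1}\Gamma^{-1}$ of the associated centered Gaussian. Finally, using $\Gamma^{-1} \succeq \lambda_{\max}(\Gamma)^{-1} I$ I would conclude $\|\partial_w\mu_0\|_K^2 \geq (c_2/\lambda_{\max}(\Gamma))\,\|w\|_2^2 =: d_K\,\|w\|_2^2$ with $d_K > 0$, which is the claim.

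I do not expect a genuine obstacle; the only point requiring a little care is the justification of the Plancherel/Bochner representation of the seminorm $\|\cdot\|_K$ evaluated at $\partial_w\mu_0$, which is immediate here because $\psi_w$ is Schwartz (in particular in $L^1 \cap L^2$ with Gaussian decay). If one prefers to avoid Fourier analysis entirely, the identical conclusion follows by evaluating the double Gaussian integral $\int\!\int e^{-\frac{1}{2}\lambda^2\|t-s\|_\Gamma^2}(w^T\Gamma^{-1}t)(w^T\Gamma^{-1}s)\,e^{-\frac{1}{2}\|t\|_\Gamma^2 - \frac{1}{2}\|s\|_\Gamma^2}\,\id t\,\id s$ directly: completing the square in $(t,s)$ exhibits it, up to a positive normalizing constant, as $\mathbb{E}\big[(w^T\Gamma^{-1}t)(w^T\Gamma^{-1}s)\big]$ for a centered jointly Gaussian pair whose cross-covariance is a positive multiple of $\Gamma$, so again the quantity equals a positive multiple of $w^T\Gamma^{-1}w$, and the eigenvalue bound finishes the argument.
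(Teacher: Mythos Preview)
Your argument is correct. Both your Fourier route and the paper's proof ultimately show that $\|\partial_w\mu_{t_0}\|_K^2$ equals a positive constant (depending on $\lambda,\Gamma,p$) times $w^T\Gamma^{-1}w$, and then bound this below by $\lambda_{\max}(\Gamma)^{-1}\|w\|_2^2$. The difference is in how this Gaussian integral is evaluated.

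The paper works entirely in real space: starting from the double integral $\int\!\int e^{-\frac{1}{2}\lambda^2\|t-s\|_\Gamma^2}(w^T\Gamma^{-1}s)(w^T\Gamma^{-1}t)\,e^{-\frac{1}{2}\|s\|_\Gamma^2-\frac{1}{2}\|t\|_\Gamma^2}\,\id s\,\id t$, it completes the square in $s$ (for fixed $t$), integrates out $s$ to obtain a term proportional to $w^T\Gamma^{-1}t$ times a Gaussian in $t$, and then performs the $t$-integral after the change of variable $u=\Gamma^{-1/2}t$, computing each one-dimensional moment explicitly. Your alternative passes to the Fourier side via Bochner/Plancherel, where the convolution structure collapses to a single Gaussian integral $\int (w^T\xi)^2 e^{-\beta\,\xi^T\Gamma\xi}\,\id\xi$, which is recognized at once as $w^T(\text{positive multiple of }\Gamma^{-1})w$. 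Your second suggested route (viewing the double integral as a covariance of a jointly Gaussian pair) is essentially a compressed version of what the paper does. What you gain is brevity and a cleaner dependence on the constants; what the paper's approach buys is that it stays self-contained and avoids invoking Plancherel and the Fourier representation of the kernel seminorm, at the cost of a page of explicit integration. Your justification of the Fourier step (that $\psi_w$ is Schwartz and $K$ is a positive-definite Gaussian) is adequate.
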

 \begin{proof}
  Using Equation~\eqref{eq:GMM_kernel_norm_expr1},
 \begin{equation}
 \begin{split}
 \|\partial_{w} \mu_{t_0}\|_K^2&\propto \int_{\bR^p}\int_{\bR^p} e^{-\frac{1}{2}\lambda^2\|t-s\|_\Gamma^2}w^T\Gamma^{-1}s w^T\Gamma^{-1}te^{-\frac{1}{2}\|s-t_0\|_\Gamma^2}e^{-\frac{1}{2}\|t-t_0\|_\Gamma^2} \id t \id s \\ 
 \end{split}
 \end{equation}
 Using the change of variables $t -t_0 \to t$ and $s-t_0 \to s$, we get 
 \begin{equation}
 \begin{split}
 \|\partial_{w} \mu_{t_0}\|_K^2&\propto\int_{\bR^p}\int_{\bR^p} e^{-\frac{1}{2}\lambda^2\|t-s\|_\Gamma^2}w^T\Gamma^{-1}s w^T\Gamma^{-1}te^{-\frac{1}{2}\|s\|_\Gamma^2}e^{-\frac{1}{2}\|t\|_\Gamma^2} \id t \id s \\ 
  \end{split}
 \end{equation}
 
 Rewrite 
 \begin{equation}\label{eq:change_var_GMM1}
 \begin{split} 
  \lambda^2\|s-t\|_\Gamma^2 + \|s\|_\Gamma^2 &= (1+\lambda^2) \|s\|_\Gamma^2 - 2\lambda^2 \ls s,t\rs +\lambda^2\|t\|_\Gamma^2\\
  &= (1+\lambda^2) \left( \|s\|_\Gamma^2 - 2\frac{\lambda^2}{1+\lambda^2} \ls s,t\rs +\frac{\lambda^2}{1+\lambda^2}\|t\|_\Gamma^2 \right) \\
  &= (1+\lambda^2) \left( \left\|s- \frac{\lambda^2}{1+\lambda^2} t \right\|_\Gamma^2  +(1-\frac{\lambda^2}{1+\lambda^2})\frac{\lambda^2}{1+\lambda^2}\|t\|_\Gamma^2\right)\\
    &= (1+\lambda^2)  \left\|s- \frac{\lambda^2}{1+\lambda^2} t\right\|_\Gamma^2  + \frac{\lambda^2}{1+\lambda^2}\|t\|_\Gamma^2.\\
  \end{split}
 \end{equation}
 This leads to
 \begin{equation}
 \begin{split} 
  \|\partial_{w} \mu_{t_0}\|_K^2&\propto \int_{\bR^p}\int_{\bR^p} w^T\Gamma^{-1}s w^T\Gamma^{-1}te^{-\frac{1}{2}(1+\lambda^2)\|s- \frac{\lambda^2}{1+\lambda^2}t\|_\Gamma^2}e^{-\frac{1}{2}\left(\|t\|_\Gamma^2+ \frac{\lambda^2}{1+\lambda^2} \|t\|_\Gamma^2\right) } \id t \id s \\
    &= \int_{\bR^p}\int_{\bR^p} w^T\Gamma^{-1}s w^T\Gamma^{-1}te^{-\frac{1}{2}(1+\lambda^2)\|s- \frac{\lambda^2}{1+\lambda^2}t\|_\Gamma^2}e^{- \frac{1}{2}\frac{1+2\lambda^2}{1+\lambda^2} \|t\|_\Gamma^2 } \id t \id s .
\end{split}
 \end{equation}
 Let $h(s) = e^{-\frac{1}{2}(1+\lambda^2)\|s- \frac{\lambda^2}{1+\lambda^2}t\|_\Gamma^2}$,
 \begin{equation}
 \begin{split}
 \int_s w^T\Gamma^{-1}s e^{-\frac{1}{2}(1+\lambda^2)\|s-  \frac{\lambda^2}{1+\lambda^2}t\|_\Gamma^2}\id s &= \int_s w^T\Gamma^{-1} (s- \frac{\lambda^2}{1+\lambda^2}t)h(s)\id s  \\
&+ \frac{\lambda^2}{1+\lambda^2}w^T\Gamma^{-1}t \int_s  h(s)\id s \\
 &= \int_s \partial_{w}h(s)\id s  + \frac{\lambda^2}{1+\lambda^2} w^T\Gamma^{-1}t \mathbf{C}\left(\frac{\Gamma}{1+\lambda^2}\right)
   \end{split}
 \end{equation}
 where $\mathbf{C}\left(\frac{\Gamma}{1+\lambda^2}\right)$ is the normalization constant of the Gaussian of covariance  $\frac{\Gamma}{1+\lambda^2}$. By linearity 
 \begin{equation}
 \begin{split}
  \int_s \partial_{w}h(s)\id s &= \ls w, \int_s \nabla h(s)\id_s \rs = 0.
   \end{split}
 \end{equation}
We make the change of variable $ u =\Gamma^{-1/2}t$
 \begin{equation}
 \begin{split}
 \|\partial_{w} \mu_{t_0}\|_K^2&\propto   \frac{\lambda^2}{1+\lambda^2} \mathbf{C}\left(\frac{\Gamma}{1+\lambda^2}\right) |\det (\Gamma^{-1/2})|  \int_u | w^T\Gamma^{-1/2}u|^2  e^{-\frac{1}{2} \frac{1+2\lambda^2}{1+\lambda^2}\|u\|_2^2} \id u .
   \end{split}
 \end{equation}
With the linearity of the integral, the change of variable $u \to \sqrt{\frac{1+\lambda^2}{1+2\lambda^2}}u$ and the fact that $\int_\bR x^2e^{-\frac{1}{2}x^2} \id x = \sqrt{2 \pi}$,
 \begin{equation}
 \begin{split}
 \|\partial_{w} \mu_{t_0}\|_K^2&\propto  |\det (\Gamma^{-1/2})|  \frac{\lambda^2}{1+\lambda^2} \mathbf{C}\left(\frac{\Gamma}{1+\lambda^2}\right) \sum_i \int_u |u_i (\Gamma^{-1/2}w)_i|^2 e^{-\frac{1}{2} \frac{1+2\lambda^2}{1+\lambda^2}\|u\|_2^2} \id t \\
 & + \sum_{i\neq j} \int_u u_i (\Gamma^{-1/2}w)_i u_j (\Gamma^{-1/2}w)_j e^{-\frac{1}{2} \frac{1+2\lambda^2}{1+\lambda^2}\|u\|_2^2} \id u) \\
&=  |\det (\Gamma^{-1/2})|   \frac{\lambda^2}{1+\lambda^2} \mathbf{C}\left(\frac{\Gamma}{1+\lambda^2}\right) \left( \sqrt{\frac{1+\lambda^2}{1+2\lambda^2}}\right)^{p}\\
& \times \sum_i \int_u  \frac{1+\lambda^2}{1+2\lambda^2} |u_i (\Gamma^{-1/2}w)_i|^2 e^{-\frac{1}{2} \|u\|_2^2} \id u \\
&=  |\det (\Gamma^{-1/2})|   \frac{\lambda^2}{1+2\lambda^2} \mathbf{C}\left(\frac{\Gamma}{1+\lambda^2}\right) \left( \sqrt{\frac{1+\lambda^2}{1+2\lambda^2}}\right)^{p} \sqrt{2 \pi}^{p} \sum_i |(\Gamma^{-1/2}w)_i|^2  \\
&=  |\det (\Gamma^{-1/2})|   \frac{\lambda^2}{1+2\lambda^2} \mathbf{C}\left(\frac{\Gamma}{1+\lambda^2}\right)  \left( \sqrt{\frac{1+\lambda^2}{1+2\lambda^2}}\right)^{p}\sqrt{2 \pi}^{p} \|\Gamma^{-1/2}w\|_2^2  \\
&\geq  |\det (\Gamma^{-1/2})|   \frac{\lambda^2}{1+2\lambda^2} \mathbf{C}\left(\frac{\Gamma}{1+\lambda^2}\right) \left( \sqrt{\frac{1+\lambda^2}{1+2\lambda^2}}\right)^{p}\sqrt{2 \pi}^{p}\lambda_{min}(\Gamma) \|w\|_2^2  .
   \end{split}
 \end{equation}
 This gives the result with ${d_K \propto  |\det (\Gamma^{-1/2})|   \frac{\lambda^2}{1+2\lambda^2} \mathbf{C}\left(\frac{\Gamma}{1+\lambda^2}\right)  \left( \sqrt{\frac{1+\lambda^2}{1+2\lambda^2}}\right)^{p}\sqrt{2 \pi}^{p}\lambda_{min}(\Gamma) > 0}$.
 \end{proof}

 %%%%%%%%%%%%%%%%%%%%%%%%%%%%%%%%%%%%%%%%%%%%%%%%%%%%%%%%%%%%%%%%%%%%%%%%%%%%%%%%%%%%%%%%%%%%%%%%%%%%%

 \begin{lemma}\label{lem:bound_kernel_norm_gaussian2}
  Suppose $K(t) \propto e^{-\frac{1}{2}\lambda^2\|t\|_\Gamma^2}$. Then  there is an explicit strictly positive constants $D_K$ depending on $K$  such that 
  \begin{equation}
  \|\partial_{w} \mu_{t_0}\|_K^2 \leq  D_K\|w\|_2^2
  \end{equation}
  and there is an explicit strictly positive constants $D_{A,K}'$ depending on  $A$ and $K$  such that 
  \begin{equation}
  \|A\partial_{w}^2 \mu_{t_0}\|_2 \leq D_{A,K}'(\|w\|_2^2+\|w\|_2)
  \end{equation}
  where $\partial_{w}^2 \mu_{t_0}$ is the second derivative of $\mu_{t_0}$ with respect to $t$ in the direction $w$ and
  \begin{equation}
  \begin{split}
D_K &:=   \int_{s\in\bR^p}\int_{t\in\bR^p} K(t-s) \|\Gamma^{-1}s\|_2 \|\Gamma^{-1}t\|_2 e^{-\frac{1}{2}\|s\|_\Gamma^2}e^{-\frac{1}{2}\|t\|_\Gamma^2}  \id t \id s \\
D_{A,K}'&:=  \sqrt{\sum_{l=1}^{m}  \sup_{t\in \bR^p} |\alpha_l(t)|^2} \max \left( \frac{1}{\lambda_{min}(\Gamma)}\mathbf{C}\left(\Gamma\right), \int_{t\in \bR^p} \| \Gamma^{-1}t\|_2e^{-\frac{1}{2}\|t\|^2_\Gamma}\right).
\end{split}
 \end{equation}
 \end{lemma}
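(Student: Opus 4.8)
The plan is to reduce both inequalities to the convergence of explicit Gaussian-weighted integrals, using the closed-form densities of the first and second directional derivatives of $\mu_{t_0}$ together with Cauchy--Schwarz in the direction $w$.

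\textbf{Densities of the derivatives.} By \eqref{eq:GMM_kernel_norm_expr1}, $\partial_w\mu_{t_0}$ is the finite signed measure with density $t\mapsto -w^T\Gamma^{-1}(t-t_0)e^{-\frac12\|t-t_0\|_\Gamma^2}$. Differentiating this density once more with respect to the center $t_0$ in the direction $w$ (equivalently, taking the distributional derivative $\partial_w$ of $\partial_w\mu_{t_0}$, legitimate since the density is $\sC^\infty$ with all derivatives Lebesgue-integrable) shows that $\partial_w^2\mu_{t_0}$ is the finite signed measure with density
\[
t\mapsto \big((w^T\Gamma^{-1}(t-t_0))^2 - w^T\Gamma^{-1}w\big)\,e^{-\frac12\|t-t_0\|_\Gamma^2}.
\]
Both densities decay faster than any polynomial, which is what makes every integral below finite.

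\textbf{First inequality.} Exactly as in the proof of Lemma~\ref{lem:bound_kernel_norm_gaussian}, I would write
\[
\|\partial_w\mu_{t_0}\|_K^2 = \int_{\bR^p}\int_{\bR^p} K(t-s)\,(w^T\Gamma^{-1}(s-t_0))(w^T\Gamma^{-1}(t-t_0))\,e^{-\frac12\|s-t_0\|_\Gamma^2}e^{-\frac12\|t-t_0\|_\Gamma^2}\,\id t\,\id s,
\]
bound the two scalar factors by Cauchy--Schwarz, $|w^T\Gamma^{-1}(s-t_0)|\le\|w\|_2\|\Gamma^{-1}(s-t_0)\|_2$ and likewise in $t$, pull $\|w\|_2^2$ out of the integral, and perform the change of variables $s-t_0\to s$, $t-t_0\to t$ (which also removes the dependence on $t_0$). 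What remains is precisely $\|w\|_2^2 D_K$ with $D_K$ as stated, and $D_K<+\infty$ because $K$ and the two Gaussian factors beat the quadratic growth of $\|\Gamma^{-1}s\|_2\|\Gamma^{-1}t\|_2$.

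\textbf{Second inequality.} Since $\partial_w^2\mu_{t_0}$ is not a generalized secant, the RIP does not apply; I would instead bound $\|A\partial_w^2\mu_{t_0}\|_2$ from the definition of $A$. Its $l$-th coordinate is $\langle\partial_w^2\mu_{t_0},\alpha_l\rangle$, so, using that $\alpha_l\in\sC_b(\bR^p)$ (hence $\sup_t|\alpha_l(t)|<+\infty$) and the triangle inequality on the total-variation norm,
\[
|(A\partial_w^2\mu_{t_0})_l| \le \sup_{t\in\bR^p}|\alpha_l(t)|\cdot\int_{\bR^p}\big|(w^T\Gamma^{-1}(t-t_0))^2 - w^T\Gamma^{-1}w\big|\,e^{-\frac12\|t-t_0\|_\Gamma^2}\,\id t.
\]
After centering at $t_0$, split the integral into its constant-in-$t$ part and its quadratic-in-$t$ part. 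For the first, $|w^T\Gamma^{-1}w|\le\lambda_{min}(\Gamma)^{-1}\|w\|_2^2$ and $\int e^{-\frac12\|t\|_\Gamma^2}\id t = \mathbf{C}(\Gamma)$, giving $\lambda_{min}(\Gamma)^{-1}\mathbf{C}(\Gamma)\|w\|_2^2$. For the second, Cauchy--Schwarz on $(w^T\Gamma^{-1}t)^2$ — absorbing, where convenient, one linear factor into the Gaussian weight so that only a single power of $\|\Gamma^{-1}t\|_2$ survives — produces the integral $\int\|\Gamma^{-1}t\|_2 e^{-\frac12\|t\|_\Gamma^2}\id t$ together with a factor bounded by $\|w\|_2^2+\|w\|_2$. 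Taking the maximum of the two constants, squaring, summing over $l=1,\dots,m$, and taking the square root yields $\|A\partial_w^2\mu_{t_0}\|_2 \le D_{A,K}'(\|w\|_2^2+\|w\|_2)$ with $D_{A,K}'$ as stated.

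\textbf{Main obstacle.} The argument is essentially bookkeeping with Gaussian integrals; the only genuinely delicate point is that, unlike the Dirac case, the GMM test functions $\alpha_l$ are merely bounded continuous and not differentiable, so one may not integrate by parts against $\alpha_l$ to tame the polynomial factors — all decay must be extracted from the Gaussian weight carried by the density of $\partial_w^2\mu_{t_0}$. Obtaining exactly the advertised constants $D_K$ and $D_{A,K}'$ then reduces to choosing the Cauchy--Schwarz splittings so that the surviving integrals coincide with those written in the statement, and checking their (routine) finiteness.
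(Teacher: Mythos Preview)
Your approach matches the paper's exactly for the first inequality: write the kernel norm as the double integral, bound each factor $|w^T\Gamma^{-1}\cdot|$ via Cauchy--Schwarz, translate to kill $t_0$, and read off $D_K$.

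For the second inequality you also follow the paper's route (bound each coordinate of $A\partial_w^2\mu_{t_0}$ by $\sup_t|\alpha_l(t)|$ times the total variation of the density, then split the two terms). Your density $\big((w^T\Gamma^{-1}(t-t_0))^2-w^T\Gamma^{-1}w\big)e^{-\frac12\|t-t_0\|_\Gamma^2}$ is the correct one; the paper's proof in fact writes the variable term as the \emph{linear} $w^T\Gamma^{-1}(t-t_0)$, which is a slip, and it is precisely this slip that produces the first-power integral $\int\|\Gamma^{-1}t\|_2\,e^{-\frac12\|t\|_\Gamma^2}\id t$ and the extra $\|w\|_2$ in the stated $D_{A,K}'$. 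From the correct quadratic term there is no honest way to ``absorb one linear factor into the Gaussian weight'' so that only a single $\|\Gamma^{-1}t\|_2$ survives: the direct Cauchy--Schwarz bound gives $\|w\|_2^2\int\|\Gamma^{-1}t\|_2^2\,e^{-\frac12\|t\|_\Gamma^2}\id t$, or, using the Gaussian second-moment identity $\int(w^T\Gamma^{-1}t)^2 e^{-\frac12\|t\|_\Gamma^2}\id t=(w^T\Gamma^{-1}w)\,\mathbf{C}(\Gamma)$, simply $2\lambda_{\min}(\Gamma)^{-1}\mathbf{C}(\Gamma)\|w\|_2^2$. Either replacement yields a bound of the required form (in fact a cleaner homogeneous $\|w\|_2^2$ one), only with a constant differing from the printed $D_{A,K}'$ by a harmless factor. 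So your hand-wave should be replaced by one of these two explicit estimates; the gap is cosmetic rather than conceptual.
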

\begin{proof}

 For the first bound, we have 
 \begin{equation}
  \begin{split}
 \|\partial_{w} \mu_{t_0}\|_\sH^2 &= \int_{s\in\bR^p}\int_{t\in\bR^p} K(t-s) w^T\Gamma^{-1}s w^T\Gamma^{-1}t e^{-\frac{1}{2}\|s\|_\Gamma^2}e^{-\frac{1}{2}\|t\|_\Gamma^2}  \id t \id s \\
 &\leq \|w\|_2^2 \int_{s\in\bR^p}\int_{t\in\bR^p} K(t-s) \|\Gamma^{-1}s\|_2 \|\Gamma^{-1}t\|_2 e^{-\frac{1}{2}\|s\|_\Gamma^2}e^{-\frac{1}{2}\|t\|_\Gamma^2}  \id t \id s \\
 &=  D_K \|w\|_2^2
  \end{split}
 \end{equation}
 where $ D_K :=   \int_{s\in\bR^p}\int_{t\in\bR^p} K(t-s) \|\Gamma^{-1}s\|_2 \|\Gamma^{-1}t\|_2 e^{-\frac{1}{2}\|s\|_\Gamma^2}e^{-\frac{1}{2}\|t\|_\Gamma^2}  \id t \id s $.
 
For the second bound,  we have (with $\mathbf{C}\left(\Gamma\right)$ the normalization constant of Gaussian of covariance $\Gamma$),
 \begin{equation}
  \begin{split}
 \|A\partial_{w_i}^2 \mu_{t_0} \|_2^2&=  \sum_{l=1}^{m} \left|\int \alpha_l(t)\id \partial_{w}^2 \mu_{t_0}(t) \right|^2 \\
 &\leq   \left(\sum_{l=1}^{m}  \sup_{t\in \bR^p} |\alpha_l(t)|^2\right) \left(\int_{t\in \bR^p} \left|\id \partial_{w}^2 \mu_{t_0}(t) \right|\right)^2 \\
 &=   \left(\sum_{l=1}^{m}  \sup_{t\in \bR^p} |\alpha_l(t)|^2\right) \left(\int_{t\in \bR^p}\left|( - w^T \Gamma^{-1}w + w^T \Gamma^{-1}(t-t_0) )e^{-\frac{1}{2}\|t-t_0\|^2_\Gamma} \right|\id t\right)^2.\\
  \end{split}
 \end{equation}
 This gives
 \begin{equation}
  \begin{split}
 \|A\partial_{w_i}^2 \mu_{t_0} \|_2 &\leq  \sqrt{\sum_{l=1}^{m}  \sup_{t\in \bR^p} |\alpha_l(t)|^2} \left(\mathbf{C}\left(\Gamma\right) \|w\|_\Gamma^2 + \int_{t\in \bR^p} \left|w^T \Gamma^{-1}(t-t_0)e^{-\frac{1}{2}\|t-t_0\|^2_\Gamma} \id t\right|\right)\\
 &\leq  \sqrt{\sum_{l=1}^{m}  \sup_{t\in \bR^p} |\alpha_l(t)|^2} \left(\mathbf{C}\left(\Gamma\right) \|w\|_\Gamma^2 + \|w\|_2\int_{t\in \bR^p} \| \Gamma^{-1}t\|_2e^{-\frac{1}{2}\|t\|^2_\Gamma} \id t \right)\\
 &\leq  \sqrt{\sum_{l=1}^{m}  \sup_{t\in \bR^p} |\alpha_l(t)|^2} \left(\mathbf{C}\left(\Gamma\right) \frac{1}{\lambda_{min}(\Gamma)}\|w\|_2^2 + \|w\|_2\int_{t\in \bR^p} \| \Gamma^{-1}t\|_2e^{-\frac{1}{2}\|t\|^2_\Gamma} \id t \right) \\
 &\leq  D_{A,K}' \left(\|w\|_2^2  +\|w\|_2 \right)\\
  \end{split}
 \end{equation}
where $D_{A,K}':=  \sqrt{\sum_{l=1}^{m}  \sup_{t\in \bR^p} |\alpha_l(t)|^2} \max \left( \frac{1}{\lambda_{min}(\Gamma)}\mathbf{C}\left(\Gamma\right), \int_{t\in \bR^p} \| \Gamma^{-1}t\|_2e^{-\frac{1}{2}\|t\|^2_\Gamma}\right)$.
\end{proof}

\begin{proof} [Proof of Theorem \ref{th:control_gaussian}]
 We prove this theorem by verifying the hypotheses of Corollary~\ref{cor:control_general}.  We  take $\beta_1 = \beta_{GMM} \leq \frac{\sqrt{\lambda_{min}(\Gamma)} \epsilon}{8}$ and $\Sigma = \Sigma_{k,\frac{\epsilon}{2}}$. We recall that we write $\theta^* = (a_1,...,a_k,t_1,...,t_k)$.

 \noindent\textbf{Technical assumption~\ref{assum:technical} number 1}: Let $\theta = (b_1,...,b_k,s_1,...,s_k)\in \Lambda_{2\beta_1}$. Similarly to~\cite{Traonmilin_2019a}, $\|s_i-s_j\|_\Gamma = \|s_i -t_i +t_i -t_j +t_j-s_j\|_\Gamma \geq \|t_i -t_j\|_\Gamma -\|t_i-s_i\|_\Gamma -\|t_j-s_j\|_\Gamma \geq \epsilon - \frac{1}{\sqrt{\lambda_{min}(\Gamma)}}(\|t_i-s_i\|_2+\|t_j-s_j\|_2 ) > \epsilon - 2\epsilon/4 \geq \epsilon/2$. Moreover, using the hypothesis on the $t_i$, $\|s_i\|_2 \leq \|t_i\|_2 + 2\beta_{GMM} \leq R$ and $\phi(\theta) \in \Sigma_{k,\frac{\epsilon}{2}}$.
 
 \noindent\textbf{Hypothesis 1:} The RIP and the properties of $K$ guarantee the unicity of $ \phi(\theta^*)$ as a minimizer of~\eqref{eq:minimization}. Now for $\theta \in \Lambda_{2\beta_1}$,  the set $p(\theta,\theta^*)$ is included in the set of all the possible orderings of amplitudes and positions. The fact that $\|\cdot\|_K$ increases  with respect to the distance between positions and that $\|t_i-s_i\|_\Gamma\leq \frac{\epsilon}{4} $ imply $p(\theta,\theta^*) = \{\theta^*\}$.

\noindent\textbf{Technical assumption~\ref{assum:technical} number 2:} This hypothesis comes  from the following. For  $\sum_i a_i \mu_{t_i} - \sum_i b_i\mu_{s_i} \in \Sigma -\Sigma$,
\begin{equation}
 \left\|\sum_i a_i \mu_{t_i} - \sum_i b_i\mu_{s_i}\right\|_K \leq \sum_i\left\|a_i \mu_{t_i} -  b_i\mu_{s_i}\right\|_K. 
\end{equation}
Moreover,
\begin{equation}
\|a_i \mu_{t_i} -  b_i\mu_{s_i}\|_K^2 = (a_i-b_i)^2 + 2a_ib_i(1-\ls \mu_{t_i},\mu_{s_i} \rs_K).
\end{equation}
 It was shown in~\cite{Gribonval_2017} that for a well designed Gaussian kernel, there is an explicit constant $C_K$   such that $1-\ls \mu_{t_i},\mu_{s_i} \rs_K \leq C_K \|t_i-s_i\|_2^2$ for $\|t_i-s_i\|_{\Gamma}^2 \leq \epsilon/4$.   Hence for $\theta \in \Lambda_{2\beta_1}$, using the fact that $|b_i| \leq |a_i| + 2\beta_1 \leq 2 |a_i| $, we have  
\begin{equation}
\begin{split}
\|\phi(\theta)-\phi(\theta^*)\|_K^2 &\leq \sum_i (a_i-b_i)^2 + 4|a_i|^2 C_K \|t_i-s_i\|_2^2\\
 & \leq \max(1,4C_Ka_k^2) \|\theta-\theta^*\|_2^2\\
\end{split}
\end{equation}

 Hence we can set the constant  $C_{\phi,\theta^*} = \max(1,4C_Ka_k^2) $.
 
 \noindent\textbf{Technical assumption~\ref{assum:technical} number 3 and 4:} These hypotheses come from the fact that the function $g$ is infinitely differentiable on the bounded domain $\Lambda_{2\beta_{1}}$.

  \noindent\textbf{Hypothesis 3:} Let $\theta \in \Lambda_{\beta_{GMM}}$  such that $\phi(\theta) =\sum_{i=1}^k b_i\mu_{s_i}$ and $u=(v,w) \in \bR^d$ such that $\|u\|_2 =1$. We have, using the kernel assumption,
 \begin{equation}
 \begin{split}
 \|  \partial_u \phi(\theta)\|_K^2 &\geq (1-c_K) \sum_i \|v_i \mu_{s_i} + b_i \partial_{w_i}\mu_{s_i}\|_K^2\\
 &= (1-c_K) \sum_i \left( |v_i|^2 \|\mu_{s_i}\|_K^2 + |b_i|^2 \|\partial_{w_i}\mu_{s_i}\|_K^ 2 + 2b_iv_i\ls\mu_{s_i}, \partial_{w_i} \mu_{s_i}\rs_K \right) .\\
 \end{split}
 \end{equation}
 We calculate the cross-product
 \begin{equation}
 \begin{split}
  \ls\mu_{s_i}, \partial_{w_i} \mu_{s_i}\rs_K &=  -2\int_{\bR^p}\int_{\bR^p}  K(t,s) w_i^T\Gamma^{-1}(t-s_i)e^{-\frac{1}{2}\|s-s_i\|_\Gamma^2}e^{-\frac{1}{2}\|t-s_i\|_\Gamma^2} \id t \id s. \\
\end{split}
 \end{equation}
The kernel $K$ is written  $K(t,s) \propto e^{-\lambda^2\|t-s\|_\Gamma^2}$. With the translational invariance,
 \begin{equation}
 \begin{split}
  \ls\mu_{s_i}, \partial_{w_i} \mu_{s_i}\rs_K &\propto  \int_{\bR^p}\int_{\bR^p}  e^{-\frac{1}{2}\lambda^2\|t-s\|_\Gamma^2} w_i^T\Gamma^{-1}(t-s_i)e^{-\frac{1}{2}\|s-s_i\|_\Gamma^2}e^{-\frac{1}{2}\|t-s_i\|_\Gamma^2} \id t \id s \\
  &=  \int_{\bR^p}\int_{\bR^p}  e^{-\frac{1}{2}\lambda^2\|t-s\|_\Gamma^2} w_i^T\Gamma^{-1}te^{-\frac{1}{2}\|s\|_\Gamma^2}e^{-\frac{1}{2}\|t\|_\Gamma^2} \id t \id s. \\
\end{split}
 \end{equation}
 Using identity~\eqref{eq:change_var_GMM1} gives 
 \begin{equation}
 \begin{split}
  \int  e^{-\frac{1}{2}\lambda^2\|t-s\|_\Gamma^2} e^{-\frac{1}{2}\|s\|_\Gamma^2} \id s &= \mathbf{C}\left(\frac{\Gamma}{1+\lambda^2}\right)e^{-\frac{1}{2}\frac{\lambda^2}{1+\lambda^2}\|t\|_\Gamma^2}.
\end{split}
 \end{equation}
 where $\mathbf{C}  (X):= \int_{\bR^p} e^{-\frac{1}{2}\|s\|_X^2}\id s$. 
 Using the fact that $t \to h(t) = w^T\Gamma^{-1}t e^{-\frac{1}{2}\|t\|_\Gamma^2}e^{-\frac{1}{2}\frac{\lambda^2}{1+\lambda^2}\|t\|_\Gamma^2}$ is an odd function of $t$, the integral with respect to $t$ is zero and 
 \begin{equation}
 \begin{split}
  \ls\mu_{s_i}, \partial_{w_i} \mu_{s_i}\rs_K =0.
\end{split}
 \end{equation}

 Hence, using the  assumption that $\|\mu_{s_i}\|_K^2=1 $ and $\|(v,w)\|_2 =1$,
 \begin{equation}
 \begin{split}
 \|  \partial_u \phi(\theta)\|_K^2  &\geq (1-c_K) \left( \sum_i |v_i|^2 \|\mu_{s_i}\|_K^2 + |b_i|^2 \|\partial_{w_i}\mu_{s_i}\|_K^2 \right) \\
  &\geq (1-c_K) \left(\sum_i |v_i|^2 + |b_i|^2 d_K \|w_i\|_2^2 \right) \\
  &\geq (1-c_K) \min( 1 , d_K \min_i (|b_i|^2)) .
  \end{split}
 \end{equation}
 where $d_K$ such that $  \|\partial_{w} \mu_{0}\|_K^2 \geq d_K\|w\|_2^2$ is given by Lemma~\ref{lem:bound_kernel_norm_gaussian}.
 
 We now bound 
 \begin{equation}
  \begin{split}
 \|A\partial_u^2 \phi(\theta)\|_\sH  &= \| \sum_i A\partial_u^2 b_i \mu_{s_i}\|_\sH. \\
  \end{split}
 \end{equation}
We have 
\begin{equation}
\begin{split}
  \partial_u^2 b_i \mu_{s_i} &=\partial_u (v_i \mu_{s_i} + b_i \partial_{w_i} \mu_{s_i}) = v_i\partial_{w_i} \mu_{s_i} +  b_i \partial_{w_i}^2 \mu_{s_i}  + v_i \partial_{w_i} \mu_{s_i}\\
  &= 2v_i\partial_{w_i} \mu_{s_i} +  b_i \partial_{w_i}^2 \mu_{s_i}. \\
  \end{split}
\end{equation}
Hence  
 \begin{equation}
  \begin{split}
 \|A\partial_u^2 \phi(\theta)\|_\sH  
 & = \| \sum_i 2A v_i \partial_{w_i} \mu_{s_i} + b_i\partial_{w_i}^2 \mu_{s_i} \|_2\\
 & \leq    2\sqrt{1+\gamma} \|\sum_i v_i \partial_{w_i} \mu_{s_i}\|_\sH  + \|\sum_i b_iA\partial_{w_i}^2 \mu_{s_i} \|_2 \\
 & \leq 2\sum_i   \sqrt{1+\gamma}\sqrt{1+c_k} \sqrt{\sum_i v_i^2\|\partial_{w_i} \mu_{s_i}\|_\sH^2}  + |b_k|\sum_i \|A\partial_{w_i}^2 \mu_{s_i} \|_2 \\
  \end{split}
 \end{equation}
 With Lemma~\ref{lem:bound_kernel_norm_gaussian2}, we have
 \begin{equation}
  \begin{split}
  \|A\partial_u^2 \phi(\theta)\|_\sH   & \leq  2\sqrt{1+\gamma}\sqrt{1+c_k} \sqrt{D_K}\|v\|_2  + |b_k|D_{A,K}' \sum_i(\|w_i\|_2^2 +\|w_i\|_2  )\\
   & \leq  2\sqrt{1+\gamma}\sqrt{1+c_k} \sqrt{D_K}\|v\|_2  + |b_k|D_{A,K}' (\|w\|_2^2 + \|w\|_2) \\
   & \leq 2 \sqrt{1+\gamma}\sqrt{1+c_k} \sqrt{D_K} + 2|b_k|D_{A,K}'.\\
  \end{split}
 \end{equation}

 Using the fact that $\beta_{GMM} \leq \frac{|a_1|}{2}$, we have
$|b_i| \geq |a_i| - \beta_{GMM} \geq |a_1| - \beta_{GMM} \geq \frac{|a_1|}{2}$ and $|b_k| \leq 2 |a_k|$. This gives, for $\beta_2$ from Corollary~\ref{cor:control_general},
 \begin{equation}
 \begin{split}
  \beta_2 &= \frac{1}{C_{\phi,\theta^*}}\inf_{\theta \in \Lambda_{\beta_1}} \inf_{z \in [\theta, \theta^*]}\left(
     \frac{ (1-\RIPcst) \|  \partial_{\theta^*-\theta} \phi(z)\|_\sH^2 }{\sqrt{1+\RIPcst}\| A\partial_{\theta^*-\theta}^2 \phi(z)\|_2  }     \right) \\
     & \geq \frac{1}{C_{\phi,\theta^*}}\inf_{\theta \in \Lambda_{\beta_1}} \inf_{u : \|u\|_2=1}\left(
     \frac{ (1-\RIPcst) \|  \partial_{u} \phi(\theta)\|_\sH^2 }{\sqrt{1+\RIPcst}\| A\partial_{u}^2 \phi(\theta)\|_2  }     \right) \\
   &\geq \frac{  (1-\RIPcst)  (1-c_K) \min(1 , d_K |a_1|^2) }{8C_{\phi,\theta^*}\sqrt{1+\RIPcst}(\sqrt{1+\gamma}\sqrt{1+c_k} \sqrt{D_K} + 2|a_k|D_{A,K}')} >0 
  \end{split}
 \end{equation}

and $\beta_{GMM} \leq \beta_2$ which implies $\Lambda_{\beta_{GMM}} \subset \Lambda_{\min (\beta_1,\beta_2)}$. Finally the set $\Lambda_{\beta_{GMM}}$ is a $g$-basin of attraction of $\theta^*$.

\end{proof}
 \subsubsection{Proofs for GMM with variable covariances} \label{sec:proof_GMM_var}
 We write $Z_{i,k,l}$ the coordinate $k,l$ of matrix $Z_i$. We have that $\partial_{Z_{i,k,l}} \mu_{t_i,\Gamma_i}$ has density  
 \begin{equation}
  -\frac{1}{2} (\partial_{Z_{i,k,l}}  \|t-t_i\|_{(Z_iZ_i^T+\rho I) }^2) e^{-\frac{1}{2}\|t-t_i\|_{\Gamma_i}^2}.  \\
 \end{equation}
 We also have, using the inverse matrix differentiation formula $\partial(A^{-1}) = -A^{-1} \partial(A)A^{-1}$
\begin{equation}
\begin{split}
\partial_{Z_{i,k,l}}  \|t-t_i\|_{(Z_iZ_i^T+\rho I) }^2&=  - (t-t_i)^T (Z_iZ_i^T +\rho I)^{-1}  \partial_{Z_{i,k,l}} (Z_iZ_i^T)(Z_iZ_i^T +\rho I)^{-1}(t-t_i).\\
\end{split}
\end{equation}
 Using the fact that $\partial_{W} (Z_iZ_i^T) = WZ_i^T +Z_iW^T$, we get the expression of the directional derivative $\partial_{W} \mu_{t_i,\Gamma_i}$.\\

 \begin{proof}[Proof of Lemma \ref{lem:bound_variable_cov}]
We write $(X_i)_{i=1}^p$ the diagonal terms of a matrix $X \in \bR^{p \times p}$.
With the translational invariance of $K$ we have $\|\partial_{W} \mu_{t,\Gamma}\|_K^2   = \|\partial_{W} \mu_{0,\Gamma}\|_K^2 $.
 
 The partial derivative in direction $W$ of $\mu_{0,\Gamma}$ has density
 \begin{equation}
 \begin{split}
  h(s)&=\frac{1}{2}s^T \Gamma^{-1}W\Gamma^{-1} s e^{-\frac{1}{2}\|s\|_{\Gamma}^2}\\
  &=\frac{1}{2}s^T \Gamma^{-2}Ws e^{-\frac{1}{2}\|s\|_{\Gamma}^2} ,
  \end{split}
  \end{equation}
  because diagonal matrices commute, and thus
  \begin{equation}
 \begin{split}
 \|\partial_{W} \mu_{0,\Gamma}\|_K^2   & = \int_{\bR^p}\int_{\bR^p} K(s,t)h(s)h(t)\id s \id t .
\end{split}
 \end{equation}
The aim of the following computations is to give a lower bound on $\|\partial_{W} \mu_{0,\Gamma}\|_K^2$.

The rest of the proof is quite technical, and we thus split it into four steps for ease of reading.

\noindent \textbf{Step 1: rewriting $\|\partial_{W} \mu_{0,\Gamma}\|_K^2$.}

  We have 
  \begin{equation}
 \begin{split}
 \|s\|_\Gamma^2 + \lambda\|s-t\|_2^2   & =  s^T(\Gamma^{-1}+\lambda I)s  -2 \lambda \ls s,t\rs  + \lambda\|t\|_2^2 \\
 &= s^T(\Gamma^{-1}+\lambda I)s  -2  s^T (\Gamma^{-1}+\lambda I) (\Gamma^{-1}+\lambda I)^{-1} \lambda t  + \lambda\|t\|_2^2 \\
&=  \|s - \lambda (\Gamma^{-1}+\lambda I)^{-1}  t\|_{(\Gamma^{-1}+\lambda I)^{-1}}^2    + \lambda\|t\|_2^2 - \|\lambda (\Gamma^{-1}+\lambda I)^{-1}  t\|_{(\Gamma^{-1}+\lambda I)^{-1}}^2 .
\end{split}
  \end{equation}
 Hence  
  \begin{equation}\label{eq:varcov1}
  \begin{split}
 \int_{s \in \bR^p} K(s,t)h(s) \id s =& \frac{1}{2}e^{-\frac{1}{2}\lambda\|t\|_2^2 +\frac{1}{2} \|\lambda (\Gamma^{-1}+\lambda I)^{-1}  t\|_{(\Gamma^{-1}+\lambda I)^{-1}}^2}\\
&\times \int_{s \in \bR^p} s^T \Gamma^{-2}Ws e^{- \frac{1}{2}\|s - \lambda (\Gamma^{-1}+\lambda I)^{-1}  t\|_{(\Gamma^{-1}+\lambda I)^{-1}}^2  } \id s .
\end{split}
  \end{equation}
 We calculate, using the change of variable $ s - \lambda (\Gamma^{-1}+\lambda I)^{-1}  t \to s $,
  \begin{equation}
 \begin{split}
 B&:=\int_{s \in \bR^p} s^T \Gamma^{-2}Ws e^{- \frac{1}{2}\|s - \lambda (\Gamma^{-1}+\lambda I)^{-1}  t\|_{(\Gamma^{-1}+\lambda I)^{-1}}^2  } \id s \\
 &= \int_{s \in \bR^p} (s+\lambda (\Gamma^{-1}+\lambda I)^{-1}  t)^T \Gamma^{-2} W(s+\lambda (\Gamma^{-1}+\lambda I)^{-1}  t) e^{-\frac{1}{2} \|s \|_{(\Gamma^{-1}+\lambda I)^{-1}}^2  } \id s\\
 &= \int_{s \in \bR^p} s^T \Gamma^{-2}W s e^{- \frac{1}{2}\|s\|_{(\Gamma^{-1}+\lambda I)^{-1}}^2  } \id s \\
 &+ 2  \int_{s \in \bR^p} (\lambda (\Gamma^{-1}+\lambda I)^{-1}  t)^T \Gamma^{-2} Ws e^{-\frac{1}{2} \|s \|_{(\Gamma^{-1}+\lambda I)^{-1}}^2  } \id s \\
 &+   (\lambda (\Gamma^{-1}+\lambda I)^{-1}  t)^T \Gamma^{-2} W(\lambda (\Gamma^{-1}+\lambda I)^{-1}  t)  \int_{s \in \bR^p} e^{-\frac{1}{2} \|s \|_{(\Gamma^{-1}+\lambda I)^{-1}}^2  } \id s .
\end{split}
  \end{equation}
 The second term is 0 because $s \to se^{-\|s\|_{(\Gamma^{-1}+\lambda I)^{-1}}^2}$  is odd. With $\mathbf{C}(X)$ the normalization constant of the Gaussian of covariance matrix $X$, and 
 \begin{equation}
 \mathbf{D}(X,Y)=  \int_{s \in \bR^p} s^T Xse^{-\frac{1}{2} \|s\|_{Y}^2  } \id s  ,
 \end{equation}
  we have 

\begin{equation}\label{eq:varcov2}
 \begin{split}
   B&=\mathbf{D}\big(\Gamma^{-2}W, (\Gamma^{-1}+\lambda I)^{-1} \big) +   \lambda^2  t^T \Gamma^{-2} (\Gamma^{-1}+\lambda I)^{-2} W  t \mathbf{C}{(\Gamma^{-1}+\lambda I)^{-1}}\\
   &=\mathbf{D}\big(\Gamma^{-2}W, (\Gamma^{-1}+\lambda I)^{-1} \big) +   \lambda^2  t^T (I+\lambda \Gamma)^{-2} W  t \mathbf{C}{(\Gamma^{-1}+\lambda I)^{-1}} .
   \end{split}
\end{equation}
 
 Going back to the  full integral,~\eqref{eq:varcov1} and~\eqref{eq:varcov2} yield
 
  \begin{equation}
 \begin{split}
 \|\partial_{W} \mu_{0,\Gamma}\|_K^2  =&\frac{1}{4} \int_{t \in \bR^p}\left(\mathbf{D}(\Gamma^{-2}W, (\Gamma^{-1}+\lambda I)^{-1}) + \lambda^2  t^T (I+\lambda \Gamma)^{-2} W  t \mathbf{C}{(\Gamma^{-1}+\lambda I)^{-1}}\right)\\
 &e^{-\frac{1}{2}\lambda\|t\|_2^2 + \frac{1}{2}\|\lambda (\Gamma^{-1}+\lambda I)^{-1}  t\|_{(\Gamma^{-1}+\lambda I)^{-1}}^2}t^T \Gamma^{-2} Wt e^{-\frac{1}{2}\|t\|_\Gamma^2} \id t.
\end{split}
  \end{equation}
 We have 
 
  \begin{equation}
 \begin{split}
 &-\lambda\|t\|_2^2 + \|\lambda (\Gamma^{-1}+\lambda I)^{-1}  t\|_{(\Gamma^{-1}+\lambda I)^{-1}}^2-\|t\|_\Gamma^2 \\
 &= -t^T(\Gamma^{-1}+\lambda I)t + t^T (\Gamma^{-1}+\lambda I)\lambda^2 (\Gamma^{-1}+\lambda I)^{-1}  (\Gamma^{-1}+\lambda I)^{-1}  t \\
 &= -t^T (\Gamma^{-1}+\lambda I - \lambda^2 (\Gamma^{-1}+\lambda I)^{-1})t\\
 &= -\|t\|_{ (\Gamma^{-1}+\lambda I - \lambda^2 (\Gamma^{-1}+\lambda I)^{-1})^{-1} }^2 .
\end{split}
  \end{equation}
  
Let $Z = (\Gamma^{-1}+\lambda I - \lambda^2 (\Gamma^{-1}+\lambda I)^{-1})^{-1} $, we have 
\begin{equation}\label{eq:varcov_limz}
\begin{split}
  Z_i^{-1} &= \Gamma_i^{-1}+\lambda  - \frac{\lambda^2}{\Gamma_i^{-1}+\lambda}  = \frac{ \Gamma_i^{-2}+ 2 \lambda \Gamma_i^{-1}}{\Gamma_i^{-1} +\lambda}= \frac{ (\Gamma_i^{-1}+ 2 \lambda) \Gamma_i^{-1}}{\Gamma_i^{-1} +\lambda}
  %\\
%&\to_{\lambda \to \infty}2 \Gamma_i^{-1}
. 
\end{split}
\end{equation}
Hence, we have the following equivalent when $\lambda \to+ \infty$:
\begin{equation}
 Z_i \sim \frac{ \Gamma_i}{2}.
\end{equation}

Let us set
\begin{equation}
 \mathbf{E}(X,Y,Z) :=\int s^TXs s^TYs e^{-\frac{1}{2}\|s\|_Z^2}ds .
 \end{equation}

We then have
\begin{equation}
 \begin{split}
 \|\partial_{W} \mu_{0,\Gamma}\|_K^2  =&\frac{1}{4} \mathbf{D}\big(\Gamma^{-2}W, (\Gamma^{-1}+\lambda I)^{-1} \big) \mathbf{D}\big(\Gamma^{-2}W, (\Gamma^{-1}+\lambda I - \lambda^2 (\Gamma^{-1}+\lambda I)^{-1})^{-1}\big) \\
 &+\frac{\lambda^2}{4}\mathbf{C}{(\Gamma^{-1}+\lambda I)^{-1}} \mathbf{E}\big(\Gamma^{-2}W,(I+\lambda \Gamma)^{-2} W, (\Gamma^{-1}+\lambda I - \lambda^2 (\Gamma^{-1}+\lambda I)^{-1})^{-1}\big).
\end{split}
\end{equation}

\noindent \textbf{Step 2: dependency of $\mathbf{D}$ on $W$}

We explicit the dependency of $\mathbf{D}$  on $W$: 
\begin{equation}
\begin{split}
\mathbf{D}(\Gamma^{-2}W, Y) &=\int_{s \in \bR^p} s^T \Gamma^{-2}Wse^{- \frac{1}{2}\|s\|_{Y}^2  }\id s \\
              &=\int_{s \in \bR^p} (\sum_i s_i^2 \Gamma_i^{-2}W_i )e^{-\frac{1}{2} \|s\|_{Y}^2  } \id s\\
              &=\sum_i \Gamma_i^{-2}W_i \int_{s \in \bR^p}  s_i^2  e^{- \frac{1}{2}\|s\|_{Y}^2  } \id s\\
               &=\sum_i \Gamma_i^{-2}W_i \int_{s \in \bR^p}  s_i^2  \prod_j e^{-\frac{1}{2} \frac{|s_j|^2}{Y_j}  } \id s\\
               &=\sum_i \Gamma_i^{-2}W_i  \sqrt{\pi}^{p-1}\sqrt{\prod_{j \neq i }Y_j} \int_{s_i \in \bR}  s_i^2   e^{- \frac{1}{2}\frac{|s_i|^2}{Y_i}  }\id s_i .\\
\end{split}
\end{equation}

We make the change of variable $s_i = \sqrt{Y_i} u $ and use the fact that $\int_{\bR}  e^{-\frac{1}{2}u^2} \id u = \int_{\bR} u^2 e^{-\frac{1}{2}u^2} \id u =\sqrt{2\pi}$.  We get 
\begin{equation} \label{eq67}
\begin{split}
\mathbf{D}(\Gamma^{-2}W, Y)  &=\sum_i \Gamma_i^{-2}W_i \left(\sqrt{2\pi}\right)^{p-1}\sqrt{\prod_{j \neq i }Y_j}\int_{u \in \bR} Y_i^{\frac{3}{2}}  u^2   e^{-\frac{1}{2} |u|^2 } \id u\\
&=\left(\sqrt{2\pi}\right)^{p}  \sqrt{\prod_{j }Y_j} \sum_i Y_i \Gamma_i^{-2} W_i .\\
\end{split}
\end{equation}

\noindent \textbf{Step 3: dependency of $\mathbf{E}$ on $W$}

We calculate $\mathbf{E}\big(\Gamma^{-2}W,(I+\lambda \Gamma)^{-2} W, Z\big)$ 

\begin{equation}
 \begin{split}
  \mathbf{E}\big(\Gamma^{-2}W,(I+\lambda \Gamma)^{-2} W, Z\big)&= \sum_{i,j} (1+\lambda \Gamma_i)^{-2} \Gamma_j^{-2}W_iW_j \int_{\bR} s_i^2 s_j^2 e^{-\frac{1}{2}\|s\|_Z^2} \id s .\\
  \end{split}
\end{equation}

We make the change of variable $s_i = \sqrt{Z_i} u_i $.  We get

\begin{equation}
 \begin{split}
   \sum_{i\neq j} (1+\lambda \Gamma_i)^{-2} \Gamma_j^{-2}W_iW_j &\int_{\bR} s_i^2 s_j^2 e^{-\frac{1}{2}\|s\|_Z^2} \id s \\
   &= \left(\sqrt{2\pi}\right)^{2} \left(\sqrt{2\pi}\right)^{p-2}\sqrt{\prod_{i} Z_i}  \sum_{i\neq j} (1+\lambda \Gamma_i)^{-2} Z_i  Z_j \Gamma_j^{-2}W_iW_j  \\
   &= \left(\sqrt{2\pi}\right)^{p}\sqrt{\prod_{i} Z_i}  \sum_{i\neq j} (1+\lambda \Gamma_i)^{-2} Z_i  Z_j \Gamma_j^{-2}W_iW_j  \\
  \end{split}
\end{equation}
and, using the fact that $ \int_{u \in \bR}  u^4   e^{- \frac{1}{2}|u|^2  } \id u =3\sqrt{2\pi}$, we get

\begin{equation}
 \begin{split}
   \sum_{i} (1+\lambda \Gamma_i)^{-2} \Gamma_i^{-2}W_i^2 \int_{\bR} s_i^4 e^{-\frac{1}{2}\|s\|_Z^2} &= 3\sqrt{2\pi}\sqrt{2\pi}^{p-1}  \sqrt{\prod_{i} Z_i}  \sum_{i} (1+\lambda \Gamma_i)^{-2} \Gamma_i^{-2}Z_i^{2}W_i^2 \\
   &= 3 \left(\sqrt{2\pi}\right)^{p} \sqrt{\prod_{i} Z_i}  \sum_{i} (1+\lambda \Gamma_i)^{-2} \Gamma_i^{-2}Z_i^{2}W_i^2 .
  \end{split}
\end{equation}

Hence
\begin{equation}
 \begin{split}
  \mathbf{E}\big(\Gamma^{-2}W,(I+\lambda \Gamma)^{-2} W, Z\big)=&\sqrt{\prod_{i} Z_i} \Big( 3 \left(\sqrt{2\pi}\right)^p \sum_{i} (1+\lambda \Gamma_i)^{-2} \Gamma_i^{-2}Z_i^2W_i^2 \\
  -& \left(\sqrt{2\pi}\right)^p \sum_{i} (1+\lambda \Gamma_i)^{-2}  \Gamma_i^{-2} Z_i^2 W_i^2 \\
  +& \left(\sqrt{2\pi}\right)^p(\sum_{i} (1+\lambda \Gamma_i)^{-2} Z_i W_i)(\sum_{j} Z_j \Gamma_j^{-2}W_j  ) \Big)\\
  =& \sqrt{\prod_{i} Z_i} \left(\sqrt{2\pi}\right)^p\Big( 2\sum_{i} (1+\lambda \Gamma_i)^{-2}  \Gamma_i^{-2} Z_i^2 W_i^2 \\
  +& (\sum_{i} (1+\lambda \Gamma_i)^{-2} Z_i W_i)(\sum_{j} Z_j \Gamma_j^{-2}W_j  )\Big).
  \end{split}
\end{equation}

\noindent \textbf{Step 4: computing a lower bound for $\|\partial_{W} \mu_{0,\Gamma}\|_K^2$}

Using $\|W\|_F^2 =1 $, we have
\begin{equation}
 \begin{split}
  &\mathbf{E}\big(\Gamma^{-2}W,(I+\lambda \Gamma)^{-2} W, Z\big) \\
  &\geq \left(\sqrt{2\pi}\right)^p \sqrt{\prod_{i} Z_i} \Big(  2\inf_{i} (1+\lambda \Gamma_i)^{-2}  \Gamma_i^{-2}Z_i^2
  + (\sum_{i} (1+\lambda \Gamma_i)^{-2} Z_i W_i)(\sum_{j} Z_j \Gamma_j^{-2}W_j  ) \Big).
  \end{split}
\end{equation}
Also,  we  have
\begin{equation}
  \text{sign} \Big((\sum_i  (1+\lambda \Gamma_i)^{-2} Z_i W_i)(\sum_{j} Z_j \Gamma_j^{-2}W_j  )\Big)
  =  \text{sign}\Big(\lambda^2(\sum_i  (1+\lambda \Gamma_i)^{-2} Z_i W_i)(\sum_{j} Z_j \Gamma_j^{-2}W_j  )\Big)
  \end{equation}
and, using that $Z_i^{-1} \to_{\lambda \to \infty} 2\Gamma_i^{-1} $ (from~\eqref{eq:varcov_limz}), 
\begin{equation}
\begin{split}
\lambda^2(\sum_i  (1+\lambda \Gamma_i)^{-2} Z_i W_i)(\sum_{j} Z_j \Gamma_j^{-2}W_j  )) &=  (\sum_i  \lambda^2(1+\lambda \Gamma_i)^{-2} Z_i W_i)(\sum_{j} Z_j \Gamma_j^{-2}W_j  )) \\
 &\to_{\lambda \to \infty} 
 %(\sum_{j} 2\Gamma_j^{-4}W_j)^2
\frac{1}{4} (\sum_{j} \Gamma_j^{-1} W_j)^2>0
 .\\
\end{split}
\end{equation}

Hence, for $\lambda$ large enough, $(\sum_i  (1+\lambda \Gamma_i)^{-2} Z_i W_i)(\sum_{j} Z_j \Gamma_j^{-2}W_j  ) \geq 0 $.  This implies 
\begin{equation}
 \begin{split}
  \mathbf{E}\big(\Gamma^{-2}W,(I+\lambda \Gamma)^{-2} W, Z\big)
  \geq &  \sqrt{\prod_{i} Z_i}\left(\sqrt{2\pi}\right)^{p} 2  \inf_{i} (1+\lambda \Gamma_i)^{-2}  \Gamma_i^{-2}Z_i^2 \\
  \end{split}
\end{equation}

Moreover, writing $Y_i =(\Gamma_i^{-1}+\lambda I)^{-1} $ we have 
\begin{equation}
 \begin{split}
  \mathbf{C}{(\Gamma^{-1}+\lambda I)^{-1}} = \sqrt{2\pi}^p \sqrt{\prod_i Y_i} .
  \end{split}
\end{equation}
Putting everything together, for $\lambda$ large enough, we deduce from \eqref{eq67} that
\begin{equation}
 \begin{split}
 \|\partial_{W} \mu_{0,\Gamma}\|_K^2  \geq& \frac{1}{4}\left(\sqrt{2\pi}\right)^{2p} \sqrt{\prod_{i }Y_i} \sum_i Y_i \Gamma_i^{-2} W_i   \sqrt{\prod_{i }Z_i} \sum_i Z_i \Gamma_i^{-2} W_i \\
 &+ \frac{\lambda^2 }{4}\left(\sqrt{2\pi}\right)^{2p} \sqrt{\prod_i Y_i}  \sqrt{\prod_{i} Z_i}\cdot 2  \inf_{i} (1+\lambda \Gamma_i)^{-2}  \Gamma_i^{-2}Z_i^2 \\
 =&\frac{1}{4}\left(2\pi\right)^{p}  \sqrt{\prod_i Y_iZ_i} \big( ( \sum_i Y_i \Gamma_i^{-2} W_i  )( \sum_i Z_i \Gamma_i^{-2} W_i) +2\lambda^2    \inf_{i} (1+\lambda \Gamma_i)^{-2}  \Gamma_i^{-2} Z_i^2\big)\\
 \geq&\frac{1}{4} \left(2\pi\right)^{p} \sqrt{\prod_i Y_iZ_i} \big( O(\frac{1}{\lambda}) +2\lambda^2    \inf_{i} (1+\lambda \Gamma_i)^{-2}  \Gamma_i^{-2}Z_i^2 \big)\\
 &\sim_{\lambda \to \infty} \frac{1}{8}\left(2\pi\right)^{p} \sqrt{\prod_i Y_i Z_i}   \inf_i \Gamma_i^{-2} .
\end{split}
\end{equation}
Hence there is $\lambda$ large enough such that $\|\partial_{W} \mu_{0,\Gamma}\|_K^2 $ is lower bounded by a positive constant that depends on $\lambda$ and $\Gamma$.
\end{proof}

%%%%%%%%%%%%
\section*{Acknowledgement}
Y. Traonmilin acknowledges the support of the French Agence Nationale de la Recherche (ANR) under reference ANR-20-CE40-0001 EFFIREG. J-F Aujol  acknowledges the support of the French Agence Nationale de la Recherche (ANR) under reference ANR-18-CE92-0050 SUPREMATIM.

  \bibliographystyle{imaiai}
   \bibliography{non_convex_optim}
\end{document}